\documentclass[journal, onecolumn]{IEEEtran}
\usepackage{caption}
\usepackage{url}
\usepackage{ifthen}
\usepackage{cite}
\usepackage[cmex10]{amsmath} 
\usepackage{bbold}
\usepackage{bookmark}
\usepackage{tikz}
\usetikzlibrary{shapes.geometric, arrows, positioning}

\interdisplaylinepenalty=2500 

\usepackage[capitalise]{cleveref}
\usepackage{balance}
\usepackage{amsmath}
\usepackage{amsthm}
\usepackage{amssymb}
\usepackage{multirow}
\usepackage{amsfonts}
\usepackage{graphicx}
\usepackage{algorithm}
\usepackage{url}
\usepackage{lipsum}
\usepackage{booktabs,makecell, multirow}
\usepackage{tikz}
\usetikzlibrary{matrix}
\usepackage{rotating}
\usepackage{mathtools}
\usepackage{tkz-tab}

\usepackage{latexsym}
\usepackage{color}
\usepackage{graphicx}
\usepackage{threeparttable}
\usepackage{float}

\usepackage{algpseudocode}

\usepackage{url}

\usepackage{chngpage}

\usepackage{tabularx}%

\usepackage{booktabs}
\usepackage{pbox}
\usepackage{orcidlink}

\usepackage{caption,subcaption}

\usepackage{mathtools}

\usepackage{titlesec}
\usepackage{hyperref}

\errorcontextlines\maxdimen

\newcounter{algsubstate}
\renewcommand{\thealgsubstate}{\alph{algsubstate}}

\DeclareGraphicsExtensions{.tif,.pdf,.jpeg,.png,.jpg,.bmp,.svg,.eps,.EMF}

\newtheorem{theorem}{Theorem}
\newtheorem{definition}{Definition}
\newtheorem{lemma}{Lemma}
\newtheorem{prop}{Proposition}
\newtheorem{corollary}{Corollary}
\newtheorem{remark}{Remark}

\newtheorem{claim}{Claim}

\newcommand{\cX}{\mathcal{X}}
\newcommand{\cY}{\mathcal{Y}}
\newcommand{\cM}{\mathcal{M}}

\newcommand{\cR}{\mathcal{R}}
\newcommand{\cS}{\mathcal{S}}
\newcommand{\cT}{\mathcal{T}}

\newcommand{\cH}{\mathcal{H}}

\newcommand{\bX}{\mathbf{x}}
\newcommand{\bY}{\mathbf{y}}

\newcommand{\bU}{\mathbf{u}}
\newcommand{\bV}{\mathbf{v}}
\newcommand{\bG}{\mathbf{g}}

\newcommand{\bZ}{\mathbf{z}}

\newcommand{\bPsi}{\mathbf{\psi}}

\newcommand{\bbE}{\mathbb{E}}

\newcommand{\poly}{\mathsf{poly}}
\newcommand{\dec}{\mathrm{dec}}

\newcommand{\supp}{\mathsf{supp}}

\newcommand{\simm}{\mathsf{sim}}

\newcommand{\mM}{\mathbf{M}}
\newcommand{\mT}{\mathbf{T}}
\newcommand{\mG}{\mathbf{G}}
\newcommand{\mA}{\mathbf{A}}
\newcommand{\mY}{\mathbf{Y}}
\newcommand{\mX}{\mathbf{X}}

\newcommand{\test}{\mathsf{test}}

\newcommand\norm[1]{\left\lVert#1\right\rVert}

\hypersetup{
  colorlinks=false, 
  pdfborder={0 0 1} 
}

\newcommand{\localcref}[1]{%
  \hyperref[#1]{%
    \llap{\hbox to 10mm{\hfill}}%
    \cref{#1}%
    \rlap{\hbox to -2mm{\hfill}}%
  }%
}

\hyphenation{op-tical net-works semi-conduc-tor}

\begin{document}

\title{Matrix Completion in Group Testing: \\Bounds and Simulations}

\author{%
  \IEEEauthorblockN{Trung-Khang~Tran\orcidlink{0009-0007-4753-0948} and Thach V.~Bui\orcidlink{0000-0003-1368-8724}} \\
  \IEEEauthorblockA{Faculty of Information Technology, University of Science, Ho Chi Minh city, Vietnam \\
  Vietnam National University, Ho Chi Minh City, Vietnam \\
  Email: ttkhang2407@apcs.fitus.edu.vn, bvthach@fit.hcmus.edu.vn}
}


\maketitle

\thispagestyle{plain}
\pagestyle{plain}


\IEEEpeerreviewmaketitle

\begin{abstract}

The goal of group testing is to identify a small number of defective items within a large population. In the non-adaptive setting, tests are designed in advance and represented by a measurement matrix $\mM$, where rows correspond to tests and columns to items. A test is positive if it includes at least one defective item. Traditionally, $\mM$ remains fixed during both testing and recovery. In this work, we address the case where some entries of $\mM$ are missing, yielding a missing measurement matrix $\mG$. Our aim is to reconstruct $\mM$ from $\mG$ using available samples and their outcome vectors.

The above problem can be considered as a problem intersected between Boolean matrix factorization and matrix completion, called the matrix completion in group testing (MCGT) problem, as follows. Given positive integers $t,s,n$, let $\mY:=(y_{ij}) \in \{0, 1\}^{t \times s}$, $\mM:=(m_{ij}) \in \{0,1\}^{t \times n}$, $\mX:=(x_{ij}) \in \{0,1\}^{n \times s}$, and matrix $\mG \in \{0,1 \}^{t \times n}$ be a matrix generated from matrix $\mM$ by erasing some entries in $\mM$. Suppose $\mY:=\mM \odot \mX$, where an entry $y_{ij}:=\bigvee_{k=1}^n (m_{ik}\wedge x_{kj})$, and $\wedge$ and $\vee$ are AND and OR operators. Unlike the problem in group testing whose objective is to find $\mX$ when given $\mM$ and $\mY$, our objective is to recover $\mM$ given $\mY,\mX$, and $\mG$.

We first prove that the MCGT problem is NP-complete. Next, we show that certain rows with missing entries aid recovery while others do not. For Bernoulli measurement matrices, we establish that larger $s$ increases the higher the probability that $\mM$ can be recovered. We then instantiate our bounds for specific decoding algorithms and validate them through simulations, demonstrating superiority over standard matrix completion and Boolean matrix factorization methods.
\end{abstract}

\section{Introduction}
\label{sec:intro}

Group testing is a combinatorial optimization problem whose objective is to identify a small number of defective items in a large population of items efficiently~\cite{dorfman1943detection}. Defective items and non-defective (negative) items are defined by context. For example, in the Covid-19 scenario, defective (respectively, non-defective) items are people who are positive (respectively, negative) for Coronavirus. In standard group testing (GT), the outcome of the test on a subset of items is positive if the subset contains at least one defective item and negative otherwise. In this paper, we study the case in which some entries in the measurement matrix are missing and sets of input items and their corresponding test outcomes are observed (sampled). Our objective is to fully recover the measurement matrix by using these information.

Formally, we consider the matrix completion in group testing (MCGT) problem as follows. Let $\mA(i, :)$ and $\mA(:, j)$ be the $i$th row and $j$th column of matrix $\mA$, respectively. Given positive integers $t, s, n$, let $\mY = (y_{ij}) \in \{0, 1\}^{t \times s}$, $\mM := (m_{ij}) \in \{0, 1 \}^{t \times n}$, and $\mX := (x_{ij}) \in \{0, 1\}^{n \times s}$ such that
\begin{equation}
    \mY := \mM \odot \mX,
    \label{eqn:1st}
\end{equation}
where $y_{ij} := \mM(i, :) \odot \mX(:, j) := \bigvee_{k=1}^n (m_{ik} \wedge x_{kj})$ in which $\wedge$ and $\vee$ are AND and OR operators.

Suppose matrix $\mG := (g_{ij}) \in \{0, 1, \blacksquare \}^{t \times n}$ be a matrix generated from matrix $\mM$ by erasing some entries in $\mM$, where $\blacksquare$ is an erasure that cannot be determined to be 0 or 1. Specifically, when $\bar{\Psi} \subseteq [t] \times [n]$ is the set of missing entries in $\mG$, i.e., $g_{ij} = m_{ij}$ if $(i, j) \in [t] \times [n] \setminus \bar{\Psi}$ and $g_{ij} = \blacksquare$ if $(i, j) \in \bar{\Psi}$.

In order to facilitate understanding of the MCGT problem, we assume entries in $\mG$ and $\mM$ are generated from the Bernoulli distribution. In particular, for any $i \in [t]$ and $j \in [n]$, $\Pr(m_{ij} = 1) = p$, $\Pr(m_{ij} = 0) = 1 - p$, and $\Pr( (i, j) \in \overline{\Psi}) = q$, where $0 < p, q < 1$. Then each entry $g_{ij}$ in $\mG$ is generated as follows:
\begin{equation}
\Pr(g_{ij} = \blacksquare) = q; \Pr(g_{ij} = 1) = p(1-q); \Pr(g_{ij} = 0) =(1-p)(1-q).
\label{ran_prob}
\end{equation}

We also assume that the number of ones in each column in $\mX$ is exactly $d$. Our objective is to estimate the possibility of recovering $\mM$ given the missing matrix $\mG$, the input matrix $\mX$, and the outcome matrix $\mY$.

\subsection{Motivation}
\label{sec:intro:motive}

Although the BMF and matrix completion problems are applied in various applications, the MCGT problem might not well studied. Here we present two notable ones, which are cross-domain recommendation and connectome in neuroscience.

\subsubsection{Cross-domain recommendation}
\label{sec:intro:cross}

Let us consider a problem in cross-domain recommendation (transfer learning) in which user behavior or item features from one domain (e.g., movies) can be utilized to improve recommendations in another domain (e.g., books, music, or games). Its aim is to alleviate data sparsity or cold-start problems in a target domain by transferring knowledge from a source domain, where user preferences, item features are more abundant or reliable. This information may overlap partially or even be disjoint across domains. Consider the following problem which is illustrated in Fig.~\ref{fig:CrossDomain}. Suppose that there are $t$ users and $s$ items in the source domain. A user likes or dislikes an item. There is a set of $n$ features (or latent factors) that embodies for each user and item. A feature is shared (respectively, not shared) by the $i$th user and the $j$th item then the $i$th user likes (respectively, dislikes) the $j$th item. Mathematically, in the source domain, the rating user-item matrix $\mY^0 \in \{0, 1\}^{t \times s^0}$ can be decomposed by the Boolean product into two smaller matrices: a user-feature matrix $\mM^0 \in \{0, 1\}^{t \times n^0}$ and a feature-item matrix $\mX^0 \in \{0, 1\}^{n^0 \times s^0}$. In other words, we can write
\[
\mY^0 = \mM^0 \odot \mX^0,
\]
where $\mY^0(i, j) := \bigvee_{k=1}^{n^0} (\mM^0(i, k) \wedge \mX^0(k, j))$. In particular, $\mY^0(i, j) = 1$, i.e., user $i$ likes item $j$ if user $i$ and item $j$ share at least one feature, and $\mY^0(i, j) = 0$, i.e., user $i$ dislikes item $j$ otherwise.

\begin{figure}
    \centering
    \includegraphics[width=0.5\linewidth]{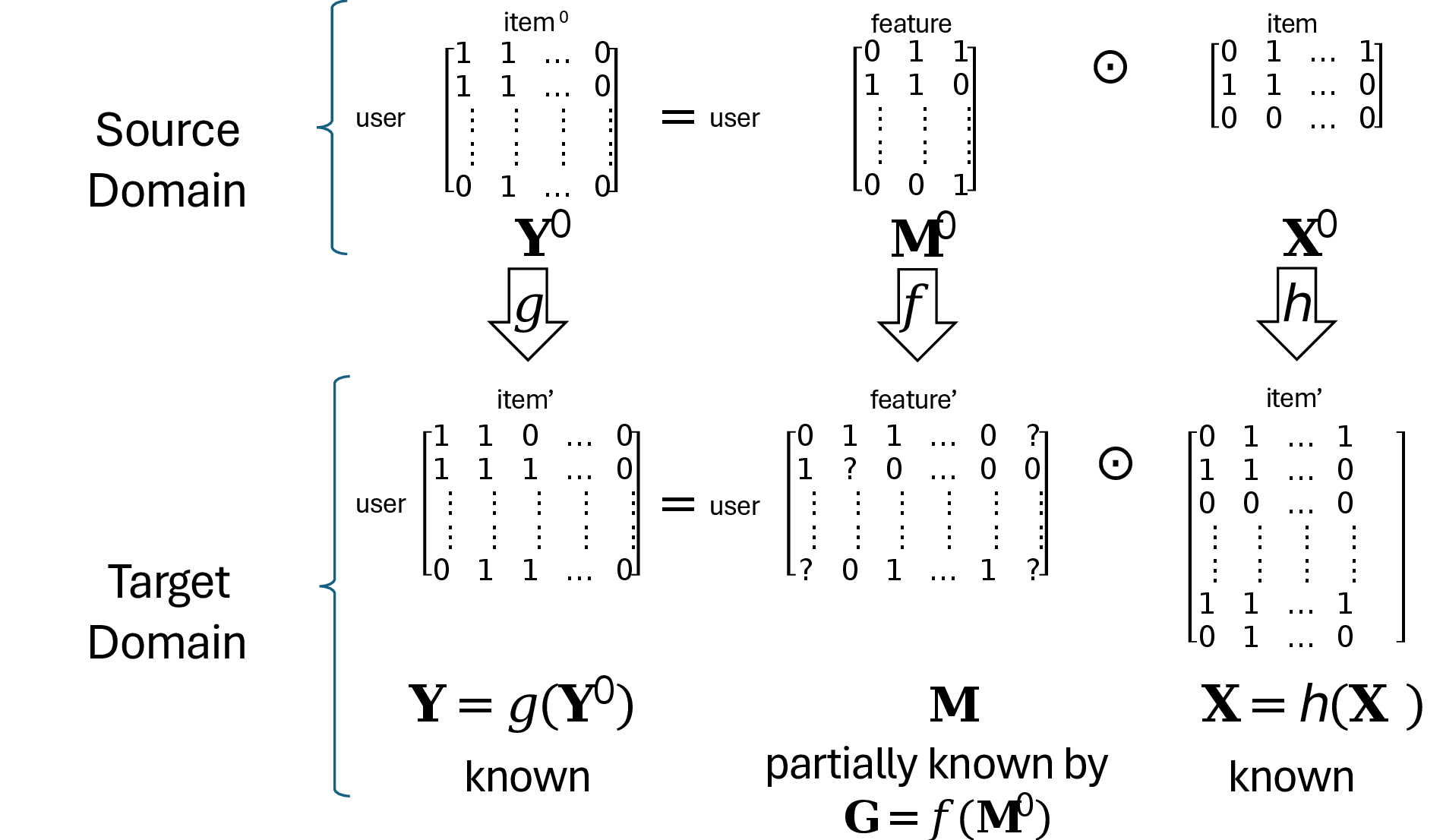}
    \caption{An example of cross-domain recommendation.}
    \label{fig:CrossDomain}
\end{figure}

Consider the target domain in which new items and features are introduced while the users remains unchanged. By using some transferring functions, one gets a new rating user-item matrix $\mY = g(\mY^0) \in \{0, 1\}^{t \times s}$ and a feature-item matrix $\mX = h(\mX^0) \in \{0, 1\}^{n \times s}$. However, instead of receiving a full user-feature matrix $\mM$ in the target domain, a partial user-feature matrix $\mG = f(\mM^0) \in \{0, 1, \blacksquare\}^{t \times n^0}$ is received in which two entries at row $i$ and column $j$ in $\mG$ and $\mM$ are identical, except for those being $\blacksquare$ in $\mG$. Our goal is to recover $\mM$ given $\mY, \mX$, and $\mG$.

\subsubsection{Connectome}

Building fully structural neuronal connectivity to better understand the structural-functional relationship of the brain is the main objective of connectome~\cite{sporns2011human}. For each person, building their fully structural neuronal connectivity when they are healthy can potentially help doctors treat them more easily when their brain does not function properly. Even in the case the doctors do not have their neuronal connectivity when they were healthy, having their neuronal connectivity when they are admitted to the hospital also helps them to identify causes by comparing it with other existing neuronal connectivities.

The most commonly used technique among them is functional Magnetic Resonance Imaging (fMRI), which offers an in-vivo view of both the brain's structure and function. Typically, there are three levels of resolutions: marco, meso, and micro. The macro level encompasses broad brain regions and the long-distance connections between them. The micro level focuses on cellular and neuronal details. To bridge the gap between the fine-grained details of individual neurons (micro level) and the more global connections between brain regions (macro level), the meso level is considered. At this level, one provides the network of connections between groups of neurons and local brain structures, such as cortical minicolumns and neural subnetworks. To construct a micro connectome, it is compulsory to know whether there exists a synapse between two neurons (the site where the axon of a neuron innervates to another neuron is called a synaptic site). A neuron that sends (respectively, receives) signals to another neuron across a synapse is called the presynaptic (respectively, postsynaptic) neuron. It is common to build connectome at meso or macro levels rather than a micro level because the brain is populated with roughly $100$ billion neurons~\cite{herculano2012remarkable} and this makes building a micro connectome nearly infeasible. However, with the recent development of neural population recordings, it is possible to record tens of thousands of neurons from (mouse) cortex during spontaneous, stimulus-evoked, and task-evoked epochs~\cite{stringer2019high,stringer2024rastermap}. This could enable the recording of most of the neurons in a brain region in the near future and thus could provide a sufficiently large number of observations with a set of presynaptic neurons spiked by an input stimulus and a set of postsynaptic neurons responded to the spiked presynaptic neurons, i.e., the input stimulus.

To construct a complete connectome of a brain, we present the neuron-neuron connectivities based on~\cite{bui2024simple} as follows. Let $\mT = (t_{ij})$ be an $(n + t) \times (n + t)$ connectivity matrix of $n + t$ neurons. Entry $t_{ij} = 1$ means there is a synaptic connection starting from neuron $i$ to neuron $j$, i.e., neuron $i$ is a presynaptic neuron and neuron $j$ is a postsynaptic neuron. On the other hand, $t_{ij} = 1$ means there is no synaptic connection starting from neuron $i$ to neuron $j$. Note that $\mT$ may not be symmetric. Since a stimulus can be stored as a memory at \emph{a few synapses}~\cite{smith2013dendritic,kastellakis2015synaptic,abraham2019plasticity}, a synapse can be determined by knowing which neurons participate in responding to the stimulus. Therefore, we can identify the synaptic connection between two neurons by dividing the neuron set into a set of \emph{a small number of presynaptic neurons} and a set of \emph{a large number of postsynaptic neurons} to observe their responses to a stimulus. In particular, we can create a $t \times n$ binary presynaptic-postsynaptic connectivity matrix $\mM = (m_{ij})$ as follows. Let $\cS \subseteq [n + t] = \{1, 2, \ldots, n + t \}$ with $|\cS| = t$ be a set of presynaptic neurons and $\overline{\cS} = [n + t] \setminus \cS$ with $|\overline{\cS}| = n$ be the set of postsynaptic neurons corresponding to the set of presynaptic neurons $\cS$. Matrix $\mM$ is obtained by removing every column $j \in S$ and every row $i \in \overline{\cS}$ in $\mM$. It directly follows that entry $m_{ij} = 1$ means there is a connection between the presynaptic neuron $i$ and the postsynaptic neuron $j$, and $m_{ij} = 0$ means otherwise.

Given a set of $n$ postsynaptic neurons labeled from $1$ to $n$, let $\cX \subseteq \{0, 1 \}^n$ be the discretized stimulus space (the ambient space) representing all stimuli. For any vector $\bV = (v_1, \ldots, v_n) \in \{0, 1 \}^n$, $v_j = 1$ means the postsynaptic neuron $j$ spikes and $v_j = 0$ means otherwise. Let $\bX = (x_1, \ldots, x_n) \in \cX$ be the binary representation vector for an input stimulus. Given a \emph{fixed} set of $t$ presynaptic neurons labeled $\{1, \ldots, t \} = [t]$, let $Y = (y_1, \ldots, y_t) \in \cY \subseteq \{0, 1 \}^t$ be the encoded vector for the input stimulus. For simplicity, we use a model proposed by Bui~\cite{bui2024simple} as follows: every presynaptic neuron spikes, and a postsynaptic neuron spikes if it does not connect to an inhibitory presynaptic neuron. Specifically, $y_i = 0$ means the presynaptic neuron $i$ is inhibitory and spiking, and $y_i = 1$ means the presynaptic neuron $i$ is excitatory and spiking. Note that every presynaptic neuron $i$ is a hybrid neuron, i.e., depending on the value of $y_i$, it can behave either as an excitatory neuron or an inhibitory neuron. A stimulus represented by $\bX$ is encoded at the $t$ presynaptic neurons as $\bY$. This model can be illustrated in Fig.~\ref{fig:Stimulus}.

\begin{figure}
    \centering
    \includegraphics[width=0.5\linewidth]{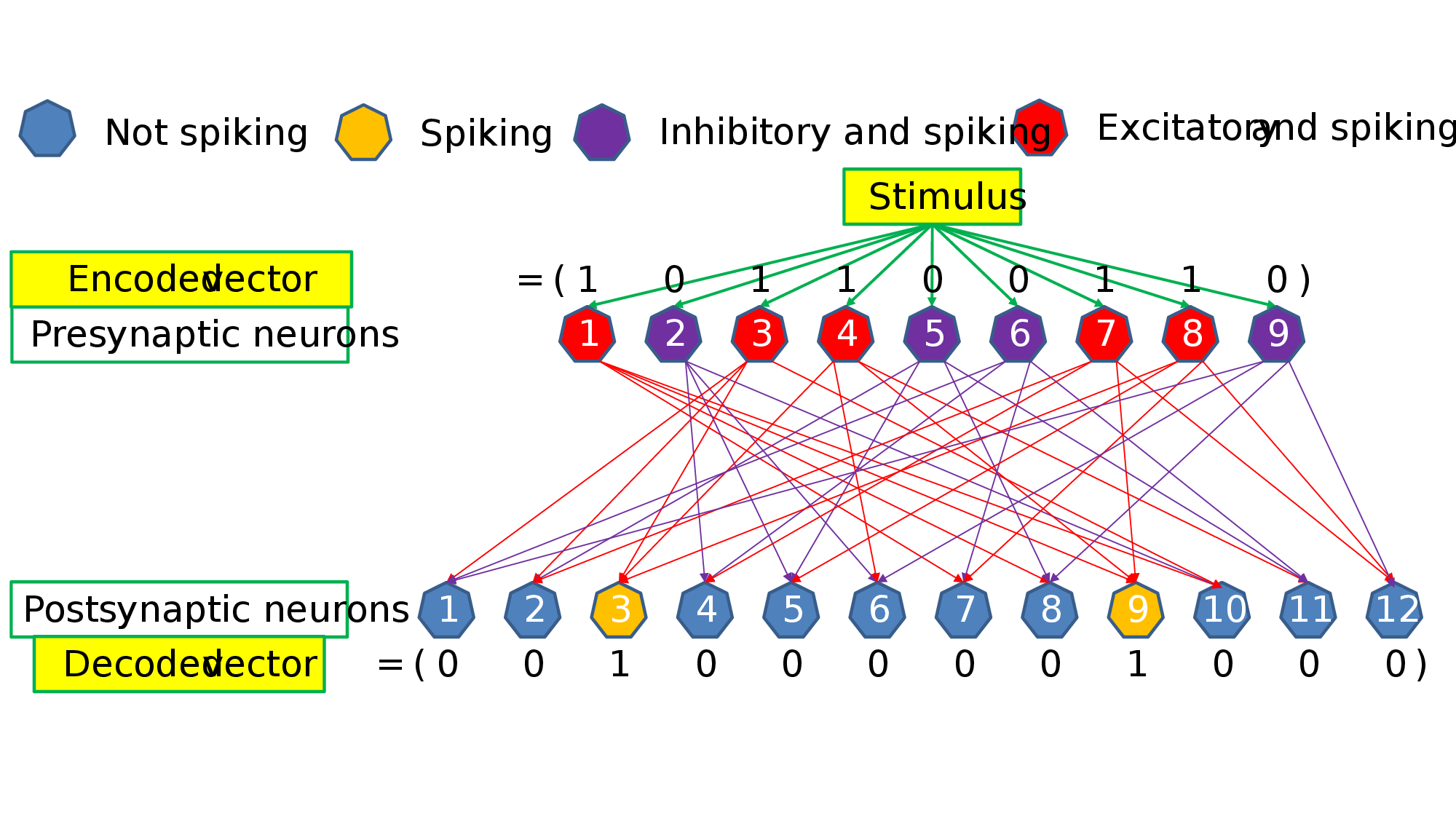}
    \caption{There are 21 neurons in total. They are divided into a set of 9 presynaptic neurons and a set of 12 postsynaptic neurons. The input stimulus is encoded by the presynaptic neurons and denoted as $\bY = (1, 0, 1, 1, 0, 0, 1, 1, 0)$ in which $y_i = 1$ means the $i$th presynaptic neuron is excitatory and spiking while $y_i = 0$ means the $i$th presynaptic neuron is inhibitory and spiking. The encoded stimulus is then decoded by the postsynaptic neurons and denoted as $\bX = (0, 0, 1, 0, 0, 0, 0, 0, 1, 0, 0, 0)$ in which $x_j = 1$ means the $j$th postsynaptic neuron spikes while $x_j = 0$ means the $j$th postsynaptic neuron does not spike.}
    \label{fig:Stimulus}
\end{figure}

The corresponding $t \times n$ binary presynaptic-postsynaptic connectivity matrix $\mM = (m_{ij})$ is:
\begin{equation}
    \mM = \left[ \begin{array}{cccccccccccc}
    0 & 0 & 0 & 0 & 0 & 1 & 1 & 1 & 1 & 0 & 0 & 0 \\
    0 & 0 & 0 & 1 & 1 & 1 & 0 & 0 & 0 & 1 & 0 & 0 \\
    1 & 1 & 1 & 0 & 0 & 0 & 0 & 0 & 0 & 1 & 0 & 0 \\
    0 & 0 & 1 & 0 & 0 & 1 & 0 & 0 & 1 & 0 & 1 & 0 \\
    0 & 1 & 0 & 0 & 1 & 0 & 0 & 1 & 0 & 0 & 1 & 0 \\
    1 & 0 & 0 & 1 & 0 & 0 & 1 & 0 & 0 & 0 & 1 & 0 \\
    0 & 1 & 0 & 0 & 0 & 0 & 1 & 0 & 1 & 0 & 0 & 1 \\
    0 & 1 & 0 & 0 & 1 & 0 & 1 & 0 & 0 & 0 & 0 & 1 \\
    1 & 0 & 0 & 0 & 0 & 1 & 0 & 1 & 0 & 0 & 0 & 1
    \end{array} \right], 
    \mG = \left[ \begin{array}{cccccccccccc}
    0 & 0 & 0 & 0 & 0 & 1 & 1 & 1 & 1 & 0 & 0 & 0 \\
    0 & 0 & \blacksquare & \blacksquare & 1 & 1 & 0 & 0 & 0 & 1 & 0 & 0 \\
    1 & 1 & 1 & 0 & 0 & 0 & 0 & 0 & 0 & 1 & 0 & 0 \\
    0 & 0 & 1 & 0 & 0 & 1 & 0 & 0 & 1 & 0 & 1 & 0 \\
    0 & 1 & \blacksquare & 0 & 1 & \blacksquare & 0 & 1 & 0 & 0 & 1 & 0 \\
    1 & 0 & 0 & \blacksquare & 0 & 0 & 1 & 0 & 0 & 0 & 1 & 0 \\
    0 & 1 & 0 & 0 & 0 & 0 & 1 & 0 & 1 & 0 & 0 & 1 \\
    0 & 1 & 0 & 0 & 1 & 0 & 1 & 0 & 0 & 0 & 0 & 1 \\
    1 & 0 & 0 & 0 & 0 & 1 & 0 & 1 & 0 & 0 & 0 & 1
    \end{array} \right] \label{eqn:connectivity}
\end{equation}

To distinguish two distinct stimuli, it is natural that any two distinct stimuli are represented by two distinct vectors in $\cX$ and are encoded by two distinct vectors in $\cY$. Then there exists a \emph{bijective} mapping from $\mM$ and $\bX$ to $\bY$ and let us denote it $\bY := \mM \odot \bX$, where $\odot$ is the mapping function. For $\bV = (v_1, v_2, \ldots, v_p)$, Let $\supp(\bV) := \{j \in [p] \mid v_j = 1 \}$ be the characteristic set of vector $\bV$. Then for any $j \in \supp(\bX)$, we must have $\supp(\mM(:, j) ) \subseteq \supp(\bY)$. Indeed, if there exists row $i_0$ such that $m_{i_0 j} = 1$ and $y_{i_0} = 0$, $x_j$ must equal to 0 because of the decoding rule. This contradicts the fact that $j \in \supp(\bX)$. Therefore, we get
\begin{equation}
    \bY = \bigvee_{j \in \supp(\bX)} \mM(:, j),
\end{equation}
and the mapping function $\odot$ is thus the testing operation in group testing.

Although the fast-paced development of neural recording could promise a complete connectome reconstruction, it is still impossible to identify some synaptic connections because of the obscured nature of these synapses. Let us denote these synapses \emph{missing synapses}. Let $r$ be the total number of unidentified synapses and $\overline{\Psi} = \{(i_1, j_1), \ldots, (i_r, j_r) \}$ be their corresponding set, where $1 \leq i_k \leq t$ and $1 \leq j_k \leq n$ for any $(i_k, j_k) \in \overline{\Psi}$. Let $\blacksquare$ be an erasure that cannot be determined to be 0 or 1. Then the presynaptic-postsynaptic connectivity matrix obtained by experiments is thus $\mG = (g_{ij})$ as illustrated in~\eqref{eqn:connectivity}, where $g_{ij} = m_{ij}$ if $(i, j) \not\in \overline{\Psi}$ and $g_{ij} = \blacksquare$ if $(i, j) \in \overline{\Psi}$. Note that for any stimulus $\bY$, one always receives $\bX := \dec(\mG, \bY)$ though we do not know every entry in $\mG$. More importantly, because of spontaneous neural activity, we cannot control stimulus inputs as we wish because we do not know which spiking neurons represent a stimulus in general. In other words, it is infeasible to generate a stimulus that induces the corresponding $\bX$ or $\bY$ as wanted. The problem of reconstructing the complete connectome turns out to be the problem of reconstructing $\mM$ from $\mG$ by collecting enough pairs $(\bX, \bY)$, i.e., $\mX$ and $\mY$ in~\eqref{eqn:1st} are large enough, in group testing.

\subsection{Related work}
\label{sec:intro:related}
MCGT can be considered as a problem that lies between Boolean matrix factorization and matrix completion. In standard Boolean matrix factorization~\cite{miettinen2021recent}, given $\mY$ and integer $n$, ones tries to find matrices $\mM$ and $\mX$ such that they minimize 
\begin{equation}
    \norm{\mY - \mM \odot \mX}^2_F = \sum_{i, j} y_{ij} \oplus (\mM \odot \mX)_{ij},
    \label{eqn:BMF}
\end{equation}
where subtraction, Frobenius norm $\norm{\cdot}_F$, and the summation are taken over the standard algebra while $\oplus$ stands for the element-wise XOR-operation. In standard matrix completion, a partial part of $\mM$ (or $\mX$ or $\mY$), says $\mG$ is given, the objective is to recover $\mM$ such that $\mbox{rank}(\mM)$ is minimized subject to $m_{ij} = g_{ij}$ for $(i, j) \in [t] \times [n] \setminus \bar{\Psi}$.

\subsubsection{Group testing} In standard group testing, one attempts to recover $\mX$ given $\mM$ and $\mY$ for $s = 1$ in~\eqref{eqn:1st}. 

Let $\supp(\bV) = \{j \mid v_j \neq 0 \}$ be the support set for vector $\bV = (v_1, \ldots, v_w)$ and $\cM_i = \supp(\mM(i, :) )$ for $i = 1, \ldots, t$. The OR-wise operator between two vectors of same size $\bZ = (z_1, \ldots, z_n)$ and $\bZ^\prime = (z_1^\prime, \ldots, z_n^\prime)$ is $\bZ \vee \bZ^\prime = (z_1 \vee z_1^\prime, \ldots, z_n \vee z_n^\prime)$. Let $\test(\cdot)$ be the notation for the test operations in group testing; namely, $y_i := \mM(i, :) \odot \bX := \test(\cM_i \cap \supp(\bX)) = 1$ if $|\cM_i \cap \supp(\bX)| \geq 1$ and $y_i = 0$ if $|\cM_i \cap \supp(\bX)| = 0$, for $i = 1, \ldots, t$. The procedure to get outcome vector $\bY$ is called \textit{encoding} and the procedure to recover $\bX$ from $\bY$ and $\mM$ is called \textit{decoding}.

There are several criteria for tackling group testing, but we focus on four main ones here. The first criterion is the testing design, which can be either non-adaptive or adaptive. In a non-adaptive design, all tests are predetermined and independent, allowing them to be executed in parallel to save time. In contrast, an adaptive design involves tests that depend on the previous tests, often requiring multiple stages. While this design can achieve the information-theoretic bound on the number of tests, it tends to be time-consuming due to the need for multiple stages. The second criterion is the setting of the defective set. In a combinatorial setting, the defective set is arbitrarily organized subject to predefined constraints, whereas in a probabilistic setting, a distribution is applied to the input items. The third important criterion is whether the design is deterministic or randomized. A deterministic design produces the same result given the same inputs, whereas a randomized design introduces a degree of randomness, which may lead to different results when executed multiple times. Finally, the fourth criterion involves the recovery approach: exact recovery, where all defective items are identified, and approximate recovery, where only some of the defective items are identified.

Since the inception of group testing, it has been applied in various fields such as computational and molecular biology~\cite{du2000combinatorial}, networking~\cite{d2019separable}, and Covid-19 testing~\cite{shental2020efficient}. For combinatorial group testing with non-adaptive designs, a strong factor of $d^2$ has been established in the number of tests \cite{kautz1964nonrandom,d1982bounds,dyachkov1983survey}. For exact recovery, it is possible to obtain $O(d^2 \log n)$ tests that can be decoded in time $O(tn)$ with explicit construction~\cite{porat2011explicit} or in $\poly(d, \log{n})$ \cite{indyk2010efficiently,ngo2011efficiently,cheraghchi2013noise,cheraghchi2020combinatorial} with additional constraints on construction. To reduce the factor $d^2$ to $d$, an adaptive design or a probabilistic setting can be used. The set of defective items can be fully recovered by using $O(d \log{(n/d)})$ tests with $O(\log_d{n})$ stages in~\cite{du2000combinatorial} or with two stages in~\cite{de2005optimal}. When the test outcomes are unreliable, it is still possible to obtain $O(d \log{(n/d)})$ tests using a few stages~\cite{scarlett2018noisy,teo2022noisy,hwang1975generalized,mezard2011group,aldridge2020conservative}. For probabilistic group testing with non-adaptive design, the number of tests $O(d \log{n})$ has been known for a long time\cite{aldridge2014group,scarlett2016phase,aldridge2019group,coja2020optimal}. Decoding time associated with that number of tests has gradually reduced from $\poly(d, \log{n})$ to near-optimal $O(d \log{n})$~\cite{cheraghchi2020combinatorial,price2020fast}. Many variants of group testing such as threshold group testing~\cite{damaschke2006threshold}, quantitative group testing~\cite{bshouty2009optimal}, complex group testing~\cite{chen2009nonadaptive}, concomitant group testing~\cite{bui2023concomitant}, and community-aware group testing~\cite{nikolopoulos2023community} has also been considered recently. However, to the best of our knowledge, all of these models are not closely related to our setup.

\subsubsection{Matrix completion} A closely related research topic to our work is matrix completion which was first known as \emph{Netflix problem}~\cite{acm2007proceedings}. In this problem, Netflix database consists of about $t \approx 10^6$ users and about $n \approx 25,000$ movies with users rating movies. Suppose that $\mM_{t \times n}$ is the (unknown) users rating matrix that we are seeking for. Since most of the users have only seen a small fraction of the movies, only a small subset of entries in $\mM$ have been identified and the rest are considered as missing entries. The actual ratings are recorded into matrix $\mG = (g_{ij}) \in \{\mathbb{R} \cup \{ \blacksquare \} \}^{t \times n}$. The goal is to predict which movies a particular user might like. Mathematically, we would like to complete matrix $\mG$, i.e., replacing missing entries by users rates, based on the partial observations of some of its entries to reconstruct $\mM$. Once $\mM$ is low-rank, it is possible to complete the matrix and recover the entries that have not been seen with high probability~\cite{candes2012exact}. In particular, if $\mbox{rank}(\mM) = k$, $n^* = \max(t, n)$, and each entry is observed uniformly, then there are numerical constants $C$ and $c$ such that if the number of observed entries is at least $C (n^*)^{5/4} k \log{n^*}$, all missing entries in $\mG$ can be recovered with probability at least $1 - cn^{-3} \log{n}$. Following this pioneering work, there is much work to tackle this problem with the same settings or different settings~\cite{keshavan2010matrix,candes2010power,recht2011simpler,bhojanapalli2014universal}. Unfortunately, the results in~\cite{candes2012exact} are inefficient when $k = O(t)$ because every entry must be observed in that case. Moreover, it is not utilized whether the missing entries are zero or non-zero. Although recovering measurement matrices in group testing is equivalent to the matrix completion problem, the settings in group testing are different from the settings in the standard matrix completion problem. Specifically, operations in group testing are Boolean and the test outcomes provide additional information compared to the matrix completion problem itself.

\subsubsection{Boolean Matrix Factorization} Given solely $\mY$, one tries to find matrices $\mM$ and $\mX$ satisfying the conditions in~\eqref{eqn:BMF}. Unfortunately, those matrices might not be found efficiently because the BMF problem is NP-complete~\cite{orlin1977contentment}. Therefore, using some solution in BMF to find $\mM$ without using the aid of $\mG$ and $\mX$ might not yield the actual $\mM$.

\subsection{Notation}
\label{sec:intro:formula}

For consistency, we use capital bold letters for matrices, non-capital letters for scalars, bold letters for vectors, and calligraphic letters for sets. We have summarized the commonly used notations in this paper and their definitions in \cref{tab: Notation}.

Let $\Psi=\{\Psi_1, \Psi_2, \dots, \Psi_r\} \subseteq [tn]$ be the set of missing entries from left to right and from top to bottom in $\mM$. In particular, for any $\Psi_g \in \Psi$, if $\Psi_g = (i_g-1) n +j_g$ where $i_g \in \{1, \dots, t\}$ and $j_g \in \{1,\dots, n\}$ then $\Psi_g$ is located at row $i_g$ and column $j_g$. Let $\overline{\Psi} = \{(i_1, j_1), \ldots, (i_r, j_r) \}$ be the bijective mapping set of $\Psi$, where $a_g = (i_g-1) n +j_g$ is represented by the pair $(i_g, j_g)$ and vice versa for $g \in [r]$. Let $\bPsi = (\psi_1, \ldots, \psi_r)^T \in \{0, 1 \}^r$ be the characteristic vector of $\overline{\Psi}$ in which $\psi_g = m_{i_g j_g}$ for $(i_g, j_g) \in \overline{\Psi}$.

Let $\cT_d := \{ \bX \in \{0, 1 \}^n \mid |\bX| = d \}$ be the set of all binary vectors of length $n$ and weight $d$. Set $\mX := [\bX_1, \bX_2, \ldots, \bX_s]$ and $\mY := \mM \odot \mX$, where $\bY_j = \mY(:, j) := \mM \odot \mX(:, j) = \mM \odot \bX_j$ for $j = 1, \ldots, s$. We further denote $\operatorname{col}(\mX) := \left\{\mX(:,j) |j\in\{1,\dots,s\}\right\}$, the set of all column vectors of $\mX$.

\begin{table}[t]
\centering

\scalebox{0.9}{
\begin{tabular}{|c|c|}
\hline
\textbf{Notation} & \textbf{Definition}  \\
\hline
[n]& The set $\{1,\dots,n\}$ \\
\hline
$\mathcal{D}$ & Defective set \\
\hline
d & Cardinality of the Defective set \\
\hline
$t$ & Number of test \\
\hline
$n$ & Number of item \\
\hline
$s$ & Number of sample \\
\hline
$p$ & Probability of a cell in $\mM$ having the value 1 \\
\hline
$q$ & Probability of a cell in $\mM$ goes missing \\
\hline
$h$ & Number of rows of the missing matrix \\
\hline
$\sigma_i$ & Number of missing cells in row $i$ of $\mM$ \\
\hline
$\varphi_i$ & Number of cells having value 1 in row $i$ of $\mM$ \\
\hline
$\mathcal{R}_i$ & The matrix with row from $(s(i-1)+1)^{th}$ to $si^{th}$ and columns from $(\sum_{0< j < i}\sigma_j+1)^{th}$ to $(\sum_{0< j < i+1}\sigma_j)^{th}$ of $\Gamma$ \\
\hline
$\bV_i$ & The slice of the vector $\bV$ from position $(s(i-1)+1)^{th}$ to $si^{th}$ \\
\hline
$\Psi[i]$ & The slice of the vector $\Psi$ from position $(\sigma_1+\dots+\sigma_{i-1}+1)^{th}$ to $(\sigma_1+\dots+\sigma_i)^{th}$ \\
\hline
$\bar{\varphi}_i$ & the number of $\Psi_\alpha\in\Psi[i]$ such that $\psi_\alpha=1$ \\
\hline
$\mathcal{Q}_i$ & The group testing problem containing the measurement matrix $\mathcal{R}_i$, the output vector $\bV_i$ and the sample vector $\Psi[i]$ \\
\hline
$\mathbf{M}$ & Measurement matrix of shape $t\times n$ \\
\hline
$\mathbf{G}$ & Measurement matrix of shape $t\times n$ with missing cells \\
\hline
$\mX := [\bX_1, \bX_2, \ldots, \bX_s]$ & The sample matrix \\
\hline
$\mY := \mM \odot \mX$& The outcome matrix \\
\hline
$\operatorname{col}(\mathbf{X}):=\left\{\mathbf{X}(:,j)^{\top}|j\in\{1,\dots,s\}\right\}$& The set of all column vectors of $\mathbf{X}$, sometimes is referred as the sample set \\
\hline
$\Gamma$& The converted missing matrix \\
\hline
$\bV$& The converted missing vector \\
\hline
$\blacksquare$ & An erasure that cannot be determined to be 0 or 1 \\
\hline
$\Psi=\{\Psi_1, \Psi_2, \dots, \Psi_r\}$ &  The set of missing entries from left to right and from top to bottom in $\mM$\\
\hline
$\overline{\Psi} = \{(i_1, j_1), \ldots, (i_r, j_r) \}$ & The bijective mapping set of $\Psi$\\
\hline
$\bPsi = (\psi_1, \ldots, \psi_r)^T $& The characteristic vector of $\overline{\Psi}$\\
\hline
$\cT_d := \{ \bX \in \{0, 1 \}^n \mid |\bX| = d \}$& The set of all binary vectors of length n and weight d\\
\hline
$a$& $1-p+pq$\\
\hline
$b$& $(1-p)(1-q)$\\
\hline
$\simm(\mathbf{x}_i,\mathbf{x}_j)$& The number of row that has the value 1 in both $\bX_i$ and $\bX_j$\\
\hline
$\Upsilon(u)$& $\left[(1-p)(1-q)\right]^{2d-2u}(1-p+pq)^{u}-\left[(1-p)(1-q)\right]^{2d-u}$\\
\hline
$\Pr(succ)$& The success probability of solving the Group testing problem via the recovered matrix\\
\hline
\end{tabular}}
\caption{Commonly used notations in this paper}
\label{tab: Notation}
\end{table}

\subsection{Contributions}
\label{sub:intro:contri}

We consider a model that lies between Boolean matrix factorization and matrix completion called matrix completion in group testing. In particular, given a missing matrix $\mG$, the input matrix $\mX$, and the outcome matrix $\mY$, we construct a converted missing matrix $\Gamma$ and a converted missing vector $\bV$  such that $\Gamma \odot \bPsi = \bV$ with no duplicated rows in $\Gamma$ (more details can be found in \cref{thm:construction}). Here, $\Gamma$ combined with $\mathbf{v}$ can be interpreted as an intermediate group testing problem, where the item set corresponds to the missing entries. Overall, our contribution is in five folds.
\begin{enumerate}
    \item We show that the information gain from the converted missing matrix $\Gamma$ and converted missing vector $\bV$ is equivalent to that of the missing matrix $\mG$ and the set of samples $\operatorname{col}(\mathbf{X})$. Therefore, to reconstruct the matrix $\mM$, one only needs to reconstruct $\bPsi$ from $\Gamma$ and $\Gamma \odot \bPsi$. This would reduce the complicated matrix completion problem into a standard group testing problem. Moreover, due to the nature of the standard group testing problem, the MCGT problem can be proved to be NP-complete.
    \item Since each entry in $\Gamma$ is independent and identically distributed, the more rows $\Gamma$ has, the better the chance we have of recovering $\bPsi$. Thus, to demonstrate the effectiveness of our approach, we analyze the number of rows in $\Gamma$. We do this by first deriving a bound for the expected number of rows, denoted as $h$, in $\Gamma$ as follows:\begin{equation}
    s\Upsilon(d)\left[1-\dfrac{s-1}{2d} \cdot \dfrac{a^2}{b-a^2}\right] \leq \frac{\bbE[h]}{t}  \leq s\Upsilon(d).
\end{equation}
where $a=(1-p)(1-q), b=1-p+pq$, $\Upsilon(d)=b^d-a^d$, and $t$ is the number of tests of the measurement matrix $\mM$. We then prove a concentration bound between \( h \) and its expected value (more details can be found in \cref{theo concentration bound}), which shows that as \( d \) and \( n \) approach infinity, we have
\[
\Pr(|h - \mathbb{E}[h]| > \delta) \to 0, \quad \forall \delta > 0.
\]
\item One aspect we wish to analyze is the performance of the recovery matrix as a measurement matrix in the initial group testing problem, compared to the original matrix. By redefining the construction of $\Gamma$ and $\mathbf{v}$, we derive a method to split the group testing between $\Gamma$ and $\mathbf{v}$ into smaller group testing problems with a Bernoulli-generated measurement matrix. Due to the well-known properties of Bernoulli-generated group testing problems, we are then able to establish a lower bound for the success rate of solving the group testing problem via the recovery matrix with any given algorithm $\mathcal{A}$. We then apply this general bound to derive specific bounds for conventional group testing algorithms, including COMP, DD, SCOMP, and SSS.
\item After deriving the bound for the success probability of solving the group testing problem via the recovery matrix, we investigate how this bound changes as we increase the number of sampling rounds (i.e., increase the number of matrix $\mX$). To address this, we derive \cref{theo: universal bound for GT}, which provides a universal lower bound for the success probability of any given algorithm $\mathcal{A}$ for recovery and $\mathcal{B}$ for solving the group testing problem using the recovery matrix. Furthermore, if both $\mathcal{A}$ and $\mathcal{B}$ guarantee a positive success probability for all group testing setups, one can show that as the number of sampling rounds increases, the lower bound for the success probability converges to 1. Building on this, we also derive \cref{last cor}, which specifies the number of sampling rounds required to guarantee a certain success probability.
\item To demonstrate the effectiveness of our proposed method, we conduct simulations on a standard noiseless group testing setup with a Bernoulli test design, where 10\% of the cells in the measurement matrix are allowed to go missing. We recover the measurement matrix using our method, Nuclear Norm Minimization (NNM), a recovery method in matrix completion, and GreConD, a recovery method in Boolean Matrix Factorization. We compare the accuracy of the group testing problem using the measurement matrix recovered by each of the three methods, as well as the initial true matrix. The results show that our proposed recovery matrix outperforms those recovered by NNM and GreConD, and it closely approximates the recovery achieved by the original matrix.

\end{enumerate}

\section{Information gain between missing and outcome matrices}
\label{sec:infoGain}

\par In this section, we show that solving the MCGT problem is equivalent to recovering the missing entries.

\subsection{Construction}
\label{sec:infoGain:construction}

\par To calculate the information gain between $\mX,\mY$ and the missing matrix $\mG$, we define an \emph{informative pair} of an column in $\mX$ and a row in $\mG$ as follows:

\begin{definition}[Informative pair] 
  Let $\mX$, $\mG$, and $t$ be defined in~\cref{sec:intro}. For any vector $\bX = (x_1, x_2, \ldots, x_n) \in \operatorname{col}(\mathbf{X})$ and $i \in [t]$, we say that the pair $(\bX, i)$ is informative if both of the following conditions are satisfied:
    \begin{itemize} 
    \item $\exists j_0 \in \supp(\bX)$, $g_{i j_0} = \blacksquare$,
    \item $\forall j \in \supp(\bX)$, $g_{ij} = 0$ or $g_{ij} = \blacksquare$.
    \end{itemize}
\label{informative-pair}
\end{definition}

If item $j$ is non-defective, the true value of any missing entry in the representative column of item $j$, i.e., $\mG(:, j)$,  does not affect the test outcomes. On the other hand, if item $j_0$ is defective and satisfies the first condition, the outcome of test $i$ may be affected by item $j_0$. The second condition ensures that row $i$ never contains a defective item $j^\prime$ and the entry of that item, i.e., $g_{i j^\prime}$, is known to be 1. Otherwise, test $i$ is positive, and any missing entry in row $i$ except $g_{i j^\prime}$ does not affect the outcome of test $i$. For example, let us consider a column of $\mX$ as follow 
\begin{align*}
    \bX=[0,1,1,0,0].
\end{align*}

and three test outcomes:
\begin{align*}
    \mathbf{y}_1= [0,0,0,1,\blacksquare], \mathbf{y}_2= [0,\blacksquare,1,0,0],\mathbf{y}_3= [1,0,1,0,1].
\end{align*}

\par Here it can be seen that $(\bX, 1)$ is not informative because it violates the first condition of Definition~\ref{informative-pair}. The pair $(\bX, 2)$ is also not an informative pair because it violates the second condition of Definition~\ref{informative-pair}. The pair $(\bX, 3)$, however, satisfies both conditions of Definition~\ref{informative-pair}. Hence, it is informative.

Based on the definition of informative pairs, we proceed to define a converted missing matrix and a converted missing vector in order to recover the missing matrix $\mG$. The procedure to construct $\Gamma$ and $\bV$ is described in words in \cref{thm:construction} and is written more formally in pseudocode in Algorithm~\ref{alg:missing_matrix_vector} at Appendix~\ref{app:alg}.

\begin{algorithm}
\caption{$\mathrm{missingBMF2GT}(\mG,\mX,\mY,t)$}
\label{thm:construction}

\begin{algorithmic}[1]
\State \textbf{Input:} Missing matrix $\mG$, sample matrix $\mathbf{X}$, outcome matrix $\mY$, number of tests $t$.
\State \textbf{Output:} Converted missing matrix $\Gamma$, converted missing vector $\mathbf{v}$.
\State Let $t, n, r, \mathbf{X}, \bPsi$ be defined in~\cref{sec:intro}. We begin with a $0\times r$ matrix $\Gamma$ and a $0\times 1$ vector $\bV$.
\State For every pair $(\bX, c) \in \operatorname{col}(\mathbf{X}) \times [t]$ that is informative we add a row vector $\mathbf{g} = (g_1, \ldots, g_r)$ to $\Gamma$.
\State For every $z \in [r]$ and $(i_z, j_z) \in \overline{\Psi}$, we set $g_z = 1$ if $i_z= c$ and $x_{j_z}= 1$, otherwise, we set $g_z=0$.
\State We append to $\bV$ one more entry, set to $y_c$. If there are two rows in $\Gamma$ that are the same, we delete one of them.
\State Repeat over all pairs of $(\bX,c)$.
\end{algorithmic}
\end{algorithm}

\begin{corollary}
    For $\mG,\mX,\mY\text{ and }t$ defined \cref{tab: Notation}, let $\Gamma$ and $\bV$ be the output of $\mathrm{missingBMF2GT}$($\mG,\mX,\mY,t)$, then we have $\Gamma \odot \bPsi = \bV$, and the time to construct $\Gamma$ and $\bV$ is $O(rs^2t^2 + stn)$.
\label{cor:GammaV}
\end{corollary} 

\begin{proof}
The equation $\Gamma \odot \bPsi = \bV$ is straightforwardly obtained. Since $\mG$ has a size of $t \times n$, there are up to $t$ rows that have missing entries. On the other hand, it takes $O(r (st)^2)$ time to check duplicated rows in $\Gamma$ and there are $s$ pairs $(\bX, \bY := \mM \odot \bX)$ observed, it takes $O(tns) + O(r (st)^2) = O(rs^2t^2 + stn)$ time to construct $\Gamma$ and $\bV$.
\end{proof}

For example, consider the missing matrix $\mG$ as in~\eqref{eqn:connectivity}. Suppose that $\mathbf{X}$ and $\mathbf{Y}$ are given as follow:
\begin{align}
\mathbf{X} &:= \left[ \bX_1 := [0,0,1,1,0,0,0,0,0,0,0,0]^T, \bX_2:=[0, 0,1,0,0,1,0,0,0,0,0,0]^T\right]; \\
\mathbf{Y}&:=  \left[\bY_1 := \mM \odot \bX_1 = [0,1,0,0,1,1,0,1,0]^T, \bY_2:= \mM \odot \bX_2 = [1,1,1,1,0,0,0,0,1]^T\right].
\end{align}

As described in~\cref{sec:intro}, the set of missing entries is $\Psi:=\{\Psi_{1}=15, \Psi_{2}=16, \Psi_{3}=51, \Psi_{4}=54, \Psi_{5}=64\}$ and the set of positions of those entries is $\overline{\Psi} = \{(2, 3), (2,4),(5,3),(5,6), (6, 4) \}$. By applying~\cref{informative-pair}, the following pairs are informative: $(\bX_1,2)$, $(\bX_1,5)$, $(\bX_1,6)$, $(\bX_2,5)$. Thanks to~\cref{thm:construction}, the converted missing matrix $\Gamma$ and converted missing vector $\bV$ can be constructed as follows:
\begin{equation}
    \Gamma = \left[ \begin{array}{ccccc}
    1 & 1 & 0 & 0 & 0 \\
    0 & 0 & 1 & 0 & 0 \\
    0 & 0 & 0 & 0 & 1 \\
    0 & 0 & 1 & 1 & 0 \\
    \end{array} \right], 
    \bV = \left[ \begin{array}{c}
    1\\
    0\\
    1\\
    0    
    \end{array} \right].
\end{equation}
Our main goal is to recover the vector $\psi = (\psi_1, \psi_2, \psi_3, \psi_4, \psi_5)$, which is $(0, 1, 0, 0, 1)$.

\subsection{Information equivalence between measurement and missing matrices}
\label{validation}

\par In this section, we will show that the information gain from the converted missing matrix $\Gamma$ and converted missing vector $\bV$ is equivalent to that from the missing matrix $\mG$ and matrix $\mathbf{X}$. To demonstrate this, we need to show that for a given $\bX \in \operatorname{col}(\mathbf{X})$, if $(\bX, i)$ is not informative, then there is no information gain on $\Psi$. This is equivalent to stating that if there is information gain between $\Psi$ and the test outcome on test $i$ with respect to the column vector $\bX$, the pair $(\bX, i)$ must be informative. We summarize this argument in the following theorem.

\begin{theorem}
    Let $t, \mathbf{X}, \Psi$ be variables defined in~\cref{sec:intro}. Given $\bX \in \operatorname{col}(\mathbf{X})$ and $\bY := (y_1, \ldots, y_t)^T := \mM \odot \bX$, for any $i \in [t]$ such that $(\bX, i)$ that is not informative, we have:
    \begin{equation}
        I(\Psi; y_i) = 0. \nonumber
    \end{equation}
\label{thm:mutual_info}
\end{theorem}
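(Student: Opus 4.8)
The plan is to show that whenever $(\bX,i)$ fails to be informative, the outcome $y_i$ is \emph{forced} to a fixed value by the already-known entries of row $i$ restricted to $\supp(\bX)$, so that, once $\mM^{\blacksquare}$ and $\bX$ are held fixed, $y_i$ becomes a degenerate (constant) random variable. Since a constant is independent of everything, its mutual information with the random erased values $\bPsi$ vanishes. The whole argument is a case analysis driven directly by the negation of~\cref{informative-pair}.

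First I would write the outcome explicitly and separate the known contribution from the unknown one. With $g(i,j)$ denoting the index in $[r]$ for which $(i_{g},j_{g})=(i,j)$,
\[
y_i \;=\; \bigvee_{j \in \supp(\bX)} m^{\blacksquare}_{ij}
\;=\; \left( \bigvee_{\substack{j \in \supp(\bX)\\ m^{\blacksquare}_{ij}\neq\blacksquare}} m_{ij} \right)
\;\vee\;
\left( \bigvee_{\substack{j \in \supp(\bX)\\ m^{\blacksquare}_{ij}=\blacksquare}} \psi_{g(i,j)} \right).
\]
The left OR is a function of the known entries alone, while the right OR is the only part carrying any dependence on $\bPsi$. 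The point of~\cref{informative-pair} is precisely to detect when that right OR can actually influence $y_i$.

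Next I would invoke the negation of~\cref{informative-pair}: $(\bX,i)$ is not informative exactly when at least one of the two defining conditions fails, and I would treat the two cases separately. If the first condition fails, then no $j \in \supp(\bX)$ has $m^{\blacksquare}_{ij}=\blacksquare$, so the right OR above is empty and $y_i = \bigvee_{j\in\supp(\bX)} m_{ij}$ depends only on known entries; it involves no coordinate of $\bPsi$. If instead the second condition fails, then there exists $j'\in\supp(\bX)$ with $m_{ij'}=1$ (a \emph{known} one, since erased entries are $\blacksquare$), whence the left OR already equals $1$ and $y_i=1$ irrespective of $\bPsi$. In either case $y_i$ is constant given $(\mM^{\blacksquare},\bX)$, and the conclusion follows from $I(\Psi; y_i) = H(y_i) - H(y_i \mid \Psi) = 0 - 0 = 0$, a constant variable having zero entropy both marginally and conditionally.

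The step I expect to be the main obstacle is not the algebra but fixing the probabilistic model cleanly: I must make explicit that the relevant randomness is confined to $\bPsi$ (with $\bX$ fixed and $\mM^{\blacksquare}$ observed), so that the symbol $\Psi$ appearing in $I(\Psi; y_i)$ is understood as the random vector of true values $\bPsi$ at the erased positions. I also need to justify that the ``known'' entries $m_{ij}$ with $j\in\supp(\bX)$ in the first case are genuinely determined rather than themselves lying in $\overline{\Psi}$ — but this is exactly what the failure of the first condition guarantees. Once the decomposition of $y_i$ into a known part and a $\bPsi$-part is pinned down, the claim is immediate, because ``not informative'' rules out precisely the situation in which some missing entry could flip the outcome.
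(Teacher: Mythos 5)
Your proposal is correct, and it rests on the same underlying dichotomy as the paper's proof --- the negation of \cref{informative-pair} splits into (i) no erased entry of row $i$ sits at a column of $\supp(\bX)$, and (ii) some non-erased entry $m_{ij'}=1$ with $j'\in\supp(\bX)$ --- but the information-theoretic bookkeeping you do on top of that dichotomy is genuinely different, and in fact sounder. The paper proceeds coordinate-by-coordinate: it factorizes $\Pr(y_i=0)$ over the non-erased entries at defective columns, deduces $\Pr(y_i\mid\psi_a)=\Pr(y_i)$ for each single erased entry $\Psi_a$, hence $I(\Psi_a;y_i)=0$, and then sums these via the claimed identity $I(\Psi;y_i)=\sum_{\Psi_a\in\Psi}I(\Psi_a;y_i)$, justified only by independence of the erased entries. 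That identity is not valid in general: for independent uniform bits $\psi_1,\psi_2$ and $y=\psi_1\oplus\psi_2$ one has $I(\psi_1;y)=I(\psi_2;y)=0$ yet $I(\psi_1,\psi_2;y)=1$, so per-coordinate zero mutual information plus independence does not by itself give joint zero mutual information. Your argument bypasses this step entirely: fixing $(\mM^{\blacksquare},\bX)$ and observing that in case (i) $y_i$ is the OR of known entries only, while in case (ii) it is forced to $1$, you conclude $y_i$ is a degenerate random variable, so $H(y_i)=H(y_i\mid\bPsi)=0$ and $I(\bPsi;y_i)=0$ against the \emph{whole} vector of erased values at once --- which is exactly what the theorem asserts. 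Your concern about pinning down the probability model is also well placed, and your resolution agrees with what the paper does implicitly (its case (ii) sets $\Pr(m_{ij}=0)=0$, which only makes sense after conditioning on the observed entries of $\mM^{\blacksquare}$); you simply make that conditioning explicit. In short: same case analysis, but your joint degeneracy argument simultaneously simplifies the proof and repairs a gap in the paper's summation step.
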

\begin{proof}

To prove this theorem, we first define $\overline{\cX}_i := { (i, j) \mid j \in \supp(\mG(i, :)) \cap \supp(\bX) }$ to be the set of corresponding entries of the defective items in row $i$. Then, if a pair $(\bX, i)$ is not informative, we have
\begin{equation}
    \Pr(y_i = 0) = \prod_{(i, j) \in(\overline{\cX}_i \setminus \overline{\Psi})} \Pr(g_{ij}=0).
    \label{eqn:not_info}
\end{equation}

Indeed, based on~\cref{informative-pair}, if $(\bX, i)$ is not informative, one of the two following possibilities must happen:
\begin{itemize}
\item For all $j_0 \in \supp(\bX)$, $(i, j_0) \not\in \overline{\Psi}$. It is equivalent that row $i$ does not have any missing entry. In other words, $\overline{\cX}_i \setminus \overline{\Psi} = \overline{\cX}_i$. Then we get
    \begin{equation}
    \prod_{(i, j) \in (\overline{\cX}_i \setminus \overline{\Psi})} \Pr(g_{ij}=0) = \prod_{(i, j) \in \overline{\cX}_i} \Pr(g_{ij}=0) = \Pr(y_i=0).
    \end{equation}
\item There exists $j \in \supp(\bX)$ such that $g_{ij} = 1$ and $(i, j) \not\in \overline{\Psi}$. It is straightforward that $(i, j) \in \overline{\cX}_i \setminus \overline{\Psi}$. Then $\Pr(g_{ij} = 0) = 0$. This implies $\prod_{(i, j) \in(\overline{\cX}_i \setminus \overline{\Psi})} \Pr(g_{ij}=0) = 0$. On the other hand, because $y_i = \bigvee_{(i, j) \in \overline{\cX}_i} g_{ij}$, we get $y_i = 1$ then $\Pr(y_i = 0) = 0$. Eq.~\cref{eqn:not_info} thus holds.
\end{itemize}

Now, we are ready to prove the theorem. Consider $\Psi_a \in \Psi$, since all the missing entries are generated independently, we get
    \begin{equation}
        I(\Psi; y_i) = \sum_{\Psi_a \in \Psi} I(\Psi_a;y_i).
    \end{equation}

Therefore, if $I(\Psi_a; y_i)=0$ for all $\Psi_a \in \Psi$, then $I(\Psi; y_i)=0$. Indeed, we have:
    \begin{align}
        I(\Psi_a; y_i) &= \displaystyle\sum_{\alpha \in \{0,1\}} \displaystyle\sum_{\beta \in \{0,1\}} \Pr(\psi_a=\alpha,y_i=\beta) \log{\frac{\Pr(y_i=\beta|\psi_a=\alpha)}{\Pr(y_i=\beta)} }. \label{eqn:conciseI}
    \end{align}

By combining~\cref{eqn:not_info} and the fact that every entry in $\mM$ is generated independently we have:
    \begin{align*}
        \Pr(y_i|\psi_a) = \prod_{(i, j) \in (\overline{\cX}_i\setminus \overline{\Psi})} \Pr(g_{ij}=0| \psi_a) = \prod_{(i, j) \in (\overline{\cX}_i \setminus \overline{\Psi})} \Pr(g_{ij}=0) = P(y_i).
    \end{align*}

This makes~\cref{eqn:conciseI} become
\begin{align}
        I(\Psi_a; y_i) = \displaystyle\sum_{\alpha \in \{0,1\}} \displaystyle\sum_{\beta \in \{0,1\}} \Pr(\psi_a=\alpha,y_i=\beta) \log{1} = 0.
    \end{align}
\par This completes our proof.
\end{proof}

For a given $\bX \in \operatorname{col}(\mathbf{X})$, when finding the vector induced by missing entries based on the converted missing matrix$\Gamma$ and the converted missing vector $\bV$, we capture all the information about all the pairs $(\bX,i)$ that are informative for $i\in [t]$. In particular, when $(\bX, i)$ is informative, we get:

\begin{equation}
    y_i = \bigvee_{(i, j) \in \overline{\cX}_i \cap \overline{\Psi} } g_{ij}.
\end{equation}

\par It should be noted that the matrix $\Gamma$ and the vector $\mathbf{v}$ are treated as additional information in our matrix completion problem. In scenarios where $\Gamma$ and $\bV$ are insufficient to recover $\Psi$, additional constraints must be imposed on $\mM$ to enable full recovery. We emphasize that there are many such scenarios; in these cases, the number of informative pairs may be insufficient, or there may be no informative pairs at all. An example of such a case is presented in Claim \ref{claim infe} in Appendix~\ref{app:alg:infeasible}.

\subsection{The computational complexity}

In this section, we prove that the MCGT problem is NP-complete as follows.

\begin{theorem}
 The MCGT problem is NP-complete.
\label{thm:complexity}
\end{theorem}

\begin{proof}
    Consider the MCGT problem as described in Section~\ref{sec:intro}. Let $\Gamma$ and $\bV$ be the output of $\mathrm{missingBMF2GT}$($\mG,\mX,\mY,t)$. Due to Corollary~\ref{cor:GammaV}, we have $\Gamma \odot \bPsi = \bV$. On the other hand, because of Theorem~\ref{thm:mutual_info}, finding missing entries in the missing matrix $\mG$ is equivalent to finding $\bPsi$ given $\Gamma$ and $\bV$. Therefore, the hardness of the MCGT problem is equivalent to the hardness of finding $\bPsi$ given $\Gamma$ and $\bV$. Since the converted missing matrix $\Gamma$ is generated randomly, it can considered as a general measurement matrix over the sample space of measurement matrices. Therefore, by using Theorem~\cite[Theorem 6.3.1]{du2000combinatorial} (with $k = 1$), recovering $\bPsi$ given $\Gamma$ and $\bV$ is an NP-complete problem. Thus the MCGT problem is also an NP-complete problem.
\end{proof}

\section{On number of rows in converted missing matrices}
\begin{figure}[!ht]
    \centering
    \scalebox{0.9}{
    \begin{tikzpicture}[
        node distance=1cm and 1cm, 
        every node/.style={font=\small, align=center},
        box/.style={rectangle, draw, rounded corners=6pt, minimum width=4cm, minimum height=1cm, thick},
        theorem/.style={box, fill=red!20},
        lemma/.style={box, fill=blue!15},
        arrow/.style={->, very thick, darkgray} 
    ]
    
    \node (lemma1) [lemma] {\scalebox{0.8}{\textbf{Correlation between total}}\\ \scalebox{0.8}{\textbf{and per-test informative pairs}} \\\\ ~\cref{lem:oneRow}};
    \node (theorem3) [theorem, right=of lemma1] {\scalebox{0.8}{\textbf{Expected informative pair}}\\\scalebox{0.8}{ \textbf{per test with one sample} }\\\\ \cref{thm:s=1}};
    \node (lemma2) [lemma, right=of theorem3] {\scalebox{0.8}{\textbf{Raw calculations for}}\\ \scalebox{0.8}{\textbf{expected informative pair per test} }\\\\ \cref{general_E_for}};
    \node (theorem4) [theorem, right=of lemma2] {\scalebox{0.8}{\textbf{Formula for the expectation}}\\ \scalebox{0.8}{\textbf{of the exact number of rows}} \\\\ \cref{theo5}};
    
    \node (lemma3) [lemma, above=of lemma2, yshift=0.3cm] {\scalebox{0.8}{\textbf{Probability of 2 samples}}\\ \scalebox{0.8}{\textbf{being identical with respect}}\\ \scalebox{0.8}{\textbf{to a given test} }\\\\ \cref{same_pair_lemma}};
    \node (lemma4) [lemma, below=of lemma2, yshift=-0.3cm] {\scalebox{0.8}{\textbf{Calculations of the phi}}\\ \scalebox{0.8}{\textbf{function given in Lemma 2}} \\\\ \cref{phi_cal}};

    \draw[arrow] (lemma1.east) -- (theorem3.west) node[midway, above] {(1)};
    \draw[arrow] (theorem3.east) -- (lemma2.west) node[midway, above] {(2)};
    \draw[arrow] (lemma2.east) -- (theorem4.west) node[midway, above] {(3)};
    \draw[arrow] (lemma3.south) -- (theorem4.north) node[midway, above] {(4)};
    \draw[arrow] (lemma4.north) -- (theorem4.south) node[midway, above] {(5)};

    \end{tikzpicture}}
    \caption{The process of proving the formula for the exact number of rows in theconverted missing matrix}
    \label{fig:Exact row proc}
\end{figure}

    \par It is well known in group testing that the more rows a measurement matrix has, the better the chance we have of recovering the defective items. Therefore, our goal is to approximate the number of rows in the converted missing matrix $\Gamma$. We begin by simplifying the problem through a connection between the total and per-test informative pairs, as established in \cref{lem:oneRow}. This connection allows us to split the expectations and derive \cref{thm:s=1}. Next, we apply the inclusion-exclusion principle to extend \cref{thm:s=1} to the more general case in \cref{general_E_for}. Finally, by incorporating results from \cref{same_pair_lemma} and \cref{phi_cal}, we derive the exact expression for the expectation, as stated in \cref{theo5}. This overall process is illustrated in \cref{fig:Exact row proc}.

\subsection{On exact number of rows in missing matrices}
\label{sec:estimation}

Since it is redundant to have duplicated rows in $\Gamma$ (with the same test outcomes) in Theorem~\ref{thm:construction}, we define the concepts of different and identical informative pairs as follows.

\begin{definition} [Identical informative pairs]
    For an informative pair $(\bX, i)$, let $\Psi_{(\bX,i)} \subseteq \overline{\Psi}$ be the set of all missing entries lying on row $i$ of $\mM$ such that for all $(i, j) \in \Psi_{(\bX,i)}$, $x_j=1$. Then, for two informative pairs $(\bX, i_1)$ and $(\bX^\prime, i_2)$, they are identical if and only if $\Psi_{(\bX, i_1)} \equiv \Psi_{(\bX^\prime, i_2)}$.
    \label{dif-same-def}
\end{definition}

\par From this definition, two informative pairs $(\bX, i_1)$ and $(\bX^\prime, i_2)$ are different if and only if $\Psi_{(\bX, i_1)} \not\equiv \Psi_{(\bX^\prime, i_2)}$. This can be interpreted as follows: in the process of constructing the converted missing matrix $\Gamma$, these two pairs create two different rows, i.e., two rows that differ in at least one column. This definition is later used to estimate the number of rows in $\Gamma$. It should also be noted that if $i_1 \neq i_2$, then the pairs $(\bX, i_1)$ and $(\bX', i_2)$ will always be different for all $\bX$ and $\bX'$, since their missing entries sets belong to different rows and thus do not intersect.

 Given $\bX, \bY$, and $\bG$, we aim to find the expected value of the number of rows in $\Gamma$. Since the entries in $\mG$ are independently and identically distributed, we can utilize the linearity property of the expected value as follows.

\begin{lemma}
Let $t, \mathbf{X}, \Psi$ be variables defined in~\cref{sec:intro}. Let $\cH \subseteq \operatorname{col}(\mathbf{X})$ be the set of pairwise different informative pairs generated by all tests and $\mathbf{X}$, i.e., $\forall \bX \neq \bX^\prime \in \cH$ and $\forall i \neq i^\prime \in [t]$, $\Psi_{(\bX, i)} \not\equiv \Psi_{(\bX^\prime, i^\prime)}$. For $i \in [t]$, let $\mathcal{X}_i \subseteq \operatorname{col}(\mathbf{X})$ be the set of pairwise different informative pairs generated by row $i$ and columns in $\operatorname{col}(\mathbf{X})$, i.e., $\forall \bX \neq \bX^\prime \in \mathcal{X}_i$, $\Psi_{(\bX, i)} \not\equiv \Psi_{(\bX^\prime, i)}$. Set $h := |\cH|$ and $\eta_i = |\mathcal{X}_i|$. For any $\tau \in [t]$, we have
\begin{equation}
    \bbE[h] = t \bbE[\eta_\tau].
\end{equation}
\label{lem:oneRow}
\end{lemma}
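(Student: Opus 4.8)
The plan is to prove the stronger \emph{deterministic} identity $h = \sum_{c=1}^{t} \eta_c$, valid for every realization of $\mM$, $\Psi$, and $\chi$, and then obtain the claimed equation by taking expectations and invoking the row-symmetry of the i.i.d.\ Bernoulli model. Once $h=\sum_c \eta_c$ is in hand, linearity of expectation gives $\bbE[h\mid s]=\sum_c \bbE[\eta_c\mid s]$, and the identical distribution of the rows collapses this sum to $t\,\bbE[\eta_\tau\mid s]$.

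First I would analyze the support of the row $\mathbf{g}$ that an informative pair $(\bX,c)$ contributes to $\Gamma$. By the construction in \cref{thm:construction}, $g_z=1$ exactly when the $z$-th erased entry $(i_z,j_z)$ satisfies $i_z=c$ and $x_{j_z}=1$; hence $\supp(\mathbf{g})$ is the index set of $\Psi_{(\bX,c)}$ from \cref{dif-same-def} and lies inside the block $B_c := \{ z \in [r] \mid i_z = c \}$ of erased entries on row $c$. Since each erased entry sits on a unique row, the blocks $B_1,\dots,B_t$ partition $[r]$. The decisive point is that the first bullet of \cref{informative-pair} forces some $j_0\in\supp(\bX)$ with $m_{cj_0}=\blacksquare$, so $\supp(\mathbf{g})\neq\emptyset$ for every informative pair.

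Combining these facts, two informative pairs on different rows $c\neq c'$ can never yield the same row of $\Gamma$: their supports lie in the disjoint blocks $B_c$ and $B_{c'}$ and are both nonempty, so they cannot coincide; equivalently $\Psi_{(\bX,c)}\not\equiv\Psi_{(\bX',c')}$ whenever $c\neq c'$. Consequently the set $\cH$ of pairwise-distinct informative patterns decomposes as the disjoint union $\cH=\bigsqcup_{c=1}^{t}\cX_c$, and counting yields $h=\sum_{c=1}^{t}\eta_c$. Taking expectations and using linearity gives $\bbE[h\mid s]=\sum_{c=1}^{t}\bbE[\eta_c\mid s]$. Because the entries $m_{ij}$ and the erasure indicators are i.i.d.\ across all positions, the law of $\mM^{\blacksquare}$ is invariant under permuting its rows, and the samples in $\chi$ are drawn independently of this row structure; since $\eta_c$ is a function of row $c$ of $\mM^{\blacksquare}$ together with $\chi$, all the $\eta_c$ share a single marginal distribution, so $\bbE[\eta_c\mid s]=\bbE[\eta_\tau\mid s]$ for every $c,\tau\in[t]$, and the sum becomes $t\,\bbE[\eta_\tau\mid s]$.

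I expect the only genuinely delicate step to be the disjointness argument showing that patterns from distinct matrix rows never collide, together with the small but essential verification that $\supp(\mathbf{g})\neq\emptyset$ follows from informativeness; without this nonemptiness the empty pattern could in principle be shared across rows and break the clean decomposition. Everything downstream is linearity of expectation plus the exchangeability of the rows, which is immediate from the i.i.d.\ model and the fact that $\chi$ does not privilege any row.
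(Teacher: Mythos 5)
Your proof is correct and takes essentially the same route as the paper: the deterministic decomposition $h=\sum_{c\in[t]}\eta_c$, then linearity of expectation, then row-exchangeability of the i.i.d.\ model to collapse the sum to $t\,\bbE[\eta_\tau\mid s]$. The only difference is one of rigor: you explicitly prove the decomposition (supports of patterns from distinct rows lie in disjoint blocks and are nonempty by the first condition of \cref{informative-pair}, so patterns from different rows never coincide), whereas the paper's proof asserts this fact in a single sentence without the nonemptiness observation.
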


\begin{proof}
    We have:
    \begin{align*}
        \bbE[h] = \bbE\left[\displaystyle\sum_{\tau \in [t]} \eta_\tau\right] = \displaystyle\sum_{\tau \in t} \bbE[\eta_\tau] = t \bbE[\eta_\tau].
    \end{align*}
    \par The first equality comes from the fact that for any $i \neq j \in [t]$ and for any $\bX, \bY \in \operatorname{col}(\mathbf{X})$, if both $(\bX, i)$ and $(\bY,j)$ are informative, they are different. The second and third equations are due to each entry is independent and identically distributed.
\end{proof}

\par Because of~\cref{lem:oneRow}, instead of estimating $\bbE[h | s]$ by dealing with the whole matrix $\mM$, we will only work with one row, denoted as $\tau$, in $\mM$. For consistency and easy understanding, we replace $\eta_\tau$ by $\omega$ and our task is to estimate $\bbE[\omega]$.

\subsubsection{On \texorpdfstring{$s=1$}{}}

When $s = 1$, $\bbE[\omega]$ can be calculated as follows.

\begin{lemma}
    Let $p,q, d,s, \mathbf{X}$ be variables that have been defined in \cref{sec:intro}. Suppose $s = 1$ and $\bX \in \operatorname{col}(\mathbf{X})$. We have:
    \begin{equation}        
    \bbE[\omega = 1] = \Pr((\bX,\tau) \mbox{ is informative}) = (1-p+pq)^d-\left[(1-p)(1-q)\right]^d.
    \end{equation}
    \label{thm:s=1}
\end{lemma}
\begin{proof}
    Since $|\operatorname{col}(\mathbf{X})| = 1$ and $\bX \in \operatorname{col}(\mathbf{X})$, we have:
    \begin{align*}
        \bbE[\omega]&= 0 \cdot \Pr(\omega = 0 | s) + 1\cdot \Pr(\omega = 1 | s) = \Pr((\bX,\tau) \text{ is informative}) \\
        &=(1-p+pq)^d-\left[(1-p)(1-q)\right]^d.
    \end{align*}
    \par The first part is to satisfy the second condition of~\cref{informative-pair}. The second part is to remove the case when all entries at row $\tau$ are zeros, and therefore the first condition of~\cref{informative-pair} holds.
\end{proof}

\subsubsection{On \texorpdfstring{$s \geq 1$}{s 1}}

First, we derive a formal expression for the probability that a subset of $\operatorname{col}(\mathbf{X})$ contains elements that are all informative and pairwise identical with respect to a random generated row $\tau \in [t]$.
\begin{definition} \label{def theta_=}
For all $\theta=\{\bX_{\alpha_1},\bX_{\alpha_2},\dots,\bX_{\alpha_z}\}\subseteq\operatorname{col}(\mathbf{X})$ (where $\mathbf{X}$ is defined in \cref{sec:intro}), we denote $\Pr(\bX_{\alpha_1}=\bX_{\alpha_2}=\dots=\bX_{\alpha_z})=\Pr(\theta_=)$ the probability that $( \bX_{\alpha_1}, \mathbf{\tau}),\dots,(\bX_{\alpha_z}, \mathbf{\tau})$ are all informative and are pairwise identical, with respect to the random generating of row $\tau$ of $\mathbf{M}$ following \cref{ran_prob}.
\end{definition}

Similar to~\cref{def theta_=}, the following definition provides a formal expression for the expected number of subsets of $\operatorname{col}(\mathbf{X})$ in which every element is both informative and pairwise identical with respect to a random generated row $\tau \in [t]$, considering all possible subsets of $\operatorname{col}(\mathbf{X})$. 
\begin{definition} \label{def expect chi_z}
For any positive integer \( l \) and \( \mathbf{X} \), we define \(\bbE[\mathbf{X}_l]\) as the expected number of \( l \)-element subsets of \( \operatorname{col}(\mathbf{X}) \), denoted by \( (\bX_{\beta_1}, \dots, \bX_{\beta_l}) \), such that the tuples \( (\bX_{\beta_1}, \tau), \dots, (\bX_{\beta_l}, \tau) \) are both informative and pairwise identical.
\end{definition}

Then, the exact value of~\eqref{lem:oneRow} dividing by the number of rows in the measurement matrix $\mM$ is summarized in the following theorem and is proved in Appendix \ref{append: A}.

\begin{theorem}\label{theo5}
Let $t, n, d, p, q$, $\mathbf{X}$ and $\cT_d$ be defined in~\cref{sec:intro}. Let $\cH \subseteq \operatorname{col}(\mathbf{X})$ be the set of pairwise different informative pairs generated by all tests and $\mathbf{X}$, i.e., $\forall \bX \neq \bX^\prime \in \cH$ and $\forall i \neq i^\prime \in [t]$, $\Psi_{(\bX, i)} \not\equiv \Psi_{(\bX^\prime, i^\prime)}$. For any $\tau \in [t]$, let $\mathcal{X}_\tau \subseteq \operatorname{col}(\mathbf{X})$ be the set of pairwise different informative pairs generated by row $\tau$ and columns in $\operatorname{col}(\mathbf{X})$, i.e., $\forall \bX \neq \bX^\prime \in \mathcal{X}_\tau$, $\Psi_{(\bX, \tau)} \not\equiv \Psi_{(\bX^\prime, \tau)}$. Set $h:= |\cH|$ and $\omega := |\mathcal{X}_\tau|$. Then
    \begin{equation}    \label{theo5 eq}    
        \dfrac{\bbE[h]}{t}=\bbE[\omega]=s\Upsilon(d)-\dfrac{\binom{n}{d}\binom{\binom{n}{d}-2}{s-2}}{2\binom{\binom{n}{d}}{s}}\cdot \left[\displaystyle\sum_{i=0}^{d-1}\binom{d}{i}\binom{n-d}{d-i}\Upsilon(i)\right]+\displaystyle\sum_{c=3}^{s}(-1)^{c+1}\dfrac{\displaystyle\sum_{\operatorname{col}(\mathbf{X})\subseteq \cT_d}\displaystyle\sum_{\theta\subseteq\operatorname{col}(\mathbf{X}),|\theta|=c}\Pr(\theta_=)}{\binom{\binom{n}{d}}{s}},    
    \end{equation}
where
\begin{align}\label{Ups_equ}
    \Upsilon(u)=\left[(1-p)(1-q)\right]^{2d-2u}(1-p+pq)^{u}-\left[(1-p)(1-q)\right]^{2d-u},
\end{align}
and $\Pr(\theta_=)$ is the probability that for every $\mathbf{\alpha}, \mathbf{\beta} \in \theta \subseteq \operatorname{col}(\mathbf{X})$, two informative pairs $(\mathbf{\alpha}, \tau)$ and $(\mathbf{\beta}, \tau)$ are identical.
\label{thm:Formula_E}
\end{theorem}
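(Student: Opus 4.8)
The plan is to reduce the whole statement to a single inclusion--exclusion over subsets of the sample set $\chi$, so that the three terms of \eqref{theo5 eq} emerge as the $c=1$, $c=2$, and $c\ge 3$ contributions of one alternating sum. First I would fix the random row $\tau$ and note that identity of informative pairs, $\Psi_{(\bX,\tau)}\equiv\Psi_{(\bX',\tau)}$, is an equivalence relation on the informative pairs arising from $\chi$: it is literally equality of subsets of $\overline{\Psi}$, hence transitive. Consequently $\omega$ counts the distinct equivalence classes, and for a subset $\theta\subseteq\chi$ the property ``all informative and pairwise identical'' is the same as ``all elements share one common value of $\Psi_{(\cdot,\tau)}$.'' Let $N_c$ denote the number of $c$-element subsets of $\chi$ that are all informative and pairwise identical. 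A class holding exactly $m\ge 1$ of the samples contributes $\binom{m}{c}$ to $N_c$, so summing the elementary identity $\sum_{c=1}^{m}(-1)^{c+1}\binom{m}{c}=1$ over all classes gives the deterministic identity
\begin{equation}
\omega=\sum_{c=1}^{s}(-1)^{c+1}N_c. \nonumber
\end{equation}
Taking expectations over $\tau$ and over the uniform random $s$-subset $\chi$, and recognizing that $\bbE[N_c\mid s]$ is precisely $\bbE[\chi_c]$ of \cref{def expect chi_z}, yields $\bbE[\omega\mid s]=\sum_{c=1}^{s}(-1)^{c+1}\bbE[\chi_c]$. It then remains to evaluate $\bbE[\chi_1]$ and $\bbE[\chi_2]$ in closed form and to leave $\bbE[\chi_c]$ for $c\ge 3$ in the double-sum form of \cref{def theta_=}.

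The $c=1$ term is immediate: one sample is informative with marginal probability $\Upsilon(d)=(1-p+pq)^d-[(1-p)(1-q)]^d$ by \cref{thm:s=1}, so by linearity $\bbE[\chi_1]=s\,\Upsilon(d)$, the leading term. For $c=2$ I would compute, for a fixed \emph{unordered} pair of distinct $\bX,\bX'\in\cT_d$ with $|\supp(\bX)\cap\supp(\bX')|=i$, the probability over $\tau$ that $(\bX,\tau)$ and $(\bX',\tau)$ are both informative and identical. The structural observation is that a missing entry lying in the symmetric difference of the two supports would belong to exactly one of $\Psi_{(\bX,\tau)},\Psi_{(\bX',\tau)}$; hence identity forces every entry of $\tau$ on each of the two size-$(d-i)$ private blocks to be a known $0$ (probability $[(1-p)(1-q)]^{d-i}$ apiece), while on the size-$i$ common block every entry must be $0$ or $\blacksquare$ with at least one $\blacksquare$ (probability $(1-p+pq)^i-[(1-p)(1-q)]^i$). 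Multiplying the three independent blocks gives exactly $\Upsilon(i)=[(1-p)(1-q)]^{2d-2i}(1-p+pq)^i-[(1-p)(1-q)]^{2d-i}$. Counting ordered pairs of overlap $i$ as $\binom{n}{d}\binom{d}{i}\binom{n-d}{d-i}$ (so half as many unordered pairs, with $i$ running only to $d-1$ since $i=d$ means $\bX=\bX'$) and multiplying by the probability $\binom{\binom{n}{d}-2}{s-2}/\binom{\binom{n}{d}}{s}$ that a fixed pair lands in a uniform $s$-subset, I obtain
\begin{equation}
\bbE[\chi_2]=\frac{\binom{n}{d}\binom{\binom{n}{d}-2}{s-2}}{2\binom{\binom{n}{d}}{s}}\sum_{i=0}^{d-1}\binom{d}{i}\binom{n-d}{d-i}\Upsilon(i), \nonumber
\end{equation}
which, carrying the sign $(-1)^{2+1}=-1$, is exactly the second term.

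For $c\ge 3$ I would not seek a closed form; instead I write $\bbE[\chi_c]$ directly as the average over uniform $s$-subsets, $\bbE[\chi_c]=\binom{\binom{n}{d}}{s}^{-1}\sum_{\chi\subseteq\cT_d}\sum_{\theta\subseteq\chi,\,|\theta|=c}\Pr(\theta_=)$, matching the residual sum in \eqref{theo5 eq}. I expect the main obstacle to be the $c=2$ probability computation, i.e.\ justifying rigorously that the per-pair event factorizes as $[(1-p)(1-q)]^{2(d-i)}\big((1-p+pq)^i-[(1-p)(1-q)]^i\big)=\Upsilon(i)$: one must argue that identity of the two $\Psi$-sets is equivalent to ``no missing and no known-$1$ entries off the common support, together with at least one $\blacksquare$ inside it,'' and separately that the two private blocks and the common block are mutually independent under \cref{ran_prob}. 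A secondary delicate point is confirming that the alternating-sum identity holds in each realization (so exchanging expectation and summation over $c$ is legitimate) and that ``pairwise identical'' coincides with ``belonging to a single class,'' which rests on the transitivity noted at the outset.
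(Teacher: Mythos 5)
Your proposal is correct and takes essentially the same route as the paper's own proof: the same inclusion--exclusion over subset sizes (the paper's \cref{general_E_for}), the same block-factorization argument giving the pairwise-identity probability $\Upsilon(i)$ (\cref{same_pair_lemma}), and the same counting of pairs by support overlap (\cref{phi_cal}), with the first equality $\bbE[h|s]/t=\bbE[\omega|s]$ supplied by the paper's earlier \cref{lem:oneRow}, which you implicitly rely on. The only notable difference is that your per-realization identity $\omega=\sum_{c\ge 1}(-1)^{c+1}N_c$, derived from the equivalence-class structure and $\sum_{c=1}^{m}(-1)^{c+1}\binom{m}{c}=1$, spells out rigorously the inclusion--exclusion step that the paper invokes without detailed justification.
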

\par Additionally, to better characterize the range of $\Upsilon(d)$, we derive bounds for this quantity in \cref{bound for Ups} in Appendix \ref{append: A}.

\subsection{On an approximate number of rows in converted missing matrices}
\label{sec:approximate}

\begin{figure}[!ht]
    \centering
    \begin{tikzpicture}[
            node distance=1cm and 1cm, 
            every node/.style={font=\small, align=center},
            box/.style={rectangle, draw, rounded corners=6pt, minimum width=4cm, minimum height=1cm, thick},
            theorem/.style={box, fill=red!20},
            lemma/.style={box, fill=blue!15},
            arrow/.style={->, very thick, darkgray} 
        ]
    
        \node[lemma] (L6) {\scalebox{0.8}{\textbf{Increasing property of}} \\ \scalebox{0.8}{\textbf{binomial product with sufficiently large}} $n$ \\\\ \cref{binom inq lemma}};
        \node[lemma, below=of L6] (L7) {\scalebox{0.8}{\textbf{Increasing property of}} \\ \scalebox{0.8}{\textbf{Upsilon function}} \\\\ \cref{Ups inc lemma}};
        \node[lemma, right=of L6] (L8) {\scalebox{0.8}{\textbf{Upper bound for Omega term}} \\ \scalebox{0.8}{\textbf{given in the exact formula of}} \\ \scalebox{0.8}{\textbf{rows of the converted missing matrix}} \\\\ \cref{lem Omega}};
        \node[lemma, below=of L8] (L5) {\scalebox{0.8}{\textbf{Inclusion-Exclusion type bound for}} \\ \scalebox{0.8}{\textbf{the number of rows in the converted missing matrix}} \\ \scalebox{0.8}{\textbf{with complex Omega term}} \\\\ \cref{Inclusion-Exclusion Bound}};
        \node[theorem, right=of L8] (T3) {\scalebox{0.8}{\textbf{Closed-form bound for the number}} \\ \scalebox{0.8}{\textbf{of rows in the converted missing matrix}} \\\\ \cref{Expectation_bound}};
    
        \draw[arrow] (L6.east) -- (L8.west) node[midway, above] {(1)};
        \draw[arrow] (L7.east) -- (L8.south west) node[midway, above] {(2)};
        \draw[arrow] (L8.east) -- (T3.west) node[midway, above] {(3)};
        \draw[arrow] (L5.north east) -- (T3.south west) node[midway, above] {(4)};
    
    \end{tikzpicture}
    \caption{The process of proving an approximate bound for the number of row in the converted missing matrix.  }
    \label{fig:Upper Bound row proc}
\end{figure}
\par \cref{theo5} provides an exact calculation for $\bbE[h]$. However, the final component of this formula is quite complex, making its practical computation unrealistic in real-world scenarios. To address this, this section will instead provide bounds for $\bbE[h]$ using much simpler formulas. We begin by deriving two lemmas: one that establishes the monotonicity of the Upsilon function (\cref{Ups inc lemma}), and another that presents a special inequality involving a product of binomial coefficients (\cref{binom inq lemma}). These two results are then used to bound the most complex term in \cref{theo5}, as shown in \cref{lem Omega}. Finally, we apply the Inclusion-Exclusion principle to derive an additional bound in \cref{Inclusion-Exclusion Bound}. Together with the result from \cref{lem Omega}, this allows us to obtain an upper bound on the number of rows in the converted missing matrix, as presented in \cref{Expectation_bound}. This process is shown in \cref{fig:Upper Bound row proc}. As stated, the main result of this section is \cref{Expectation_bound}, which is presented below. The proof is deferred to Appendix \ref{append: B}.

\begin{theorem}  \label{Expectation_bound}
Let $t, n, d,s$ be defined in~\cref{sec:intro}, $\Upsilon$ be defined in \cref{Ups_equ}, and $h$ be the number of rows of the converted missing matrix. Then, we have:
    \begin{align}\label{theo6 eq1}
    s\Upsilon(d)\left\{1-\dfrac{s-1}{2}\cdot        
     \dfrac{\displaystyle\sum_{i=0}^{d-1}\binom{d}{i}\binom{n-d}{d-i}\Upsilon(i)}{\displaystyle\sum_{i=0}^{d-1}\binom{d}{i}\binom{n-d}{d-i}\Upsilon(d)}\right\}\leq \dfrac{\bbE[h]}{t} \leq s\Upsilon(d).
\end{align}
Furthermore, when $n>(d+1)^2$, we have:
\begin{align}\label{sample-bound}
    s\Upsilon(d)\left[1-\dfrac{s-1}{2d}\cdot        
     \dfrac{a^2}{b-a^2}\right]
\leq \dfrac{\bbE[h ]}{t} \leq s\Upsilon(d),
\end{align}
where:
\begin{align*}
    a=(1-p)(1-q), b=1-p+pq, \Upsilon(d)=b^d-a^d.
\end{align*}
\end{theorem}

\subsection{On the concentration of \texorpdfstring{$\bbE[h]$}{\bbE[h]}}

In this section, we will be focusing on seeing how close the approximation of the expected value of rows of the converted missing matrix in \cref{sec:approximate} to its true value. Specifically, for $\delta>0$, we will bound the following term
$$\Pr(|h-\mathbb{E}[h]|>\delta).$$

For a given $\mathbf{X}=\left[\mathbf{x}_1,\dots,\mathbf{x}_s\right]$, recall the definition of $\eta_i\quad i\in[t]$ in \cref{lem:oneRow}, we have
\begin{align*}
    \mathbb{E}[h^2]=\mathbb{E}\left[\left(\displaystyle\sum_{i=1}^t \eta_i\right)^2\right]=\mathbb{E}\left[\displaystyle\sum_{i=1}^t \eta_i^2\right]+\mathbb{E}\left[\displaystyle\sum_{\substack{i\neq j\\i,j\in[t]}}\eta_i\eta_j\right]
\end{align*}
But since $\eta_i$ and $\eta_j$ are identically independent distributed, this leads to  
\begin{align}\label{h2 split}
    \mathbb{E}[h^2]=t\mathbb{E}\left[ \tau^2\right]+t(t-1)\mathbb{E}\left[\tau\right]^2,
\end{align}
where $\omega$ is defined similar to that in \cref{thm:s=1}. Moreover, from the same theorem, $\bbE[\tau]=s\Upsilon(d)$. Thus, our target is to calculate $\bbE[\tau^2]$. For $j \in [s]$, we denote the random variables $\mathbf{y}_j$ that satisfy $\mathbf{y}_j=1$ iff $(\mathbf{x}_j,\tau)$ is informative. Hence,
\begin{align}\label{equa: E[h^2] bound}
    \bbE[\tau^2]\overset{(1)}{\leq}\bbE\left[\left(\sum_{j=1}^s\mathbf{y}_j\right)^2\right]\overset{(2)}{=}s\bbE[\mathbf{y_1}^2]+\sum_{\substack{i\neq j\\i,j\in[s]}}\bbE[\mathbf{y}_i\mathbf{y}_j]\overset{(3)}{=}s\Pr(\mathbf{y}_1=1)+\sum_{\substack{i\neq j\\i,j\in[s]}}\Pr(\mathbf{y}_i=\mathbf{y}_j=1)
\end{align}

Inequality (1) in \cref{equa: E[h^2] bound} arises because informative pairs can create duplicate rows in the converted missing matrix, thereby reducing the number of distinct rows. The subsequent equality (2) hold because $\mathbf{y}_i$ and $\mathbf{y}_j$ are iid\ for all $i \neq j$. The final equality (3) follows by the same argument as in \cref{thm:s=1}. Now, by calculating the probabilities similar to \cref{thm:s=1}, we get
\begin{align*}
    \Pr(\mathbf{y}_1=1)&=\Upsilon(d),\\
    \Pr(\mathbf{y}_i=\mathbf{y}_j=1)&=(1-p+pq)^{2d-\simm(\mathbf{x}_1,\mathbf{x}_2)}+\left[(1-p)(1-q)\right]^{2d-\simm(\mathbf{x}_1,\mathbf{x}_2)}-2\left[(1-p)(1-q)\right]^{d},
\end{align*}
where we recall that $\simm(\mathbf{x}_1,\mathbf{x}_2)=\bX_1^{\top}\bX_2$. Thus, by combining the above with \cref{h2 split} and \cref{equa: E[h^2] bound}, we yield
\begin{align}\label{bound h^2}
     \mathbb{E}[h^2]\leq ts\Upsilon(d) +t(t-1)s^2\Upsilon(d)^2+t\sum_{\substack{i\neq j\\i,j\in[s]}}\mathcal{J}(i,j)
\end{align}
where $\mathcal{J}(i,j)=(1-p+pq)^{2d-\simm(\mathbf{x}_1,\mathbf{x}_2)}+\left[(1-p)(1-q)\right]^{2d-\simm(\mathbf{x}_1,\mathbf{x}_2)}-2\left[(1-p)(1-q)\right]^{d}$
Furthermore, by applying Chebyshev's inequality for concentration, one can show that
$$ \Pr(|h-\bbE[h]|>\delta)\leq \dfrac{Var(h)}{\delta^2}=\dfrac{\bbE[h^2]-\bbE[h]^2}{\delta^2},$$
and thus, combining with \cref{bound h^2} and \cref{theo6 eq1}, for a pre-given $\mathbf{X}=\left[\mathbf{x}_1^{\top},\dots,\mathbf{x}_s^{\top}\right]$, we have
\begin{align}\label{eq: pregiven bound}
        \Pr(|h-\bbE[h]|>\delta)\leq \dfrac{st\Upsilon(d)+t(t-1)s^2\Upsilon(d)^2+t\displaystyle\sum_{\substack{i\neq j\\i,j\in[s]}}\mathcal{J}(i,j)-s^2t^2\Upsilon(d)^2\mathcal{W}(a,b)^2}{\delta^2},
    \end{align}

where we recall that $ \simm(\mathbf{x}_i,\mathbf{x}_j)=\mathbf{x}_i^{\top}\mathbf{x}_j$ and $\mathcal{W}(a,b) = 1-\dfrac{s-1}{2d} \cdot \dfrac{a^2}{b-a^2}$.\\
In order to generalize \cref{eq: pregiven bound} to the case where $\mathbf{X}$ satisfies $\operatorname{col}(\mathbf{X})$ is a random subset of $\mathcal{T}_d$, the set of all binary vectors with weight $d$. We notice that  $\simm(\mathbf{x}_i,\mathbf{x}_j)\leq d$ for all $\mathbf{x}_i,\mathbf{x}_j$. Thus
\begin{align*}
    \mathcal{J}(i,j)\leq \Upsilon(d)\quad \text{for all }i,j
\end{align*}
Combining this with \cref{eq: pregiven bound}, we yield the general bound for the concentration of $h$ as follow
\begin{theorem} \label{theo concentration bound}
    Let $t, n, d,s$ be defined in~\cref{sec:intro}, a and b be defined in \cref{Expectation_bound}, $\Upsilon$ be defined in \cref{Ups_equ}. Additionally, given  where each element in $\operatorname{col}(\mathbf{X})$ is taken uniformly from $\mathcal{T}_d$ and assume $n>(d+1)^2$, for every $\delta>0$, we have 
    \begin{align*}
        \Pr(|h-\bbE[h]|>\delta)\leq \dfrac{t(t-1)s^2\Upsilon(d)^2+ts^2\Upsilon(d)-s^2t^2\Upsilon(d)^2\mathcal{W}(a,b)^2}{\delta^2},
    \end{align*}
    where
    \begin{align*}
        \simm(\mathbf{x}_i,\mathbf{x}_j)&=\mathbf{x}_i^{\top}\mathbf{x}_j\\
        \Upsilon(u) &= \left[(1-p)(1-q)\right]^{2d-2u}(1-p+pq)^{u}-\left[(1-p)(1-q)\right]^{2d-u}\\
        \mathcal{W}(a,b) &= 1-\dfrac{s-1}{2d} \cdot \dfrac{a^2}{b-a^2}
    \end{align*}
\end{theorem}
\begin{remark}
    It is worth noting that the numerator of the LHS bound in \cref{theo concentration bound} can be rewritten as
    \begin{align*}
        ts^2\Upsilon(d)[1-\Upsilon(d)]+t^2s^2\Upsilon(d)^2\left[1-\mathcal{W}(a,b)^2\right]
    \end{align*}
    Thus, by applying the framework of the Central Limit Theorem, one can conclude that
    \begin{enumerate}
        \item When $d$ and $n$ goes to infinity (such as the well-known case where $d=n^\alpha,\alpha>0$ and $n\rightarrow\infty$) then $\Pr(|h-\bbE[h]|>\delta)\rightarrow0\quad\forall\delta>0$. Hence, $h$ converges in distribution to a normal distribution with mean $\bbE[h]$.
        \item This property no longer holds when we consider \( s \to \infty \) instead of \( d \). As \( s \) increases, the number of rows in \( \Gamma \) grows much more rapidly, which gives \( h \) a greater chance of significantly exceeding \( \mathbb{E}[h] \).
    \end{enumerate}
    Nevertheless, even though increasing \( s \) does not guarantee that \( h \) will be arbitrarily close to its expected value, our work shows that doing so increases the success rate of solving the overall matrix completion problem. In most cases, this success rate approaches one. This result is presented in \cref{theo: universal bound for GT} and \cref{remark: go to 1}.
\end{remark}

\section{Bounds for success recovery probability}
\label{sec: Gen bound for SP}

In this section, we present a general bound on the success probability of solving the converted missing matrix, given the initial matrix \( \mM \). We begin by reformulating the definition of the converted missing matrix without altering its properties or the information it provides for the matrix completion problem. This reformulation is presented in \cref{alter def gamma}. Using this new definition, we derive our main result, \cref{theo: universal bound for GT}. To prove \cref{theo: universal bound for GT}, we show \cref{independent of success}, which decomposes the process of solving \( \Gamma \) into multiple group testing sub-problems. Furthermore, we demonstrate in \cref{ber gen} that these smaller group testing problems are also Bernoulli-generated. By combining these insights, we complete the proof for Proposition \ref{theo: general_bound}. Finally, we apply this bound to classical algorithms for the group testing problem. This process is shown in \cref{fig:Upper Bound suc prob proc}. Additionally, after we have derived Proposition \ref{theo: general_bound}, we combine it with some relaxations to get \cref{theo: universal bound for GT}, which is a universal bound for the success probability of solving the group testing problem via the recovered matrix.

\begin{figure}
    \centering
\scalebox{0.775}{\begin{tikzpicture}[
        node distance=1cm and 1cm, 
        every node/.style={font=\small, align=center},
        box/.style={rectangle, draw, rounded corners=6pt, minimum width=3.5cm, minimum height=1cm, thick},
        theorem/.style={box, fill=red!20},
        lemma/.style={box, fill=blue!15},
        definition/.style={box, fill=blue!10},
        result/.style={draw, ellipse, fill=green!20, minimum width=4.5cm, minimum height=1.5cm, thick, scale=0.8},
        arrow/.style={->, very thick, darkgray} 
    ]

    \node[definition] (D5) {\scalebox{0.8}{\textbf{Alternative definition of}} \\ \scalebox{0.8}{\textbf{the converted missing matrix and the converted missing vector}} \\\\ \cref{alter def gamma}};
    \node[theorem, right=of D5] (T6) {\scalebox{0.8}{\textbf{Splitting the overall success}} \\ \scalebox{0.8}{\textbf{probability into success}} \\ \scalebox{0.8}{\textbf{probabilities of sub-group}} \\ \scalebox{0.8}{\textbf{testing problems}} \\\\ \cref{independent of success}};
    \node[theorem, right=of T6] (T7) {\scalebox{0.8}{\textbf{General bound for the}} \\ \scalebox{0.8}{\textbf{success probability of the}} \\ \scalebox{0.8}{\textbf{matrix recovery problem} }\\\\ Proposition \ref{theo: general_bound}};
    \node[lemma, below=of T6] (L9) {\scalebox{0.8}{\textbf{Showing that each sub-group}} \\ \scalebox{0.8}{\textbf{testing problem is}} \\ \scalebox{0.8}{\textbf{Bernoulli-generated}} \\\\ \cref{ber gen}};
    \node[theorem, below=of T7] (L10) {\scalebox{0.8}{\textbf{Success rate of solving GT}} \\ \scalebox{0.8}{\textbf{problem via the}} \\ \scalebox{0.8}{\textbf{recovered matrix}} \\\\ \cref{theo: universal bound for GT}};

    \node[result, right=of T7, yshift=2.5cm] (R1) {\textbf{Bound for }success probability\textbf{ of the} \\ \textbf{matrix recovery problem using} \\ \textbf{COMP algorithm}};
    \node[result, right=of T7] (R2) {\textbf{Bound for success probability of the} \\ \textbf{matrix recovery problem using} \\ \textbf{DD algorithm}};
    \node[result, right=of T7, yshift=-2.5cm] (R3) {\textbf{Bound for success probability of the} \\ \textbf{matrix recovery problem using} \\ \textbf{SSS algorithm}};

    \draw[arrow] (D5.east) -- (T6.west) node[midway, above] {(1)};
    \draw[arrow] (T6.east) -- (T7.west) node[midway, above] {(2)};
    \draw[arrow] (L9.north) -- (T7.south west) node[midway, below] {(3)};
    \draw[arrow] (T7.east) -- (R1.west) node[midway, left] {(4)};
    \draw[arrow] (T7.east) -- (R2.west) node[midway, above] {(5)};
    \draw[arrow] (T7.east) -- (R3.west) node[midway, right] {(6)};
    \draw[arrow] (T7.south) -- (L10.north) node[midway, right] {(6)};

\end{tikzpicture}}
    \caption{The process of proving bounds for the success probability.}
    \label{fig:Upper Bound suc prob proc}
\end{figure}

\subsection{General bound}
\par First, for the purpose of convenience, we adjust how we define the converted missing matrix $\Gamma$ and the converted missing vector $\bV$ so that their number of rows is constant while still being equivalent to the previous definition. For clarity, we denote the newly defined $\Gamma$ and $\mathbf{v}$ by $\tilde{\Gamma}$ and $\tilde{\bV}$, respectively.

\begin{algorithm}
\caption{$\mathrm{missingBMF2GTPlus}(\mM, \mX, \mY, t)$}
\label{alter def gamma}

\begin{algorithmic}[1]
\State \textbf{Input:} Measurement matrix $\mM$, sample matrix $\mathbf{X}$, outcome matrix $\mY$, number of tests $t$.
\State \textbf{Output:} Missing matrix $\tilde{\Gamma}$, converted missing vector $\tilde{\mathbf{v}}$.
\State We begin with the first row in the matrix $\mM$ and the first sample $x_1$ in $\operatorname{col}(\mathbf{X})$.
\State If $(x_1,1)$ is informative, we add a row to $\tilde{\Gamma}$ following the procedure outlined in \cref{thm:construction}.
\State However, if $(x_1,1)$ is not informative, instead of omitting it from $\tilde{\Gamma}$ and $\tilde{\bV}$, we append a row of zeros to $\tilde{\Gamma}$ and a corresponding $0$ to $\tilde{\bV}$. 
\State This process is repeated for all columns in $\operatorname{col}(\mathbf{X})$ and subsequently for all rows in $\mM$.
\end{algorithmic}
\end{algorithm}

\begin{figure}[!ht]
    \centering
    \includegraphics[width=0.8\linewidth]{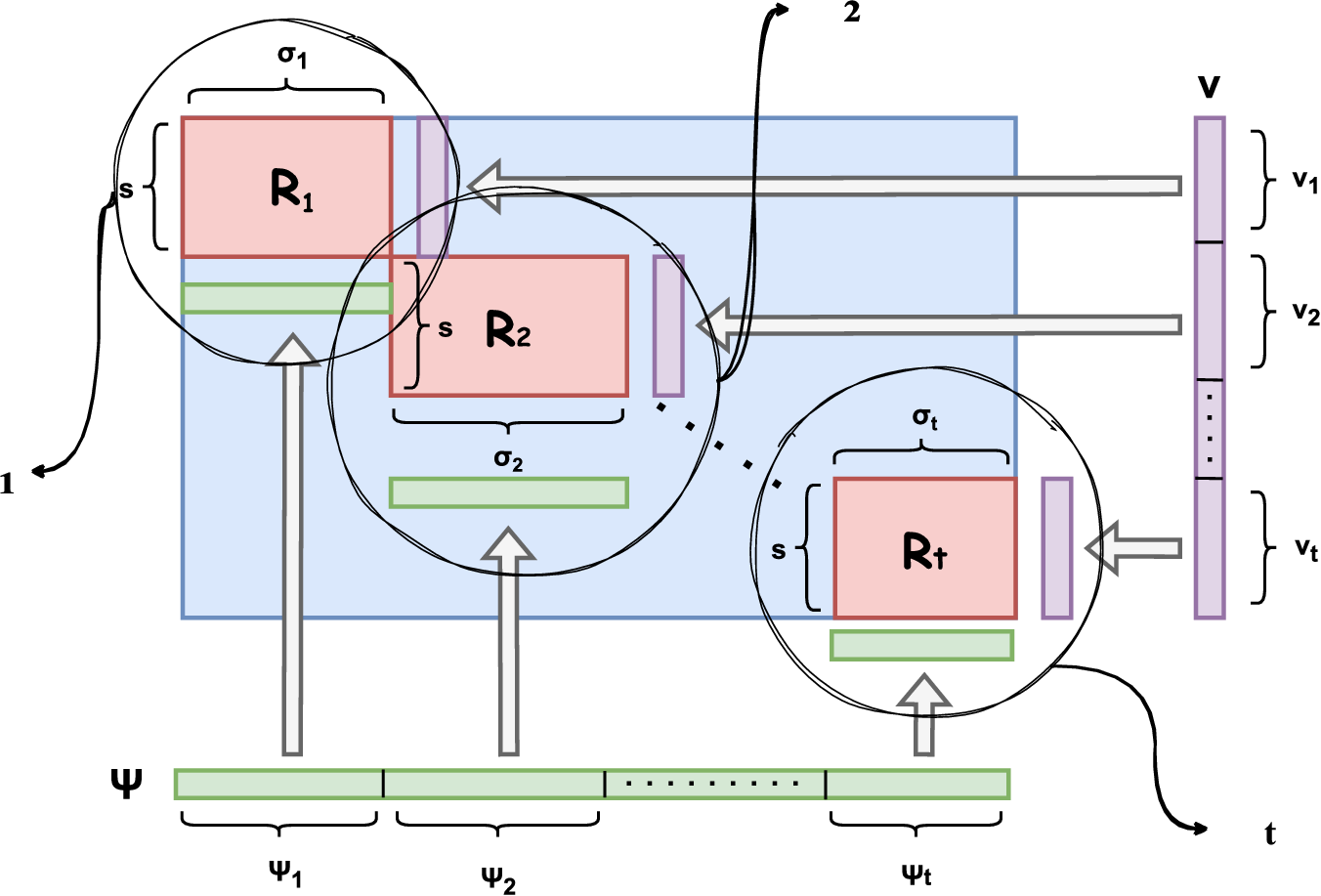}
    \caption{Demonstration of $\tilde{\Gamma}$ and $\tilde{\bV}$ defined via \cref{alter def gamma}.}
    \label{fig:demo for def 5}
\end{figure}
\par The procedure to construct $\tilde{\Gamma}$ and $\tilde{\bV}$ is described in words in \cref{alter def gamma}. A more formal math description of this procedure is shown in terms of pseudocode in \cref{Algo2} and is demonstrated graphically in \cref{fig:demo for def 5}. Upon completion, we obtain a matrix $\tilde{\Gamma}$ and a vector $\tilde{\bV}$ with exactly $s \times t$ rows, which remains equivalent to the previous construction method in \cref{thm:construction} in terms of finding the missing cells using group testing.  Next, using the new construction shown in \cref{alter def gamma}, we decompose $\tilde{\Gamma}$ and $\tilde{\bV}$ into multiple group testing problems $\mathcal{Q}_i$, which are defined in \cref{def: things related to def 5}, as follow
\begin{definition}\label{def: things related to def 5}
For all $i \in [t]$, we denote $\sigma_i$ and $\varphi_i$ to be the number of missing cells and the number of cells with number 1 in row $i$ of $\mG$, respectively. Additionally, we also denote $\mathcal{R}_i$ to be the matrix that is created by the set of row from row $(s(i-1)+1)^{th}$ to row $si^{th}$ and the set of columns from column $(\sigma_1+\dots+\sigma_{i-1}+1)^{th}$ to $(\sigma_1+\dots+\sigma_i)^{th}$ of $\tilde{\Gamma}$ . Similarly, $\tilde{\bV}_i$ is denoted to be a slice of the vector $\tilde{\bV}$ from position $(s(i-1)+1)^{th}$ to $si^{th}$ and $\Psi[i]$ is denoted to be a slice of the vector $\Psi$ from position $(\sigma_1+\dots+\sigma_{i-1}+1)^{th}$ to $(\sigma_1+\dots+\sigma_i)^{th}$. Furthermore, we denote $\bar{\varphi}_i$ to be the number of $\Psi_\alpha\in\Psi[i]$ such that $\psi_\alpha=1$. Lastly, we denote $\mathcal{Q}_i$ the group testing problem containing the measurement matrix $\mathcal{R}_i$, the output vector $\tilde{\bV}_i$ and the sample vector $\Psi[i]$; and $\Pr_{\mathcal{Q}_i}(succ)$ to be the success probability of it (i.e $\Pr_{\mathcal{Q}_i}(succ)=\Pr(\mathcal{Q}_i\text{ is success})$).
\end{definition}

The following lemma presents that the entries in matrix $\cR_\alpha$ for $\alpha \in [t]$ follows Bernoulli distribution and the proof is available in Appendix \ref{proof for ber gen}.

\begin{lemma} \label{ber gen}
    Let $t$ be defined in~\cref{sec:intro} and $\mathcal{R}_\alpha$ be defined in \cref{def: things related to def 5}. For all $\alpha\in [t]$,  $\mathcal{R}_\alpha$ is Bernoulli($\nu_\alpha$) generated, where $\nu_\alpha = \binom{n-\varphi_\alpha-1}{d-1}/{\binom{n}{d}}$.
\end{lemma}

\par It is also worth noting that $\Psi[i]$ can be understood as the number of missing cells in row $i$ of $\mM$. Next, we derive the main result of this section, \cref{theo: universal bound for GT}, which shows the effectiveness of our recovered matrix in terms of the success probability for the group testing problem. It is intuitive that the success probability improves as we obtain more tests (i.e., the number of matrix $\mX$ increases). Suppose we observe a $n\times s$ matrix $\mX$ at \textit{each sampling round}; a natural question is how the accuracy of our method scales with the number of sampling rounds. In the following theorem, we aim to answer this by deriving a universal lower bound on the success probability corresponding to the number of sampling rounds, applicable across different algorithms used to recover the measurement matrix and solve the group testing problem.

\begin{theorem}\label{theo: universal bound for GT}
    Let $\sigma, \varphi,\bar{\varphi}$ be defined in \cref{def: things related to def 5}. Additionally, given two algorithms $\mathcal{A}$ and $\mathcal{B}$, suppose that in a standard group testing setup where there are $n_0$ items, $t_0$ tests, $d_0$ defectives, and the measurement matrix is Bernoulli($p_0$)-generated, $\mathcal{A}$ and $\mathcal{B}$ guarantee a success probability of at least $\Theta(n_0, t_0, d_0, p_0)$ and $\bar{\Theta}(n_0, t_0, d_0, p_0)$, respectively. Consider a group testing problem with $n$ items, $t$ tests, $d$ defectives, and a measurement matrix that is Bernoulli($p$)-generated but has some missing cells. For each sampling round, if we recover the measurement matrix using our method with algorithm $\mathcal{A}$ and the sample matrix $\mX\in\mathbb{R}^{n\times s}$, and then solve the group testing problem via the recovered measurement matrix using algorithm $\mathcal{B}$, the success probability is bounded by number of sampling rounds $\hat{r}$ as follow
    \begin{align*}
        \Pr_{GT}(\text{succ}) \geq 1 - \left( 1 - \left( \min_{(\sigma, \bar{\varphi}, \varphi) \in \mathcal{K}} \Theta\left( \sigma, s, \bar{\varphi}, \nu \right) \right)^t \bar{\Theta}(n, t, d, p) \right)^{\hat{r}},
    \end{align*}
    where $\nu = \binom{n- \varphi - 1}{d - 1}/\binom{n}{d}$, and $\mathcal{K}=\{(\sigma,\bar{\varphi},\varphi\}|\sigma,\bar{\varphi},\varphi\in[n], \bar{\varphi}\leq\sigma, \sigma+\varphi\leq n \}$. 
\end{theorem}

To prove \cref{theo: universal bound for GT}, we will, first, show the correlation between the original problem and a set of smaller group testing problems $\mathcal{Q}_i$ through the next theorem.

\begin{theorem}
\label{independent of success}
    Let $t$ be defined in~\cref{sec:intro} and $\mathcal{Q}_i$ be defined in \cref{def: things related to def 5} for $i\in[t]$, we have:
    \begin{align*}
        \Pr(succ)=\prod_{i=1}^{t} \Pr_{\mathcal{Q}_i}(succ)
    \end{align*}
\end{theorem}

\par The proof is referred to Appendix \ref{proof for theo idsuc}. \cref{independent of success} has given us a way to calculate $\Pr(succ)$ through $\Pr_{\mathcal{Q}_i}$. Our next target would be to bound each $\Pr_{\mathcal{Q}_i}$.By applying \cref{independent of success} and \cref{ber gen}, we derive a general upper bound on the success probability $\Pr(\text{succ})$ of an algorithm $\mathcal{A}$, under the assumption that we have a bound on the success probability of $\mathcal{A}$ in the standard group testing setting where the measurement matrix is generated using a Bernoulli distribution. Additionally, by incorporating the bound from \cref{Expectation_bound}, which estimates the expected number of rows in which at least one element of $\tilde{\Gamma}$ appears, we can further bound the difficult-to-interpret terms in the general expression. This leads to the following proposition, which presents the complete general bound:
\begin{prop} \label{theo: general_bound}
     Let $t, n, d,s$ be defined in~\cref{sec:intro}, a and b be defined in \cref{Expectation_bound}, $\Upsilon$ be defined in \cref{Ups_equ}, $\sigma_i, \varphi_i,\bar{\varphi}_i$ be defined in \cref{def: things related to def 5}. Additionally, given an algorithm $\mathcal{A}$ for the group testing problem, suppose that in a standard group testing setup where there are $n_0$ items, $t_0$ tests, $d_0$ defectives and the measurement matrix is Bernoulli$(p)$-generated, $\mathcal{A}$ guarantees a success probability of at least $\Theta(n_0,t_0,d_0,p)$. Then, when applying the algorithm $\mathcal{A}$ to our recovery problem, the success probability can be bounded as:
    \begin{align}\label{eq: general bound}
        \Pr(succ)\geq \displaystyle\prod_{i=1}^t\Theta(\sigma_i,s,\bar{\varphi}_i,\nu_i),
    \end{align}
    where $\nu_i = \binom{n-\varphi_i-1}{d-1}/\binom{n}{d}.$ The relation of $s, d, t, v_i$, and $\sigma_i$ can be expressed as
\begin{align}
        t-t\Upsilon(d)\left(1-\dfrac{s-1}{2d}\dfrac{a^2}{b-a^2}\right)\geq \displaystyle\sum_{i=1}^t(1-\nu_i)^{\sigma_i}\geq t-t\Upsilon(d), \label{eqn:generalSuccessBound}
    \end{align}
  
\end{prop}

We defer the proof of Proposition \ref{theo: general_bound} to Appendix \ref{proof for general bound}. Additionally, it should be noted further that Equation~\eqref{eqn:generalSuccessBound} can be simplified to
    \begin{equation}        
        \max_{i\in[t]}(1-\nu_i)^{\sigma_i}\geq 1-\Upsilon(d) \mbox{ and }  \min_{i\in[t]}(1-\nu_i)^{\sigma_i}\leq1-\Upsilon(d)\left(1-\dfrac{s-1}{2d}\dfrac{a^2}{b-a^2}\right).    
    \end{equation}

With Proposition \ref{theo: general_bound} established, we can now complete the proof of \cref{theo: universal bound for GT} as follows.

\begin{proof}[Proof of \cref{theo: universal bound for GT}]
    By applying Proposition \ref{theo: general_bound}, for any matrix $\mG$ with missing entries, we have the probability of recovering using algorithm $\mathcal{A}$ is bounded by 
    $$\displaystyle\prod_{i=1}^t\Theta(\sigma_i^\mG,s,\bar{\varphi}_i^M,\dfrac{\binom{n - \varphi_i^\mG - 1}{d- 1}}{\binom{n}{d}} )\geq \left[ \min_{(\sigma, \bar{\varphi}, \varphi) \in \mathcal{K}} \Theta\left( \sigma, s, \bar{\varphi}, \dfrac{\binom{n - \varphi - 1}{d - 1}}{\binom{n}{d}} \right) \right]^t  $$
    where $\sigma_i^\mG,\bar{\varphi}_i^\mG, \varphi_i^\mG$ are the number of missing cells, the number of missing cells with initial value of 1, and the number of cells with value 1 in row $i$ of $\mG$, respectively. Thus, the unsuccess probability of one sampling round is lower bounded by
    $$1 - \left[ \min_{(\sigma, \bar{\varphi}, \varphi) \in \mathcal{K}} \Theta\left( \sigma, s, \bar{\varphi}, \dfrac{\binom{n - \varphi - 1}{d - 1}}{\binom{n}{d}} \right) \right]^t \bar{\Theta}(n, t, d, p)$$
    Since we are considering $\hat{r}$ sampling round, the success probability of solving the group testing  problem via the recovered matrix is bounded by 
    $$1 - \left\{ 1 - \left[ \min_{(\sigma, \bar{\varphi}, \varphi) \in \mathcal{K}} \Theta\left( \sigma, s, \bar{\varphi}, \dfrac{\binom{n - \varphi - 1}{d - 1}}{\binom{n}{d}} \right) \right]^t \bar{\Theta}(n, t,d, p) \right\}^{\hat{r}}.$$ This completes the proof.
\end{proof}
\begin{remark}\label{remark: go to 1}
    If both $\mathcal{A}$ and $\mathcal{B}$ guarantee a positive success probability for all group testing setups, then \cref{theo: universal bound for GT} implies that as the number of sampling rounds increases, the lower bound on the success probability of the group testing problem using the recovered matrix will converge to 1.
\end{remark}
\begin{corollary}\label{last cor}
    For $\mathcal{A}$ and $\mathcal{B}$ as defined in \cref{theo: universal bound for GT}, if we want the success probability of solving the group testing problem via the recovered matrix to exceed $1-\varepsilon$ with $\vartheta^{\binom{n}{s}}\leq \varepsilon$, then the required number of sampling rounds is at least:
    \begin{align*}
        \hat{r}\geq \log_{\vartheta}(\varepsilon),
    \end{align*}
    where $\vartheta=1 - \left[ \min_{(\sigma, \bar{\varphi}, \varphi) \in \mathcal{K}} \Theta\left( \sigma, s, \bar{\varphi}, \dfrac{\binom{n - \varphi - 1}{d - 1}}{\binom{n}{d}} \right) \right]^t \bar{\Theta}(n, t, d, p)$.
    
\end{corollary}

\subsection{Instantiations}
\label{sub:instationations}

\par From here on, for any algorithm $\mathcal{A}$, we denote $\hat{\Pr}_{\mathcal{A}}(succ)$ as the success probability of solving a noiseless group testing with a Bernoulli($p_0$) test design, $n_0$ items, $t_0$ tests, $d_0$ defectives with the $\mathcal{A}$ algorithm. Additionally, we denote $\Pr_{\mathcal{A}}(succ)$ as the success probability of our matrix completion problem when solving with the $\mathcal{A}$ algorithm. In the next part, we will use our general bound for traditional algorithms for Group Testing. These algorithms include COMP (Combinatorial Orthogonal Matching Pursuit), which declares all items not seen in a negative test as defective; DD (Definite Defectives), which identifies items that must be defective because they are the only unexplained ones in a positive test and SSS (Smallest Satisfying Set), which finds the smallest set of items consistent with the test outcomes via linear programming (More details of these algorithms can be found in Appendix~\ref{subsec: Tra Al in GT}). Notice, however, that among the traditional algorithms there is also SCOMP (Sequential COMP), but due to its complex nature, no success probability bound has been established for it in the Bernoulli group testing setting. Therefore, we do not analyze them here.  
\subsubsection{The COMP algorithm}
\par From the work by \cite{Aldridge_2014},  for noiseless group testing with a Bernoulli($p$) test design, the success probability of the COMP algorithm is bounded by
\begin{align}\label{eq: GeneralBound COMP 1}
    \hat{\Pr}_{COMP}(succ)\geq 1-(n-d)\left(1-p(1-p)^d\right)^t
\end{align}
Thus, by letting $\Theta$ equal to the LHS of \cref{eq: GeneralBound COMP 1}, Proposition \ref{theo: general_bound} gives us the bound for our completion problem using the COMP algorithm as follow
\begin{align}\label{bound fo COMP Psucc}
    \Pr_{COMP}(succ)\geq \displaystyle\prod_{i=1}^t\left[1-(\sigma_i-\bar{\varphi}_i)\left(1-\nu_i(1-\nu_i)^{\bar{\varphi}_i}\right)^s\right]
\end{align}
It is important to note that the bound presented on the right-hand side of \cref{bound fo COMP Psucc} may suffer from numerical instability in practical settings, due to the multiplication of a large number of probabilities. In certain cases, this issue can be mitigated by deriving a more combinatorial bound. Specifically, when the number of item per test is constant (i.e. each row in $\mM$ having the same number of number 1), we can establish the following alternative bound.
    \begin{align}\label{special bound for COMP Psucc_}
        \Pr_{COMP}(succ)\geq 1-\bar{r}\left[\dfrac{\binom{n-1}{d}+\binom{n-1}{d-1}-\binom{d-\pi-1}{d-1}}{\binom{n}{d}}\right]^s=1-\bar{r}\left[1-\dfrac{\binom{n-\pi-1}{d-1}}{\binom{n}{d}}\right]^s
    \end{align}
    Where $\bar{r}$ is the number of $\Psi_{\alpha}\in\Psi$ such that $\psi_{\alpha}=0$. Additionally, from the definition of $\sigma_i$ and $\bar{\varphi}_i$, we also have: $$\bar{r}=\displaystyle\sum_{i=1}^t[\sigma_i-\bar{\varphi}_i]$$
It can be observed that the bound presented in \cref{special bound for COMP Psucc_} is more numerically stable than the one in \cref{bound fo COMP Psucc}. Nevertheless, when tested with a large number of samples, the two bounds yield similar results. The proof of \cref{special bound for COMP Psucc_} is provided in \cref{sec: bet bound for COMP}, and the corresponding experimental results are shown in \cref{fig:Prob_Bound_algo}.
\subsubsection{The DD algorithm}
Let us denote:
\begin{align*}
    &q_0(p,d)=(1-p)^d\\
    &q_1(p, d)=p(1-p)^{d-1}\\
    &b(d;n,t)=\binom{n}{d}t^d(1-t)^{n-d}\\
    &\Xi(n,t,m_0,p,d)=n (1 - p)^{m_0} \left( \exp \left( \frac{(t - m_0) p q_1(p,d)}{1 - q_0(p,d)} \right) - 1 \right) - \frac{q_1(p,d) (t - m_0)}{1 - q_0(p,d)}
\end{align*}
Following \cite{Aldridge_2014}, given a Bernoulli(p) test design, the probability of success under the DD algorithm
satisfies
\begin{align*}
    \hat{\Pr}_{DD}(succ)\geq \sum_{m_0=0}^{t} b(m_0; t, q_0(p,d)) \max \left[ 0, 1 - d \exp \left( \Xi(n,t, m_0,p,d) \right) \right]
    .
\end{align*}
Now, by letting $\Theta$ equal to the RHS of the bound, we can use Proposition \ref{theo: general_bound} to give a lower bound for our completion problem when solving with the DD problem as follow:
\begin{align*}
    \Pr_{DD}(succ)\geq \displaystyle\prod_{i=1}^t\left\{\sum_{m_0=0}^{s} b(m_0; s, q_0(\nu_i,\bar{\varphi}_i)) \max \left[ 0, 1 - \bar{\varphi}_i \exp \left( \Xi(\sigma_i,s, m_0,\nu_i,\bar{\varphi}_i) \right) \right]\right\}
\end{align*}

\subsubsection{The SSS algorithm}
\par For noiseless group testing with a Bernoulli(p) test design, the success probability of the SSS algorithm is given by \cite{Aldridge_2014} as
\begin{align*}
    \hat{\Pr}_{SSS}(succ)\geq 1 - d (1 - \mathcal{L}(p, d, d-1, d-1))^T - \sum_{B=0}^{d-1} \binom{d}{B} \binom{n-d}{d-B} (1 - \mathcal{L}(p, d, d, B))^T,
\end{align*}
where
\begin{align*}
    \mathcal{L}(p, d, L, B) = (1 - p)^d + (1 - p)^L - 2(1 - p)^{d+L-B}.
\end{align*}
Now, by letting $\Theta$ equal the RHS of the bound given by \cite{Aldridge_2014}, we yield the bound for our completion problem when solving with the SSS algorithm as follow: 
\begin{align*}
    \Pr_{SSS}(succ)\geq \displaystyle\prod_{i=1}^t\left[1 - \bar{\varphi}_i (1 - \mathcal{L}(\nu_i, \bar{\varphi}_i, \bar{\varphi}_i-1, \bar{\varphi}_i-1))^s - \sum_{B=0}^{\bar{\varphi}_i-1} \binom{\bar{\varphi}_i}{B} \binom{\sigma_i-\bar{\varphi}_i}{\bar{\varphi}_i-B} (1 - \mathcal{L}(\nu_i, \bar{\varphi}_i, \bar{\varphi}_i, B))^s\right]
\end{align*}

\section{Simulations}

\begin{figure}[!ht]
    \centering
    \includegraphics[width=0.8\linewidth]{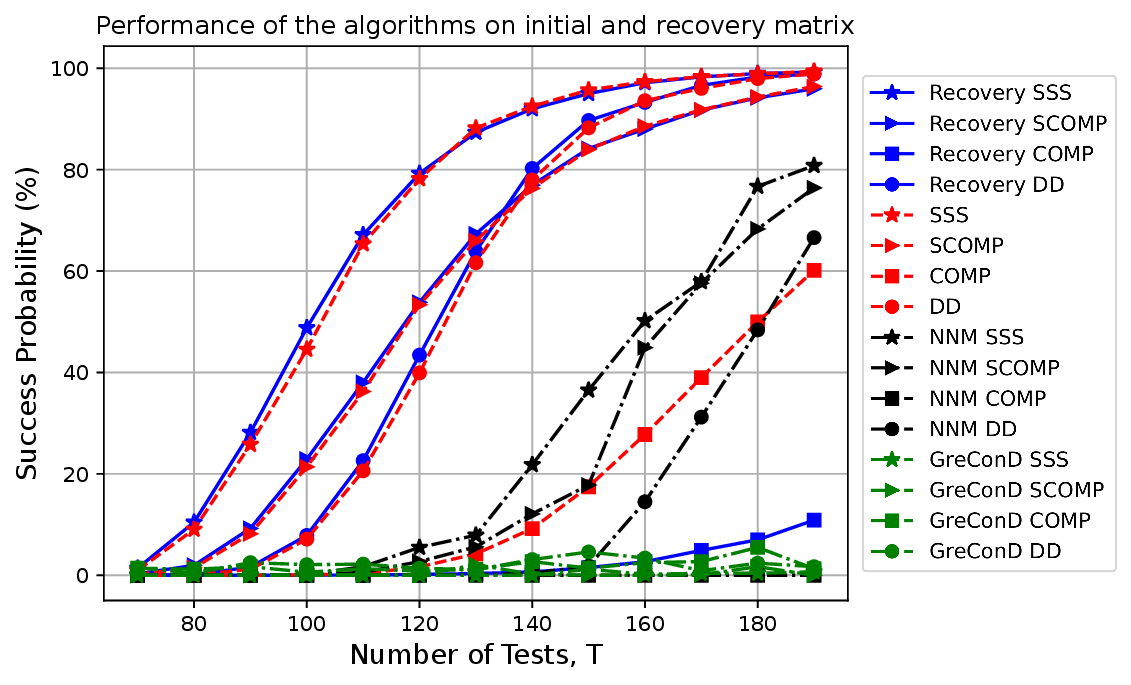}
    \caption{Performance of the COMP, SCOMP, DD, and SSS algorithms by using the recovery matrix as a measurement matrix after recovering the missing entries the missing matrix $\mG$ compared to the measurement matrix $\mM$ (the actual one). Here, we use the noiseless group testing set up with a Bernoulli test design with $n=500, d=10,p=0.1$. Additionally, each cell is missing with probability 0.1, and $s=10$ samples are used to recover the missing matrix. In the figure, Red: success rate on initial matrix, Black: success rate with measurement matrix recovered using NNM algorithms (In the legend, 'NNM + Algo A' refers to the missing matrix recovery process using NNM, followed by the application of Algorithm A for group testing), Green: success rate with measurement matrix recovered using GreConD (In the legend, 'GreConD + Algo A' refers to the missing matrix recovery process using GreConD, followed by the application of Algorithm A for group testing), Blue: success rate with measurement matrix recovered using our proposed method (In the legend, 'Recovery + Algo A' refers to the missing matrix recovery process using our proposed method, followed by the application of Algorithm A for group testing). The SSS, SCOMP, COMP, and DD algorithms are represented by stars,  triangles, squares, and circles, respectively.}
    \label{fig:Per_algo}
\end{figure}

\begin{figure}[!ht]
    \centering
    \includegraphics[width=0.85\linewidth]{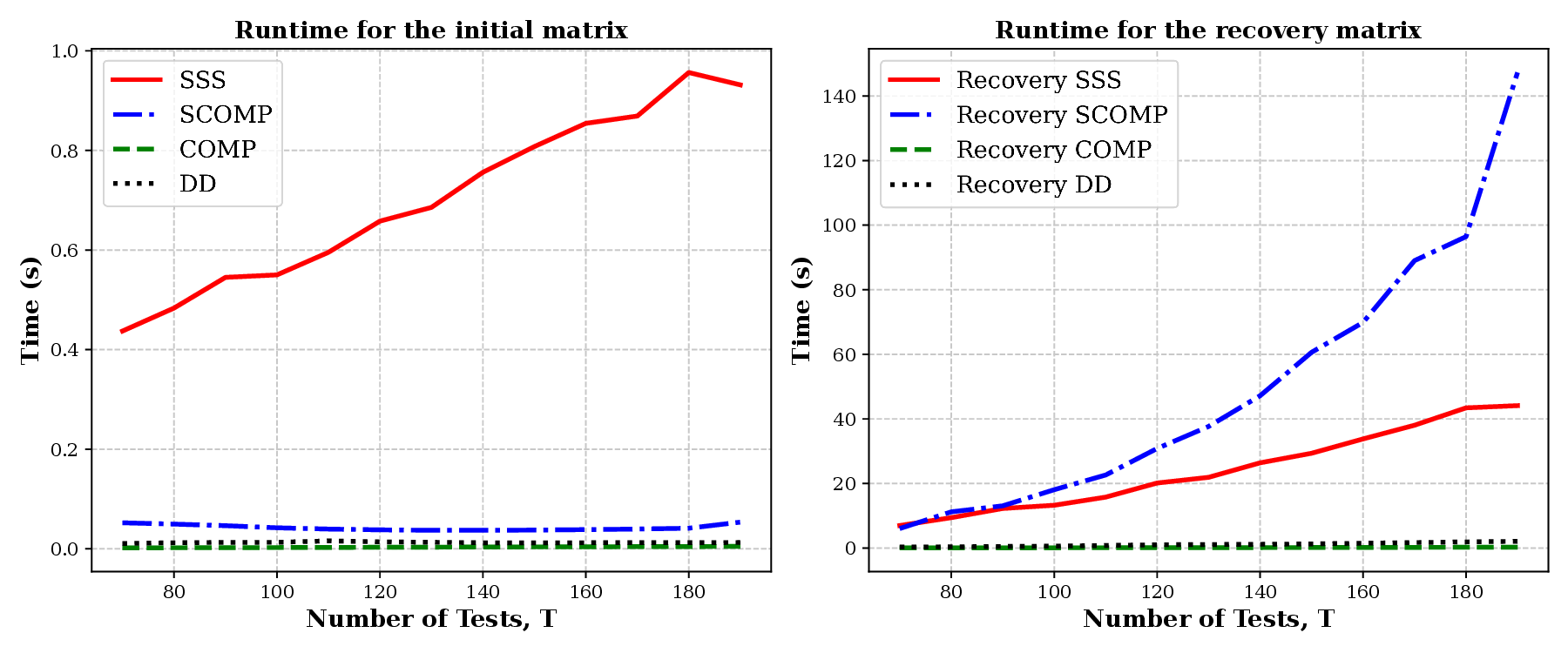}
    \caption{Computational time of the COMP, SCOMP, DD, and SSS algorithms by using the recovery matrix as a measurement matrix after recovering the missing entries the missing matrix $\mG$ compared to the measurement matrix $\mM$ (the actual one), when used in the same group testing problem. Here, we use the noiseless group testing setup with a Bernoulli test design with $n=500, d=10,p=0.1$. Additionally, each cell is missing with probability 0.1, and $s=10$ samples are used to recover the missing matrix. }
    \label{fig:time_algo}
\end{figure}

\begin{figure}[!ht]
    \centering
    \includegraphics[width=0.95\linewidth]{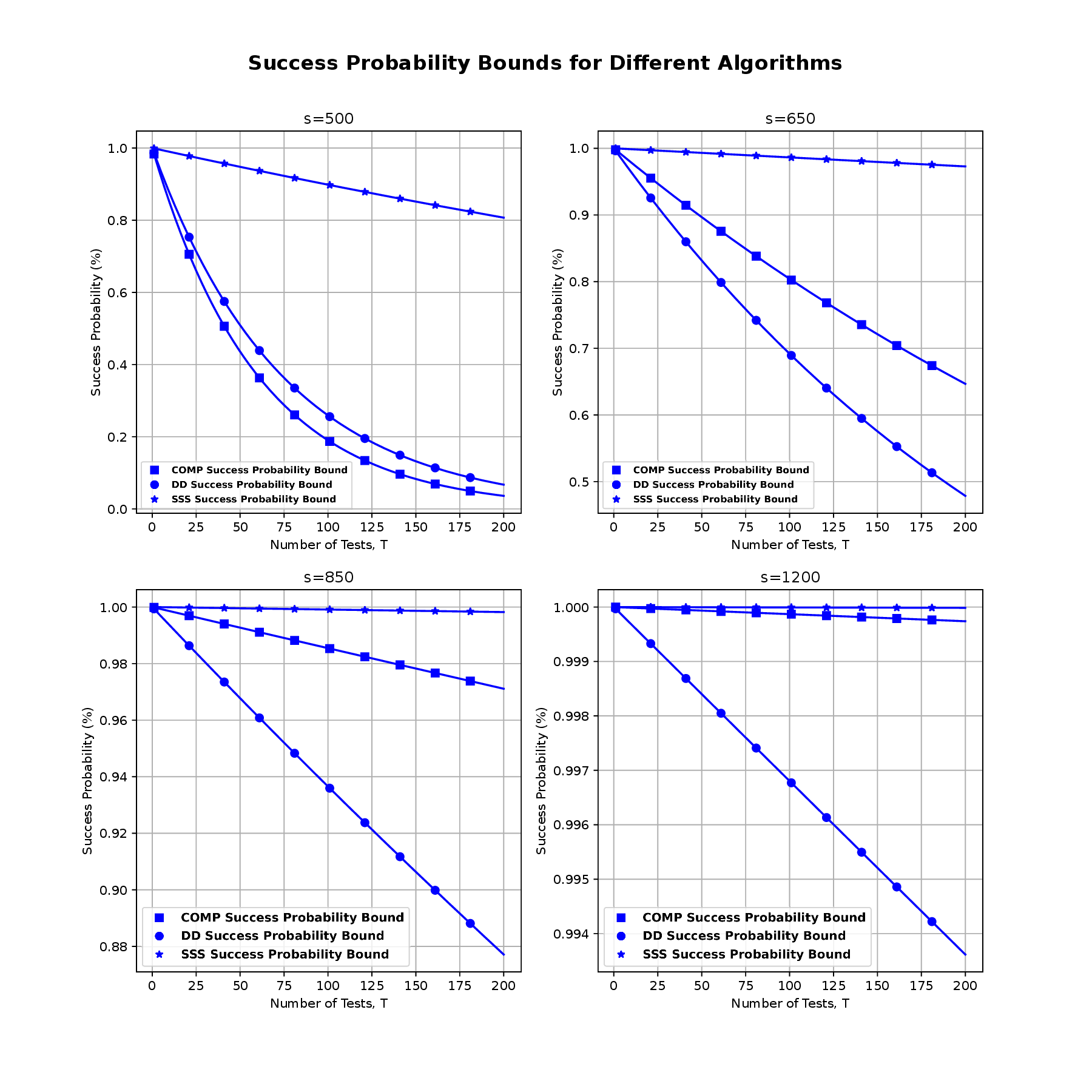}
    \caption{Visualizations of the success probability bounds for the SSS, COMP, and DD algorithms in a group testing framework with parameters \(n=300\), \(d=10\), \(q=0.05\), and \(p =0.1\) that are described in Section~\ref{sub:instationations}. The number of tests ranges from 1 to 200, and we evaluated the performance for four values of \(s\): 500, 650, 850, and 1200. The plotted results show the variation in success probability across different test counts, providing insights into the effectiveness of each algorithm under these conditions. In the figure, the SSS, COMP, and DD algorithms are represented by star, square, and circle markers, respectively. }
    \label{fig:Prob_Bound_algo}
\end{figure}

\par In this section, we conduct experiments to demonstrate that the performance of state-of-the-art algorithms (COMP, SCOMP, DD and SSS in~\cite{aldridge2014group}) for noiseless group testing remains consistent when applied to the recovery matrix (i.e., the matrix $\mG$ after recovering missing entries) compared to the measurement matrix $\mM$. Furthermore, as baselines to compare with our method, we evaluate the performance of the group testing problem when the measurement matrix is reconstructed using established Matrix Completion and Boolean Matrix Factorization techniques. In the Matrix Completion framework, we employ the Nuclear Norm Minimization (NNM) method. For the Boolean Matrix Factorization approach, we utilize the well-known GreConD algorithm, with the implementation adapted from~\cite{ignatov2021suboptimality}. Specifically, in our simulations we consider $n=500$ items, with $d=10$ defective ones. For each $t\in[70,190]$, we construct a binary test‐design matrix $\mathbf{M}\in\{0,1\}^{t\times n}$ by drawing each entry independently as $\mathrm{Bernoulli}\bigl(p\bigr)$, where we choose $p=1/d=0.1$. We then randomly select $s=10$ samples from $\mathcal{T}_d$ to generate a binary signal matrix $\mathbf{X}\in\{0,1\}^{n\times s}$. The noiseless measurements are given by $\mathbf{Y}=\mathbf{M}\odot\mathbf{X}$, where $\odot$ denotes Boolean (OR‐over‐rows) multiplication. To simulate missing entries, we delete each entry of $\mathbf{M}$ independently with probability $q=0.1$ to obtain the missing matrix $\mG$. We then apply, respectively, three matrix completion methods—NNM, GreConD, and our proposed algorithm—to $\mG$ to recover a reconstructed measurement matrix $\widehat{\mathbf{M}}$. For evaluation, we generate 1000 fresh test‐signal matrices $\{\mathbf{X}_{\mathrm{test}}^{(i)}\}_{i=1}^{1000}$, compute $\mathbf{Y}_{\mathrm{test}}^{(i)}=\widehat{\mathbf{M}} \odot \mathbf{X}_{\mathrm{test}}^{(i)}$ using the group‐testing operator $\odot$, and decode each $\mathbf{X}_{\mathrm{pred}}^{(i)}$ from $(\widehat{\mathbf{M}},\mathbf{Y}_{\mathrm{test}}^{(i)})$. The final accuracy is the fraction of trials for which $\mathbf{X}_{\mathrm{pred}}^{(i)}\equiv \mathbf{X}_{\mathrm{test}}^{(i)}$.

\par For each algorithm, we first report results in the ideal case where the measurement matrix is fully observed, i.e.\ we directly solve the group testing problem using the original measurement matrix $\mM$ (red curve in \cref{fig:Per_algo}). Next, we address the scenario where $\mM$ has missing entries: we recover $\mM$ using three methods—Nuclear Norm Minimization (black), GreConD (green), and our proposed algorithm (blue)—and plot their respective performances (black, green, and blue curves in \cref{fig:Per_algo}).

\par Overall, as demonstrated in \cref{fig:Per_algo}, our proposed method consistently outperforms the standard Nuclear Norm Optimization method and the GreConD method across all algorithms. This superior performance can be attributed to the fact that our method is able to incorporate the structural information of the group testing problem during the matrix recovery process. It is also worth noting that we experimented with using SVD as an alternative recovery method for the Matrix Completion scheme. However, the resulting matrices failed to solve any of the 1000 tests across all algorithms, resulting in a success probability of zero. For this reason, we do not include SVD results in the figure. Moreover, for the SSS, SCOMP and DD algorithms, our proposed method achieved success probabilities nearly identical to those obtained using the original matrix, further highlighting its effectiveness. Lastly, as shown in \cref{fig:Per_algo}, GreConD performs worse than the other methods, which can be partially explained by the fact that in the BMF scheme each algorithm must recover $\mG$ from scratch and cannot exploit the entries of $\mG$ that remain non-missing.

\par  Additionally, in \cref{fig:Prob_Bound_algo}, we plot the success-probability bounds for three state-of-the-art algorithms: COMP, DD, and SSS that are described in Section~\ref{sub:instationations}. These bounds are presented and proven in detail in \cref{sec: Gen bound for SP}. Specifically, we set $n=300$, $d=10$, $q=0.05$, and $p=0.1$, vary the number of tests $t$ from $1$ to $200$, and evaluate the results for four sample sizes $s=500,\,650,\,850,$ and $1200$. In \cref{sec: bet bound for COMP} we derive a numerically stable bound for COMP and add it to \cref{fig:Prob_Bound_algo_spec}, yielding the final figure with all traditional‐algorithm success bounds. Moreover, under the same experimental configuration as in \cref{fig:Per_algo}, \cref{fig:time_algo} presents the runtime of each algorithm when using our proposed method.

\section{Conclusion}
\label{sec:cls}

We consider a variant of the matrix completion problem in group testing. Instead of using the rank of the measurement matrix to recover it from the missing matrix, we utilize a number of observed input and outcome vectors. In particular, given the missing matrix $\mG$ and the number of input and outcome vectors observed, we construct a converted missing matrix $\Gamma$ and a converted missing vector $\bV$ such that $\Gamma \odot \bPsi = \bV$ with no duplicated rows in $\Gamma$, where $\bPsi$ is the representation vector of all missing entries in $\mG$. More importantly, we have shown that the information gain from the converted missing matrix $\Gamma$ and converted missing vector $\bV$ is equivalent to that of the missing matrix $\mG$ and the set of observed samples. Therefore, to reconstruct the measurement matrix, one only needs to reconstruct $\bPsi$ from $\Gamma$ and $\bV = \Gamma \odot \bPsi$.

Since the more rows $\Gamma$ has, the better the chance we have of recovering $\bPsi$, we derive the exact and approximate expected number of rows $\Gamma$. Unfortunately, in some cases, it is impossible to recover the missing entries regardless of the number of input and outcome vectors observed. This behavior should be studied in the future.

\section*{Acknowledgement}
This research is funded by the University of Science, VNU-HCM, Vietnam, under grant number CNTT 2024-22 and uses the GPUs provided by the Intelligent Systems Lab at the Faculty of Information Technology, University of Science, VNU-HCM, Vietnam.

\appendices

\section{Proof of Theorem \ref{thm:Formula_E}}
\label{append: A}

 Before deriving the formula for the expectation of $\omega$ when $s \geq 1$, we present two additional lemmas.
\begin{lemma}\label{general_E_for}
Let $n, d, p, q,s$, $\mathbf{X}$, $\cT_d$ be defined in~\cref{sec:intro}; $\Upsilon$ be defined in \cref{Ups_equ} and $\Pr(\theta_=)$ be defined in \cref{def theta_=}. Then, we have:
    \begin{align*}
        \bbE[\omega]=\binom{s}{1}\Upsilon(d)-\dfrac{\binom{\binom{n}{d}-2}{s-2}\phi}{2\binom{\binom{n}{d}}{s}}+\displaystyle\sum_{c=3}^{s}(-1)^{c+1}\dfrac{\displaystyle\sum_{\operatorname{col}(\mathbf{X})\subseteq \cT_d}\displaystyle\sum_{\theta\subseteq\operatorname{col}(\mathbf{X}),|\theta|=c}\Pr(\theta_=)}{\binom{\binom{n}{d}}{s}}
    \end{align*}
\end{lemma}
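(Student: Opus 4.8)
The plan is to write $\omega$---the number of distinct informative signatures among the $s$ samples with respect to the fixed random row $\tau$---as an inclusion--exclusion sum over subsets of $\chi$, then take expectations in two stages (first over the random row $\tau$ generated by~\eqref{ran_prob}, then over the uniformly chosen $s$-subset $\chi\subseteq\cT_d$), and finally peel off the $c=1$ and $c=2$ contributions while leaving the higher-order terms in summed form.

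First I would fix $\tau$ and partition the informative samples of $\chi$ into equivalence classes under the relation of being identical in the sense of \cref{dif-same-def} (i.e.\ sharing the same $\Psi_{(\cdot,\tau)}$); by the definition of $\omega=|\cX_\tau|$, this count is exactly the number of such classes. For a nonempty $\theta\subseteq\chi$, let $\mathbf{1}[\theta_=]$ indicate that all pairs $(\bX,\tau)$ with $\bX\in\theta$ are informative and pairwise identical, which happens precisely when $\theta$ is contained in a single class. Since $\sum_{\emptyset\neq\theta\subseteq C}(-1)^{|\theta|+1}=1$ for every nonempty class $C$, summing over classes yields the key identity
\[
    \omega=\sum_{\emptyset\neq\theta\subseteq\chi}(-1)^{|\theta|+1}\mathbf{1}[\theta_=].
\]

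Next I would take the expectation over the random generation of $\tau$ (linearity of expectation turns $\mathbf{1}[\theta_=]$ into $\Pr(\theta_=)$ from \cref{def theta_=}), then average over the $\binom{\binom{n}{d}}{s}$ equally likely sample sets $\chi$, and regroup by $c:=|\theta|$:
\[
    \bbE[\omega|s]=\sum_{c=1}^{s}(-1)^{c+1}\frac{1}{\binom{\binom{n}{d}}{s}}\sum_{\substack{\chi\subseteq\cT_d\\|\chi|=s}}\ \sum_{\substack{\theta\subseteq\chi\\|\theta|=c}}\Pr(\theta_=).
\]
For $c=1$ each singleton $\{\bX\}$ contributes $\Upsilon(d)$ by \cref{thm:s=1}, and since every $\chi$ has exactly $s$ singletons the term collapses to $\binom{s}{1}\Upsilon(d)$. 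For $c=2$ I would interchange the order of summation: fixing a pair of distinct samples and counting the $s$-subsets $\chi$ that contain it contributes the multiplicity $\binom{\binom{n}{d}-2}{s-2}$, so that, writing $\phi:=\sum_{\bX\neq\bX'\in\cT_d}\Pr(\{\bX,\bX'\}_=)$ for the sum over ordered pairs of distinct samples, the $c=2$ term becomes $-\binom{\binom{n}{d}-2}{s-2}\phi/\bigl(2\binom{\binom{n}{d}}{s}\bigr)$ (the factor $1/2$ converting the ordered sum $\phi$ back to unordered pairs). The terms $c\ge3$ are kept verbatim, matching the claimed expression.

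The main obstacle is the clean justification of the inclusion--exclusion identity---in particular verifying the per-class cancellation $\sum_{\emptyset\neq\theta\subseteq C}(-1)^{|\theta|+1}=1$ and confirming that singletons (for which ``pairwise identical'' is vacuous) and non-informative samples are handled correctly, so that only informative samples populate the classes. A secondary, purely bookkeeping point is tracking the combinatorial multiplicities when swapping the $\chi$- and $\theta$-summations, notably the $\binom{\binom{n}{d}-2}{s-2}$ count and the factor of $2$ relating $\phi$ to the sum over unordered pairs.
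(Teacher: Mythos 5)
Your proposal is correct and follows essentially the same route as the paper: an inclusion--exclusion expansion of $\omega$ over subsets $\theta\subseteq\chi$, replacement of indicators by $\Pr(\theta_=)$ via linearity of expectation, averaging over the $\binom{\binom{n}{d}}{s}$ choices of $\chi$, and then collapsing the $c=1$ term to $\binom{s}{1}\Upsilon(d)$ and the $c=2$ term via the multiplicity $\binom{\binom{n}{d}-2}{s-2}$ and the ordered-to-unordered factor of $2$ defining $\phi$. The only difference is that you make explicit the per-class cancellation $\sum_{\emptyset\neq\theta\subseteq C}(-1)^{|\theta|+1}=1$ underlying the identity $\omega=\sum_{\emptyset\neq\theta\subseteq\chi}(-1)^{|\theta|+1}\mathbf{1}[\theta_=]$, a step the paper simply attributes to ``the inclusion-exclusion principle.''
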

\par where $\phi=\displaystyle\sum_{\bX_i, \bX_j\in \cT_d, \bX_i\neq \bX_j}\Pr(\bX_i=\bX_j)$.

\begin{proof}
\par Recall the definition of $\bbE[\mX_c$] for an integer $c$ given in \cref{def expect chi_z}, we have:
\begin{align} 
    \bbE[\omega]&=\dfrac{\displaystyle\sum_{\operatorname{col}(\mathbf{X})\subseteq \cT_d}\bbE[\mathbf{X}_1]}{\binom{\binom{n}{d}}{s}}-\dfrac{\displaystyle\sum_{\operatorname{col}(\mathbf{X})\subseteq \cT_d}\bbE[\mathbf{X}_2]}{\binom{\binom{n}{d}}{s}}+\displaystyle\sum_{c=3}^{s}(-1)^{c+1}\dfrac{\displaystyle\sum_{\operatorname{col}(\mathbf{X})\subseteq \cT_d}\bbE[\mathbf{X}_c]}{\binom{\binom{n}{d}}{s}} \label{eq:omega:1} \\
    &=\dfrac{\displaystyle\sum_{\operatorname{col}(\mathbf{X})\subseteq \cT_d}\displaystyle\sum_{\theta\subseteq\operatorname{col}(\mathbf{X}),|\theta|=1}\Pr(\theta_=)}{\binom{\binom{n}{d}}{s}}-\dfrac{\displaystyle\sum_{\operatorname{col}(\mathbf{X})\subseteq \cT_d}\displaystyle\sum_{\theta\subseteq\operatorname{col}(\mathbf{X}),|\theta|=2}\Pr(\theta_=)}{\binom{\binom{n}{d}}{s}}+\displaystyle\sum_{c=3}^{s}(-1)^{c+1}\dfrac{\displaystyle\sum_{\operatorname{col}(\mathbf{X})\subseteq \cT_d}\displaystyle\sum_{\theta\subseteq\operatorname{col}(\mathbf{X}),|\theta|=c}\Pr(\theta_=)}{\binom{\binom{n}{d}}{s}} \label{eq:omega:2}\\
    &=\displaystyle\sum_{\theta\subseteq\operatorname{col}(\mathbf{X}),|\theta|=1}\Pr(\theta_=)-\dfrac{\binom{\binom{n}{d}-2}{s-2}\displaystyle\sum_{\theta\subseteq \cT_d, |\theta|=2}\Pr(\theta_=)}{\binom{\binom{n}{d}}{s}}+\displaystyle\sum_{c=3}^{s}(-1)^{c+1}\dfrac{\displaystyle\sum_{\operatorname{col}(\mathbf{X})\subseteq \cT_d}\displaystyle\sum_{\theta\subseteq\operatorname{col}(\mathbf{X}),|\theta|=c}\Pr(\theta_=)}{\binom{\binom{n}{d}}{s}} \label{eq:omega:3}\\
    &=\binom{s}{1}\Upsilon(d)-\dfrac{\binom{\binom{n}{d}-2}{s-2}\phi}{2\binom{\binom{n}{d}}{s}}+\displaystyle\sum_{c=3}^{s}(-1)^{c+1}\dfrac{\displaystyle\sum_{\operatorname{col}(\mathbf{X})\subseteq \cT_d}\displaystyle\sum_{\theta\subseteq\operatorname{col}(\mathbf{X}),|\theta|=c}\Pr(\theta_=)}{\binom{\binom{n}{d}}{s}} \label{eq:omega:4}
\end{align}
    \par ~\cref{eq:omega:1} is obtained due to the inclusion-exclusion principle. By the linearity of expectation, we have $\bbE[\mathbf{X}_k]=\displaystyle\sum_{\theta\subseteq\operatorname{col}(\mathbf{X}),|\theta|=c}\bbE(\theta_=)$. But since $\bbE[\theta_=]$ can only take the value 1 or 0, it is equal to $\Pr(\theta_=)$. Substituting this back into the expected value and we will get~\cref{eq:omega:2}.
    \par \cref{eq:omega:3} is obtained by combining the two following facts:
    \begin{itemize}
        \item For any $\operatorname{col}(\mathbf{X}) \subseteq \cT_d$ and $\theta \subseteq \operatorname{col}(\mathbf{X})$ with $\theta=\{\mathbf{g}\}$, we have $\Pr(\theta_=)$ is the probability that $(\mathbf{g},\mathbf{\tau})$ being informative and thus this probability is equal to $\Upsilon(d)$. This leads to $\displaystyle\sum_{\theta\subseteq\operatorname{col}(\mathbf{X}),|\theta|=1}\Pr(\theta_=)$ is the same for all $\mathbf{X}\in \cT_d$ and is equal to $\binom{s}{1}\Upsilon(d)$.
        \item Let us define $G=\displaystyle\sum_{\operatorname{col}(\mathbf{X})\subseteq \cT_d}\displaystyle\sum_{\theta\subseteq\operatorname{col}(\mathbf{X}),|\theta|=2}\Pr(\theta_=)$. To calculate this sum, each time we select a set $\mX$ such that $\operatorname{col}(\mathbf{X})$ from $\cT_d$, we add $\Pr(\bX_i=\bX_j)$ to $G$ for every unordered pair $(\bX_i,\bX_j)$ in $\operatorname{col}(\mathbf{X})\times \operatorname{col}(\mathbf{X})$. However, we also can do the equivalent process as follows: For every unordered pair $(\bX_i, \bX_j)\in \cT_d\times \cT_d$, we count the number of ways to choose $\mathbf{X}$ such that $\bX_i,\bX_j\in \operatorname{col}(\mathbf{X})$ (denote this as the weight of $(\bX_i,\bX_j)$). We then add $\Pr(\bX_i=\bX_j)$ multiplied by its weight to $G$. After performing this operation for all unordered pairs $(\bX_i,\bX_j)\in \cT_d\times \cT_d$, we will obtain the same $G$ as defined above. Furthermore, when using with this equivalent process, since $\mathbf{X}$ is taken uniformly from $\cT_d$, we can conclude that all the weights are equal to $\binom{\binom{n}{d}-2}{s-2}$.
    \end{itemize}    
\par Since $\Pr(\bX_i=\bX_j)=\Pr(\bX_j=\bX_i)$ for all $\bX_i, \bX_j \in \cT_d$, we have $\displaystyle\sum_{\bX_i, \bX_j\in \cT_d, \bX_i\neq \bX_j}\Pr(\bX_i=\bX_j)=2\times\displaystyle\sum_{\theta\subseteq \cT_d, |\theta|=2}\Pr(\bX_i=\bX_j)$. Now by letting $\phi=\displaystyle\sum_{\bX_i, \bX_j\in \cT_d, \bX_i\neq \bX_j}\Pr(\bX_i=\bX_j)$,~\cref{eq:omega:4} is obtained. Additionally, $\phi$ could be considered as the expected amount of ordered pair $(\bX_i,\bX_j)\in \cT_d\times\cT_d$ such that $\bX_i \neq \bX_j$, $(\bX_i,\tau),(\bX_j,\tau)$ are both informative and are identical.
\end{proof}

\par Our next target is to calculate $\phi$. But before we do that we will go to the definition of the $\simm$ function. For two same dimensional $n \times 1$ vectors $\bU$ and $\tilde{\bV}$, let $\simm(\bU, \bV) := \bU^T \bV$ be the number of positions that two vectors agree. To prove this theorem, we first calculate the probability of two informative pair being identical.
\begin{lemma}
Let $t$ and $\mathbf{X}$ be defined in~\cref{sec:intro}. For some $\bX_i, \bX_j \in \operatorname{col}(\mathbf{X})$ and $\tau \in [t]$ such that $(\bX_i, \tau)$ and $(\bX_j, \tau)$ are informative, we have
\begin{equation}    
    \Pr(\bX_i=\bX_j)=\Upsilon(\simm(\bX_i,\bX_j)).
\end{equation}
\label{same_pair_lemma}
\end{lemma}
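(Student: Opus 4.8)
The plan is to condition on the randomly generated row $\tau$, whose entries are i.i.d.\ according to~\eqref{ran_prob}, and to translate the event that $(\bX_i,\tau)$ and $(\bX_j,\tau)$ are \emph{both informative and identical} into explicit constraints on those entries. Write $S_i=\supp(\bX_i)$ and $S_j=\supp(\bX_j)$, both of size $d$, and set $u:=\simm(\bX_i,\bX_j)=|S_i\cap S_j|$. I would partition the relevant columns into three disjoint blocks: the common block $S_i\cap S_j$ of size $u$, and the two private blocks $S_i\setminus S_j$ and $S_j\setminus S_i$, each of size $d-u$. Since entries outside $S_i\cup S_j$ are irrelevant to either pair, only these $2d-u$ columns matter.

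Next I would read off the constraints from~\cref{informative-pair} and~\cref{dif-same-def}. Informativeness of $(\bX_i,\tau)$ and of $(\bX_j,\tau)$ forbids a known $1$ anywhere in $S_i\cup S_j$, so every entry in the three blocks is either $0$ or $\blacksquare$. The crucial step is the identical condition $\Psi_{(\bX_i,\tau)}=\Psi_{(\bX_j,\tau)}$: these sets are exactly the erased columns of $\tau$ lying in $S_i$ and in $S_j$, respectively, and because the two private blocks are disjoint, their equality forces \emph{no} erasure to fall in either private block. Combined with the no-$1$ requirement, this pins every entry of the two private blocks to $0$. Consequently the only erasures live in the common block, and the two informativeness requirements (at least one $\blacksquare$ in $S_i$ and at least one $\blacksquare$ in $S_j$) both collapse into the single requirement that the common block contain at least one $\blacksquare$.

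Finally I would compute the probability using independence of the entries. The $2(d-u)$ private entries must all equal $0$, each with probability $(1-p)(1-q)$, contributing $[(1-p)(1-q)]^{2d-2u}$. For the $u$ common entries, the probability that each is $0$ or $\blacksquare$ is $(1-p)(1-q)+q=1-p+pq$, so the probability they are all non-$1$ is $(1-p+pq)^u$, from which I subtract the all-zero case $[(1-p)(1-q)]^u$ to enforce at least one $\blacksquare$. Multiplying the two independent contributions gives
\[
[(1-p)(1-q)]^{2d-2u}\big((1-p+pq)^u-[(1-p)(1-q)]^u\big)=[(1-p)(1-q)]^{2d-2u}(1-p+pq)^u-[(1-p)(1-q)]^{2d-u},
\]
which is precisely $\Upsilon(u)$ as in~\eqref{Ups_equ}. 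The main obstacle is purely the combinatorial bookkeeping of the identical condition: recognizing that equality of the two erasure sets over disjoint private blocks forces those blocks to be erasure-free, and thereby merges the two separate ``at least one $\blacksquare$'' events into one. Once this reduction is in hand, the probability is a routine product over independent entries.
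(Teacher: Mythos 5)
Your proposal is correct and follows essentially the same route as the paper's proof: the same three-block partition of columns (common support, symmetric difference, irrelevant rest), the same constraints (all entries over $\supp(\bX_i)\cup\supp(\bX_j)$ must be $0$ or $\blacksquare$, the symmetric-difference entries forced to $0$, at least one $\blacksquare$ in the common block), and the same product computation yielding $\left[(1-p+pq)^{u}-\left[(1-p)(1-q)\right]^{u}\right]\left[(1-p)(1-q)\right]^{2d-2u}=\Upsilon(u)$. Your write-up is in fact slightly more explicit than the paper's in justifying why the identical condition forbids erasures in the private blocks and why the two ``at least one $\blacksquare$'' requirements merge into one, but the argument is the same.
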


\begin{proof}
Denote $\Delta_1=\{\delta|(\bX_i)_{\delta}=(\bX_j)_{\delta}=1\}$, $\Delta_2=\{\delta|(\bX_i)_{\delta}=0, (\bX_j)_{\delta}=1 \text{ or } \delta|(\bX_i)_{\delta}=1, (\bX_j)_{\delta}=0 \}$ and $\Delta_3=\{\delta|(\bX_i)_{\delta}=(\bX_j)_{\delta}=0\}$. Hence, $|\Delta_1|=\simm(\bX_i, \bX_j)$. Furthermore, since the number of ones in $\bX_i$ and $\bX_j$ are $d$, we also have $|\Delta_2|=2d-2\simm(\bX_i,\bX_j)$ and $|\Delta_3|=n-2d+\simm(\bX_i,\bX_j)$. Now, for all $\delta_0 \in [n]$ the followings must hold:
\begin{itemize}
    \item If $\delta_0 \in [n]\cap \Delta_1$, then $g_{\tau \delta_0} = \blacksquare$ or $g_{\tau \delta_0}=0$. Additionally, there must exist $\delta_1 \in [n] \cap \Delta$ such that $g_{\tau \delta_1} = \blacksquare$.
    \item If $\delta_0 \in [n] \cap \Delta_2$ then $g_{\tau \delta_0}= 0$.
    \item If $\delta_0 \in [n]\cap \Delta_3$, then $\bX_i=\bX_j$ is independent of the value of $g_{\tau \delta_0}$. 
\end{itemize}
\par The first bullet point arises from the fact that both $\bX_i$ and $\bX_j$ are informative, while the second bullet point results from the condition $\bX_i=\bX_j$. The third bullet point is the consequence of the fact that both the condition of informative and $\bX_i=\bX_j$ are independent of the zero-cells of $\bX_i$ and $\bX_j$.
\par Thus, we have:
\begin{align*}
    \Pr(\bX_i=\bX_j)&=\left[\displaystyle\prod_{\delta_0 \in \Delta_1}\Pr(\tau_{\delta_0}=-1 \text{ or }\tau_{\delta_0}=0)-\displaystyle\prod_{\delta_0 \in \Delta_1}\Pr(\tau_{\delta_0}=0)\right]\times \left[\displaystyle\prod_{\delta_0 \in \Delta_2}\Pr(\tau_{\delta_0}=0)\right]\\
    &=\left\{(1-p+pq)^{\simm(\bX_i,\bX_j)}-\left[(1-p)(1-q)\right]^{\simm(\bX_i,\bX_j)}\right\}\times \left[(1-p)(1-q)\right]^{2d-2\simm(\bX_i,\bX_j)}\\
    &=\Upsilon(\simm(\bX_i,\bX_j))
\end{align*}
\par This completes the proof.
\end{proof}
\par By using \cref{same_pair_lemma}, we derive a formal formula for $\phi$ as follows:
\begin{lemma}\label{phi_cal}
Let $n, d $ be defined in~\cref{sec:intro}, $\Upsilon$ be defined in \cref{Ups_equ} and $\phi$ be defined in \cref{general_E_for}. Then, we have:
    \begin{align*}
        \phi=\binom{n}{d}\displaystyle\sum_{i=0}^{d-1}\binom{d}{i}\binom{n-d}{d-i}\Upsilon(i).
    \end{align*}
\end{lemma}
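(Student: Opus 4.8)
The plan is to combine \cref{same_pair_lemma} with a purely combinatorial counting argument that groups the ordered pairs appearing in $\phi$ according to their overlap. By \cref{same_pair_lemma}, for any $\bX_i \neq \bX_j \in \cT_d$ that make $(\bX_i,\tau)$ and $(\bX_j,\tau)$ informative, $\Pr(\bX_i = \bX_j) = \Upsilon(\simm(\bX_i,\bX_j))$, which depends on the pair only through the overlap $\simm(\bX_i,\bX_j) = |\supp(\bX_i)\cap\supp(\bX_j)|$. Hence $\phi = \sum_{\bX_i \neq \bX_j \in \cT_d} \Upsilon(\simm(\bX_i,\bX_j))$, and the whole task reduces to counting, for each admissible overlap value $i$, how many ordered pairs of distinct weight-$d$ vectors realize $\simm(\bX_i,\bX_j)=i$, then weighting each count by $\Upsilon(i)$.

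First I would pin down the range of the overlap. Two weight-$d$ vectors have $\simm(\bX_i,\bX_j) \in \{0,1,\dots,d\}$, but $\simm = d$ forces $\supp(\bX_i) = \supp(\bX_j)$ and hence $\bX_i = \bX_j$, which is excluded by the constraint $\bX_i \neq \bX_j$ in the definition of $\phi$. So the overlap runs over $0 \le i \le d-1$, matching the summation limits in the claim. Next, for a fixed $i$ in this range I would count the ordered pairs: choose $\bX_i$ arbitrarily in $\binom{n}{d}$ ways; writing $S := \supp(\bX_i)$ with $|S|=d$, the support $T := \supp(\bX_j)$ must satisfy $|S\cap T| = i$, so I select the $i$ common indices inside $S$ in $\binom{d}{i}$ ways and the remaining $d-i$ indices of $T$ from $[n]\setminus S$ (of size $n-d$) in $\binom{n-d}{d-i}$ ways. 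This determines $T$, and hence $\bX_j$, uniquely, so the number of ordered pairs with overlap exactly $i$ is $\binom{n}{d}\binom{d}{i}\binom{n-d}{d-i}$.

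Summing $\Upsilon(i)$ against these counts over $i = 0,\dots,d-1$ and factoring out $\binom{n}{d}$ yields
\begin{align*}
    \phi = \sum_{i=0}^{d-1} \binom{n}{d}\binom{d}{i}\binom{n-d}{d-i}\,\Upsilon(i) = \binom{n}{d}\sum_{i=0}^{d-1}\binom{d}{i}\binom{n-d}{d-i}\,\Upsilon(i),
\end{align*}
which is exactly the stated formula. There is no analytic obstacle here; the argument is clean bijective bookkeeping. The only points that require a little care are excluding $i=d$ (so that the $\bX_i \neq \bX_j$ condition is respected and the sum stops at $d-1$), and verifying that the ``fix $\bX_i$, then choose $T$ by its overlap pattern with $S$'' scheme counts each ordered distinct pair exactly once with no double counting, which holds because specifying the $i$ shared indices together with the $d-i$ fresh indices pins down $T$ without ambiguity.
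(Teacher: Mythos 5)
Your proof is correct and follows essentially the same route as the paper's: both invoke \cref{same_pair_lemma} to replace each term $\Pr(\bX_i=\bX_j)$ by $\Upsilon(\simm(\bX_i,\bX_j))$, then count the ordered pairs with overlap exactly $i$ as $\binom{n}{d}\binom{d}{i}\binom{n-d}{d-i}$ by fixing the first vector and choosing the second via its intersection pattern. Your write-up is in fact slightly more careful than the paper's, since you explicitly justify why the overlap $i=d$ is excluded and why the counting scheme has no double counting.
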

\begin{proof}
    \par Consider any vector $\beta_0 \in \cT_d$. The number of vectors $\beta_1 \in \cT_d$ such that $\simm(\beta_0,\beta_1)=i$ is $\binom{d}{i}\binom{n-d}{d-i}$ for all $i \in \{0,\dots,d-1\}$. Hence, the total number of pair $(\beta_1,\beta_2)\in \cT_d\times \cT_d$ such that $\simm(\beta_1, \beta_2)=i$ is exactly $\binom{n}{d}\binom{d}{i}\binom{n-d}{d-i}$ for all $i \in \{0,\dots,d-1\}$. By summing up all these quantities, we acquire $\phi$ as mentioned.
\end{proof}

\emph{Proof of~\cref{thm:Formula_E}:} By substituting~\cref{phi_cal} into~\cref{general_E_for}, Theorem~\ref{thm:Formula_E} is attained.
\\\\
Note that, by Theorem~\ref{thm:s=1}, $\bbE[\omega] = \Upsilon(d)$, and if we randomly select an element $\mathbf{g}$ from $\cT_d$, the probability of $(\mathbf{g}, \tau)$ being informative is also $\Upsilon(d)$. Even though the quantity $\Upsilon(d)$ is crucial to our bound in \cref{Expectation_bound}, its formulation does not provide clear intuition about how the function scales with changes in the missing probability $q$. If we define $\Upsilon(d) = b^d - a^d$, where $b = 1 - p + pq$ and $a = (1 - p)(1 - q)$, we can derive upper and lower bounds for $\Upsilon(d)$ that offer a more interpretable understanding of how it behaves as $q$ varies in the following lemma.
\begin{lemma}\label{bound for Ups}
    With probabilities $p,q$, we have
    \begin{align*}
        qd(1-p-q+pq)^{d-1} < \Upsilon(d) < qd(1-p+pq)^{d-1}
    \end{align*}
\end{lemma}
\begin{proof}
    We have
    \begin{align*}
        b^d-a^d=(b-a)(b^{d-1}+b^{d-2}a+\dots+ba^{d-2}+a^{d-1})
    \end{align*}
    But since $a<b$, we have 
    $$(b-a)da^{d-1}<b^d-a^d<(b-a)db^{d-1} $$
    Replacing $a ,b$ with $(1-p)(1-q)$ and $1-p+pq$, we complete the proof.
\end{proof}
To further demonstrate the bound given in \cref{bound for Ups}, we have drawn out the value of $\Upsilon(d)$ and its two bounds with $d=10$, $p=0.1$, and $q$ varies from 0.01 to 0.1 in \cref{fig: Bound for Ups}.
\begin{figure}[!ht]
    \centering
\scalebox{0.6}{\includegraphics[width=0.8\linewidth]{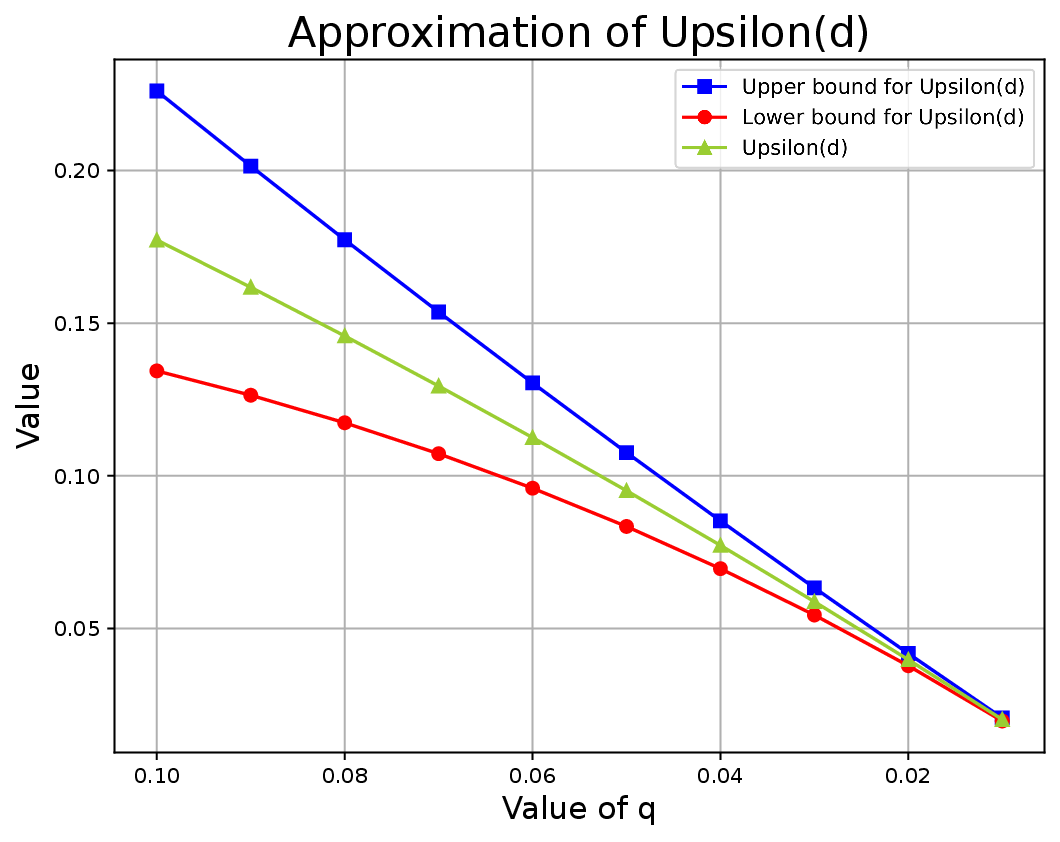}}
    \caption{The upper and lower bound for $\Upsilon(d)$ with $d=10$, $p=0.1$ and $q$ varies from $0.01$ to $0.1$.}
    \label{fig: Bound for Ups}
\end{figure}

\section{Proof of Theorem \ref{Expectation_bound}} \label{append: B}

\par First, to attain \cref{theo6 eq1}, we will go through the following lemma which would help to reduce the complexity of the formula introduced in \cref{theo5}.
\begin{lemma}\label{Inclusion-Exclusion Bound}
    Let $n, d, p, q,s$, $\mathbf{X}$, $\cT_d$ be defined in~\cref{sec:intro}; $\Upsilon$ be defined in \cref{Ups_equ} and $\Pr(\theta_=)$ be defined in \cref{def theta_=}. Then, we have:
    \begin{align}
    -\dfrac{\binom{n}{d}\binom{\binom{n}{d}-2}{s-2}}{2\binom{\binom{n}{d}}{s}}\cdot \left[\displaystyle\sum_{i=0}^{d-1}\binom{d}{i}\binom{n-d}{d-i}\Upsilon(i)\right]+\displaystyle\sum_{c=3}^{s}(-1)^{c+1}\dfrac{\displaystyle\sum_{\operatorname{col}(\mathbf{X})\subseteq \cT_d}\displaystyle\sum_{\theta\subseteq\operatorname{col}(\mathbf{X}),|\theta|=c}\Pr(\theta_=)}{\binom{\binom{n}{d}}{s}} &\leq 0, \label{lem:Inclusion-Exclusion Bound:1} \\
    \displaystyle\sum_{c=3}^{s}(-1)^{c+1}\dfrac{\displaystyle\sum_{\operatorname{col}(\mathbf{X})\subseteq \cT_d} \displaystyle\sum_{\theta\subseteq\operatorname{col}(\mathbf{X}),|\theta|=c}\Pr(\theta_=)}{\binom{\binom{n}{d}}{s}} &\geq 0. \label{lem:Inclusion-Exclusion Bound:2}
    \end{align}
\end{lemma}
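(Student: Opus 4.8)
The plan is to recognize the three terms in the statement as the tail of an inclusion–exclusion expansion and to reduce both inequalities to two elementary identities for alternating binomial sums, applied separately to each equivalence class of informative pairs. Throughout I would fix a row $\tau$ and a sample set $\chi$, and note that the relation ``identical'' from~\cref{dif-same-def} is an equivalence relation on the informative pairs $\{(\bX,\tau)\mid \bX\in\chi \text{ informative}\}$, since $(\bX,\tau)$ and $(\bX',\tau)$ are identical iff $\Psi_{(\bX,\tau)}=\Psi_{(\bX',\tau)}$ as sets. Hence the informative pairs split into equivalence classes, $\omega=|\cX_\tau|$ is exactly the number of these classes, and for each $c$ the number of pairwise-identical informative $c$-subsets of $\chi$ equals $X_c:=\sum_{\text{classes}}\binom{\ell}{c}$, where $\ell$ denotes a class size. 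Averaging over uniformly chosen $\chi$ and over $\tau$ then identifies $\tfrac{1}{\binom{\binom{n}{d}}{s}}\sum_{\chi\subseteq\cT_d}\sum_{\theta\subseteq\chi,|\theta|=c}\Pr(\theta_=)$ with $\bbE[X_c]$; in particular, via~\cref{phi_cal}, the leading term of the statement is precisely $-\bbE[X_2]$.

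Next I would record the two combinatorial identities, both consequences of $\sum_{c=0}^{k}(-1)^c\binom{\ell}{c}=(-1)^k\binom{\ell-1}{k}$ together with $\sum_{c=0}^{\ell}(-1)^c\binom{\ell}{c}=0$ for $\ell\ge 1$:
\[
\sum_{c=2}^{s}(-1)^{c+1}\binom{\ell}{c}=1-\ell\le 0
\qquad\text{and}\qquad
\sum_{c=3}^{s}(-1)^{c+1}\binom{\ell}{c}=\binom{\ell-1}{2}\ge 0 ,
\]
valid for every class size $\ell\ge 1$; the sums truncate automatically because $\binom{\ell}{c}=0$ whenever $c>\ell$, and since $\ell\le s$ there is no loss in letting $c$ run to $s$. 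Summing the right-hand identity over all classes gives, for each fixed $(\tau,\chi)$, $\sum_{c=3}^{s}(-1)^{c+1}X_c=\sum_{\text{classes}}\binom{\ell-1}{2}\ge 0$; taking expectations and using linearity yields \eqref{lem:Inclusion-Exclusion Bound:2}. Summing the left-hand identity gives $\sum_{c=2}^{s}(-1)^{c+1}X_c=\#\{\text{classes}\}-\#\{\text{informative pairs}\}\le 0$, because every class is nonempty; after expectation and the identification of the $c=2$ contribution with $-\bbE[X_2]$ via~\cref{phi_cal}, this is exactly \eqref{lem:Inclusion-Exclusion Bound:1}.

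The hard part will not be a genuine inequality but the bookkeeping that converts the double sum over $(\chi,\theta)$ into $\bbE[X_c]$: one must check that the normalization $\binom{\binom{n}{d}}{s}$ counts the sample sets and that each fixed $c$-subset of $\cT_d$ is contained in exactly $\binom{\binom{n}{d}-c}{s-c}$ of them, and in particular track the ordered-versus-unordered factor of $2$ hidden in $\phi=\sum_{\bX_i\neq\bX_j}\Pr(\bX_i=\bX_j)$ so that the $c=2$ term matches $-\bbE[X_2]$. I would also flag the requirement that the alternating-sum identity be applied only to nonempty classes ($\ell\ge 1$), so that $\sum_{c=0}^{\ell}(-1)^c\binom{\ell}{c}=0$ is available. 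Once the per-class reduction is established, both inequalities hold pointwise in $(\tau,\chi)$ and survive averaging by linearity of expectation, completing the proof.
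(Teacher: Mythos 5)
Your proof is correct and is essentially the paper's own argument: the paper likewise partitions $\chi$ into equivalence classes $Z_1,\dots,Z_{r}$ of identical informative pairs (plus the non-informative leftovers), proves the inequalities pointwise for every realization of $(\tau,\chi)$, and finishes by linearity of expectation, using the same identification of the normalized double sums with $\bbE[X_c]$ (its $\bbE[\chi_c]$). Your closed-form truncation identities $\sum_{c=3}^{s}(-1)^{c+1}\binom{\ell}{c}=\binom{\ell-1}{2}\ge 0$ and $\sum_{c=2}^{s}(-1)^{c+1}\binom{\ell}{c}=1-\ell\le 0$ are exactly the paper's per-class comparisons, which it phrases instead through the auxiliary variable $Y=\sum_i X_i-\sum_{u<v}X_{u,v}$ and the pointwise bounds $Y\le\omega$ (i.e., $|Z_i|-\binom{|Z_i|}{2}\le 1$) and $\omega\le\sum_i X_i$, combined with the expansion $\bbE[\omega|s]=L[1]-L[2]+L[3]$.
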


\begin{proof}
\par First, let us denote:
\begin{align}
    &L[1]=s\Upsilon(d), \label{eqn:lem5:1} \\
    &L[2]=\dfrac{\binom{n}{d}\binom{\binom{n}{d}-2}{s-2}}{2\binom{\binom{n}{d}}{s}}\cdot \left[\displaystyle\sum_{i=0}^{d-1}\binom{d}{i}\binom{n-d}{d-i}\Upsilon(i)\right], \label{eqn:lem5:2} \\
    &L[3]=\displaystyle\sum_{c=3}^{s}(-1)^{c+1}\dfrac{\displaystyle\sum_{\operatorname{col}(\mathbf{X})\subseteq \cT_d} \displaystyle\sum_{\theta\subseteq\operatorname{col}(\mathbf{X}),|\theta|=c}\Pr(\theta_=)}{\binom{\binom{n}{d}}{s}}. \label{eqn:lem5:3}
    \end{align}
\par Hence, $\bbE[\omega]=L[1]-L[2]+L[3]$. The proof of~\cref{lem:Inclusion-Exclusion Bound:2} is as follows. For every uniformly random generated $\operatorname{col}(\mathbf{X})\subseteq \cT_d$, denote $\mathbf{X}=\left[\bX_1^{\top},\dots,\bX_s^{\top}\right]$. Let $C_i$ be a random variable that take the value 1 if and only if $(\bX_i,\mathbf{\tau})$ is informative and 0 otherwise. Additionally, for all $0<u<v\leq s$, we denote $C_{u,v}$ be a random variable that is 1 if and only if $\bX_u$ and $\bX_v$ are identical. Then we denote $Y= \displaystyle\sum_{i}C_i-\displaystyle\sum_{u<v}C_{u,v}.$
\par It is straightforward to see that $\bbE[Y]=L[1]-L[2]$. So all we need to do now is proving $\bbE[Y]<\bbE[\omega]$. To do this we will show that with every way of choosing $\mathbf{X}$ and the entries of row $\mathbf{\tau}$, we have $Y \leq \omega$. For a chosen $\mathbf{X}$ and row $\mathbf{\tau}$, we can partition $\mathbf{X}=\displaystyle\bigcup_{i=1,\dots,r+1} Z_i $ where for all $h \in \{1,\dots,r\}$ and for every $\mathbf{u}\neq \mathbf{v}\in Z_h$, we have $(\mathbf{v},\mathbf{\tau}),(\mathbf{u},\mathbf{\tau})$ are identical. Additionally, for all $i\neq j \in \{1,\dots,r\}$ and every $\mathbf{u}\in Z_i, \mathbf{v} \in Z_j$ then $(\mathbf{v},\mathbf{\tau}),(\mathbf{u},\mathbf{\tau})$ are not identical. Furthermore, for all $\mathbf{w} \in Z_{r+1}$ we have $(\mathbf{w},\mathbf{\tau})$ is not informative. Now, because of the definition of $\omega$ and $Y$, we have $\omega=r$, and $Y=\displaystyle\sum_{i=1}^{r}|Z_i|-\displaystyle\sum_{i=1}^{r}\binom{|Z_i|}{2}.$ Hence $Y \leq \omega$.
\par Using the same technique, we can also prove~\cref{lem:Inclusion-Exclusion Bound:1}.
\end{proof}

\par Now by applying \cref{Inclusion-Exclusion Bound}, along with some rearrangement of the variables, one can quickly derive \cref{theo6 eq1}. Nevertheless, the terms of the lower bound of $\bbE[h]/t$ in \cref{theo6 eq1} are somehow complex. To make it a simple one, we consider the case $n > (d+1)^2$. In addition to shorten the writing, we have the following notations:
\begin{equation} \label{omega_denote}
a=(1-p)(1-q),\quad b=1-p+pq,\quad
    \Omega(d,n)= \dfrac{\displaystyle\sum_{i=0}^{d-1}\binom{d}{i}\binom{n-d}{d-i}\Upsilon(i)}{\displaystyle\sum_{i=0}^{d-1}\binom{d}{i}\binom{n-d}{d-i}\Upsilon(d)}.
\end{equation}
\par Next we derive two crucial monotonic properties of $\Omega(d,n)$, which are shown on \cref{binom inq lemma} and \cref{Ups inc lemma}. 
\begin{lemma} \label{binom inq lemma}
    Let $n,d$ be positive integers such that $n>(d+1)^2$ then for all $i\in \{0,\dots,d-1\}$, we have:
    \begin{align}\label{binom inequality}
        \binom{d}{i}\binom{n-d}{d-i}>\binom{d}{i+1}\binom{n-d}{d-i-1}.
    \end{align}
\end{lemma}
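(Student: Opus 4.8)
The plan is to prove \eqref{binom inequality} by passing to the ratio of its two sides, which are manifestly positive under the hypothesis $n > (d+1)^2$, and reducing the claim to a single quadratic inequality in $i$ and $d$. First I would record that all four binomial coefficients are strictly positive: since $0 \le i \le d-1$ we have $1 \le i+1 \le d$ and $0 \le d-i-1 \le d-i \le d$, while $n > (d+1)^2 > 2d$ forces $n-d > d \ge d-i$, so every ``top'' exceeds its ``bottom.'' This legitimizes dividing the inequality by $\binom{d}{i+1}\binom{n-d}{d-i-1}$ without worrying about sign or vanishing denominators.

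Next I would evaluate the two elementary ratios separately via the standard recurrences $\binom{d}{i+1} = \binom{d}{i}\frac{d-i}{i+1}$ and $\binom{n-d}{d-i} = \binom{n-d}{d-i-1}\frac{n-2d+i+1}{d-i}$. Multiplying the reciprocals shows that \eqref{binom inequality} is equivalent to
\begin{equation}
(i+1)(n-2d+i+1) > (d-i)^2. \nonumber
\end{equation}
This quadratic inequality is the heart of the lemma; the rest is bookkeeping, and it is where the threshold $n > (d+1)^2$ must be used.

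To establish it I would substitute the lower bound $n > (d+1)^2 = d^2 + 2d + 1$, obtaining the strict estimate $n - 2d + i + 1 > d^2 + i + 2$, and then compare $(i+1)(d^2+i+2)$ against $(d-i)^2$. A short expansion yields the exact identity
\begin{equation}
(i+1)(d^2+i+2) - (d-i)^2 = i\,(d^2 + 2d + 3) + 2, \nonumber
\end{equation}
whose right-hand side is at least $2$ for every integer $i \ge 0$. Chaining the strict inequality coming from the hypothesis with this nonnegativity gives $(i+1)(n-2d+i+1) > (i+1)(d^2+i+2) \ge (d-i)^2$, which is precisely the reduced form of \eqref{binom inequality}.

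I expect the only genuine obstacle to be choosing the cleaner of the two natural routes. One can clear denominators immediately, but that produces a cross-multiplied inequality mixing both binomial factors and obscures the role of $n$; alternatively, as above, one factors the problem into two monotone ratios and recombines, which keeps the algebra linear and makes the threshold $n>(d+1)^2$ transparent. I would commit to the latter from the outset, after which the expansion is routine and the sign of the leading identity is immediate.
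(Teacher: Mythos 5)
Your proposal is correct and follows essentially the same route as the paper: both reduce the binomial inequality to the equivalent quadratic inequality $(i+1)(n-2d+i+1) > (d-i)^2$ (the paper via direct factorial expansion, you via ratio recurrences) and then invoke $n > (d+1)^2$. Your explicit identity $(i+1)(d^2+i+2) - (d-i)^2 = i(d^2+2d+3) + 2$ actually makes the final step more transparent than the paper's bare assertion that the reduced inequality ``is true for all $n>(d+1)^2$,'' but the underlying argument is the same.
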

\begin{proof}
    \par \cref{binom inequality} can be rewritten as:
    \begin{align*}
        \dfrac{d!}{i!(d-i)!}\dfrac{(n-d)!}{(d-i)!(n-2d+i)!}>\dfrac{d!}{(i+1)!(d-i-1)!}\dfrac{(n-d)!}{(d-i-1)!(n-2d+i+1)!}.
    \end{align*}
    \par Thus by simplifying, it is equivalent to:
    \begin{align*}
        (n-2d)(i+1)+2di+2i+1>d^2.
    \end{align*}
    \par This is true for all $n>(d+1)^2$, hence the proof completes.
\end{proof}
\begin{lemma} \label{Ups inc lemma}
    Let $\Upsilon$ be defined in \cref{Ups_equ}. Then , for all $i\in\{0,\dots,d-1\}$, we have:
    \begin{align*}
        \Upsilon(i)<\Upsilon(i+1).
    \end{align*}
\end{lemma}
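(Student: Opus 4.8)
The plan is to reduce $\Upsilon$ to a product form and then to a single scalar inequality driven by the fact that $b>a$. First I would substitute the shorthands $a=(1-p)(1-q)$ and $b=1-p+pq$ (as in \cref{Ups_equ}) and note the clean factorization $\Upsilon(u)=a^{2d-2u}b^u-a^{2d-u}=a^{2d-2u}(b^u-a^u)$. This form is convenient because it separates the decaying factor $a^{2d-2u}$ from the gap factor $b^u-a^u$, which is where the monotonicity really lives.

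Next I would record the elementary sign facts about $a$ and $b$ on which the whole argument rests. Since $0<p,q<1$, both $a$ and $b$ lie strictly in $(0,1)$. A one-line computation gives $b-a=(1-p+pq)-(1-p)(1-q)=q>0$, so $a<b$; and because $a<1$ we have $a^2<a$, hence $b-a^2>0$ and $1-a>0$. The inequality $b>a$ is the engine of the proof.

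Then I would compute the forward difference $\Upsilon(i+1)-\Upsilon(i)$ directly from the factored form and pull out the common factor $a^{2d-2i-2}$, whose exponent is nonnegative for $i\le d-1$ so that the factor is a positive real. After collecting terms (using $b^{i+1}-a^2 b^i=b^i(b-a^2)$ and $-a^{i+1}+a^{i+2}=-a^{i+1}(1-a)$), the claim reduces to the single inequality $b^i(b-a^2)>a^{i+1}(1-a)$.

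Finally I would close this by a two-step comparison: since $b>a>0$ we have $b^i\ge a^i>0$, and since $b>a$ we have $b-a^2>a-a^2$; chaining these yields $b^i(b-a^2)\ge a^i(b-a^2)>a^i(a-a^2)=a^{i+1}(1-a)$, which is exactly what is needed, so $\Upsilon(i+1)-\Upsilon(i)>0$. I do not anticipate a genuine obstacle here; the only care points are the bookkeeping in the difference step and verifying each sign ($b-a^2>0$, $1-a>0$, exponent $2d-2i-2\ge0$). It is worth remarking that the bound $n>(d+1)^2$ invoked in \cref{Expectation_bound} is \emph{not} required for this lemma, which holds for all admissible $p,q$ and all $i\in\{0,\dots,d-1\}$.
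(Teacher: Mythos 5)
Your proof is correct, but it takes a genuinely different route from the paper's. The paper treats $\Upsilon(u)=a^{2d-2u}b^u-a^{2d-u}$ as a differentiable function of a real variable $u$ and computes
\begin{align*}
\frac{d}{du}\Upsilon(u)=a^{2d-2u}b^u\left[\ln(b)-\ln(a)\right]-\ln(a)\,a^{2d-2u}\left(b^u-a^u\right),
\end{align*}
observing that both terms are positive because $1>b>a>0$ (so $\ln(b)-\ln(a)>0$ and $-\ln(a)>0$); positivity of the derivative on $[0,d]$ then gives the discrete inequality as a corollary. You instead work entirely at the discrete level: factor $\Upsilon(u)=a^{2d-2u}(b^u-a^u)$, compute the forward difference $\Upsilon(i+1)-\Upsilon(i)$, extract the positive factor $a^{2d-2i-2}$, and reduce the claim to $b^i(b-a^2)>a^{i+1}(1-a)$, which follows by chaining $b^i\ge a^i$ with $b-a^2>a-a^2$. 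Your version is more elementary---no calculus, no logarithms---and it makes explicit two facts the paper leaves implicit, namely $b-a=q>0$ and the sign checks on $b-a^2$ and $1-a$; the paper's version is shorter and proves the slightly stronger statement that $\Upsilon$ is strictly increasing on the whole real interval $[0,d]$, not just on the integers. Your closing remark that the hypothesis $n>(d+1)^2$ is irrelevant to this lemma is also accurate: that condition is used only in \cref{binom inq lemma} for the binomial sequence, not here.
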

\begin{proof}
    \par We have $\Upsilon(u)=a^{2d-2u}b^u-a^{2d-u}$. Thus, the derivative in terms of $u$ can be calculated as:
    \begin{align*}
        \frac{d}{du}\Upsilon(u)&=a^{2d-2u}b^u\ln(b)-2a^{2d-2u}b^u\ln(a)+a^{2d-u}\ln(a)\\
        &=a^{2d-2u}b^u[\ln(b)-\ln(a)]-\ln(a)a^{2d-2u}(b^u-a^u).
    \end{align*}
    \par But since we have $1>b>a>0$, we have $a^{2d-2u}b^u[\ln(b)-\ln(a)]>0$ and $-\ln(a)a^{2d-2u}(b^u-a^u)>0$. Thus, for all $0\leq u\leq d$ we have $\frac{d}{du}\Upsilon(u)>0$. This directly yields the desired property.
\end{proof}

\par Lastly, by taking advantage of the two monotone sequences mentioned by~\cref{binom inq lemma} and \cref{Ups inc lemma}, in addition with Chebyshev's sum inequality, we derive \cref{lem Omega}.
\begin{lemma} \label{lem Omega}
    Let $\Omega,a$ and $b$ be defined in \cref{omega_denote}. If $d,n$ are defined in \cref{sec:intro} and satisfy $n > (d+1)^2$, then we have:
    \begin{align*}
        \Omega(d, n)<\dfrac{a^2}{b-a^2}d^{-1}.
    \end{align*}
\end{lemma}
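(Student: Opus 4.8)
The plan is to exploit the product structure of the summand together with the two monotonicity facts just established. First I would abbreviate $w_i := \binom{d}{i}\binom{n-d}{d-i}$ and observe that, since $\Upsilon(d)$ is independent of the summation index, the denominator of $\Omega(d,n)$ factors as $\Upsilon(d)\sum_{i=0}^{d-1} w_i$. Hence
\[
\Omega(d,n) = \frac{\sum_{i=0}^{d-1} w_i\,\Upsilon(i)}{\Upsilon(d)\sum_{i=0}^{d-1} w_i},
\]
and the task reduces to controlling the weighted average $\bigl(\sum w_i\Upsilon(i)\bigr)/\bigl(\sum w_i\bigr)$ relative to $\Upsilon(d)$.

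Second, I would invoke Chebyshev's sum inequality on the two length-$d$ sequences $(w_i)_{i=0}^{d-1}$ and $(\Upsilon(i))_{i=0}^{d-1}$. By \cref{binom inq lemma} the sequence $w_i$ is strictly decreasing (this is where $n>(d+1)^2$ enters), while by \cref{Ups inc lemma} the sequence $\Upsilon(i)$ is strictly increasing, so the two are oppositely ordered. Chebyshev's inequality then gives
\[
\sum_{i=0}^{d-1} w_i\,\Upsilon(i) \leq \frac{1}{d}\Bigl(\sum_{i=0}^{d-1} w_i\Bigr)\Bigl(\sum_{i=0}^{d-1}\Upsilon(i)\Bigr),
\]
and dividing by $\Upsilon(d)\sum_{i=0}^{d-1} w_i$ cancels the binomial weights completely, leaving the clean bound $\Omega(d,n) \leq \tfrac{1}{d}\cdot\bigl(\sum_{i=0}^{d-1}\Upsilon(i)\bigr)/\Upsilon(d)$.

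The remaining and main computational step is to show $\sum_{i=0}^{d-1}\Upsilon(i) < \frac{a^2}{b-a^2}\,\Upsilon(d)$. Here I would rewrite $\Upsilon(u) = a^{2d}(b/a^2)^u - a^{2d}(1/a)^u$, so each of the two pieces is geometric in $u$ and sums in closed form. Using $\Upsilon(d)=b^d-a^d$ and evaluating the two geometric series yields
\[
\sum_{i=0}^{d-1}\Upsilon(i) = \frac{a^2(b^d-a^{2d})}{b-a^2} - \frac{a^{d+1}(1-a^d)}{1-a}.
\]
Substituting this into the target inequality, the dominant term $a^2 b^d/(b-a^2)$ cancels, and after dividing out the common negative factor $a^{d+1}(a^d-1)$ the whole inequality collapses to $\frac{1}{1-a} > \frac{a}{b-a^2}$, equivalently $b-a^2 > a-a^2$, i.e.\ $b>a$. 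This last fact holds because $b-a = (1-p+pq)-(1-p)(1-q) = q > 0$, which simultaneously gives $b>a>a^2$ and hence the positivity of $b-a^2$ used throughout. Chaining the two estimates then delivers $\Omega(d,n) < \frac{a^2}{b-a^2}d^{-1}$.

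I expect the main obstacle to lie in the bookkeeping of the third step rather than in any conceptual difficulty: the two geometric summations produce several terms with distinct exponents, and since $0<a<1$ the factor $a^d-1$ is negative, so one must track the sign carefully when dividing through in order to preserve the direction of the inequality and legitimately reduce everything to the single comparison $b>a$.
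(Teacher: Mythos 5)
Your proposal is correct and follows essentially the same route as the paper: Chebyshev's sum inequality applied to the oppositely ordered sequences from \cref{binom inq lemma} and \cref{Ups inc lemma}, followed by closed-form evaluation of the geometric sums in $\sum_{i=0}^{d-1}\Upsilon(i)$. The only difference is cosmetic bookkeeping in the last step --- the paper bounds the resulting single-fraction numerator via $(a^2/b)^d<(a/b)^d$, while you cross-multiply and reduce the inequality to $b>a$ (i.e., $q>0$); both are equivalent algebraic finishes.
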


\begin{proof}
    \par Consider two real number sequences $a_0,\dots,a_{d-1}$ and $b_0,\dots,b_{d-1}$ such that $a_i=\Upsilon(i)$ and $b_i=\binom{d}{i}\binom{n-d}{d-i}$. Now by applying \cref{binom inq lemma} and \cref{Ups inc lemma}, we have: $$a_0<\dots<a_{d-1},$$ $$b_0>\dots>b_{d-1}.$$ Thus by using Chebyshev's sum inequality, we get
    \begin{align*}
        \Omega(d, n) \leq d^{-1}\cdot \dfrac{\left[\displaystyle\sum_{i=0}^{d-1}\binom{d}{i}\binom{n-d}{d-i}\right]\left[\displaystyle\sum_{i=0}^{d-1}\Upsilon(i)\right]}{\displaystyle\sum_{i=0}^{d-1}\binom{d}{i}\binom{n-d}{d-i}\Upsilon(d)} =d^{-1}\cdot \dfrac{\displaystyle\sum_{i=0}^{d-1}\Upsilon(i)}{\Upsilon(d)}.
    \end{align*}
    \par By substituting $\Upsilon(i)=a^{2d-2i}b^i-a^{2d-i}$ and simplifying the terms, we get:
    \begin{align*}
        \Omega(d,n)<d^{-1}\cdot\dfrac{a^2-a^3+(b-a)a\left(\frac{a^2}{b}\right)^d+\left(\frac{a}{b}\right)^d(a^3-ab)}{\left[1-\left(\frac{a}{b}\right)^d\right](b-a^2)(1-a)}.
    \end{align*}
    \par Furthermore, since $0<a^2<a<b<1$, we get $\left(\frac{a^2}{b}\right)^d<\left(\frac{a}{b}\right)^d$. Thus $$a^2-a^3+(b-a)a\left(\frac{a^2}{b}\right)^d+\left(\frac{a}{b}\right)^d(a^3-ab)<\left[1-\left(\frac{a}{b}\right)^d\right](a^2-a^3).$$ 
    \par Combining this with our bound for $\Omega(d,N)$ we get:
    \begin{align*}
        \Omega(d,n)<d^{-1}\cdot\dfrac{\left[1-\left(\frac{a}{b}\right)^d\right](a^2-a^3)}{\left[1-\left(\frac{a}{b}\right)^d\right](b-a^2)(1-a)}=d^{-1}\cdot\dfrac{a^2}{b-a^2}.
    \end{align*}
    \par This completes the proof.
\end{proof}

\emph{Proof of~\cref{Expectation_bound}.} By substituting \cref{lem Omega} back into \cref{theo6 eq1}, we immediately acquire \cref{sample-bound}, which proves \cref{Expectation_bound}.

\section{Proof for Theorem \ref{theo: universal bound for GT}}

\subsection{Proof for Theorem \ref{independent of success}}
\label{proof for theo idsuc}

\begin{proof}
    We know that $\Pr(succ)$ is the probability that we can recover $\Psi$. But since, $\Psi$ is partitioned into $t$ sets $\Psi[1],\dots,\Psi[t]$, each is being recovered in the corresponding problems $\mathcal{Q}_1,\dots,\mathcal{Q}_t$, hence:
    \begin{align}\label{prob inde eq}
        \Pr(succ)=\Pr(\mathcal{Q}_1 \text{ is success},\dots,\mathcal{Q}_t \text{ is success})
    \end{align}
    We are left to prove that $\Pr_{\mathcal{Q}_1}(succ),\dots,\Pr_{\mathcal{Q}_t}(succ)$ are pairwise independent. This can be shown by noticing that, given $i \in [t]$, due to how we define $\tilde{\Gamma}$ and $\mathcal{Q}_i$ previously, for any column $j$ such that $\sigma_1+\dots+\sigma_{i-1}+1\leq j \leq \sigma_1 +\dots+\sigma_i$ and row $r$, we have:
    \begin{align*}
        \tilde{\Gamma}[r,j]=0 \quad\forall r\notin [s(i-1)+1,si] 
    \end{align*}
    This directly lead to the fact that for any $u\neq v\in [t]$, the $s$ rows in problem $\mathcal{Q}_u$ will give no information for the problem $\mathcal{Q}_v$ and vice versa, thus showing that their success probability are independent. Combining this with $\cref{prob inde eq}$, \cref{independent of success} holds.
\end{proof}

\subsection{Proof for Lemma \ref{ber gen}}
\label{proof for ber gen}

\begin{proof}
    Consider $\alpha\in [t]$, denote $\mathcal{Z}_0,\mathcal{Z}_1$ to be the set of all columns of $\mM$ that contain the number 0 and 1 at row $\alpha$, respectively. \\
    As we have mentioned, we have $|\mathcal{Z}_1|=\varphi_\alpha$, $|\mathcal{Z}_0|=n-\varphi_\alpha-\sigma_\alpha$ and $|\Psi[\alpha]|=\sigma_\alpha$. Now let us consider an arbitrary cell of $\tilde{\Gamma}$ with coordinate $(\gamma,\delta)$ and denote $\Psi_{\hat{\delta}}$ to be the corresponding sample at column $\delta$ of $\mathcal{R}_\alpha$. Following the construction of $\tilde{\Gamma}$ as in \cref{alter def gamma}, we have the value at $\mathcal{R}(\gamma,\delta)$ being equal to 1 is equivalent to the value of $x_{\gamma}[j_{\hat{\delta}}]$ being 1 and, for all $t\in \mathcal{Z}_1$, $x_\gamma[t]=0$ . Formally, this can be expressed in terms of probability as:
    \begin{align}\label{prob =1}
        \Pr(R(\gamma,\delta)=1)=\Pr(x_\gamma[j_{\hat{\delta}}]=1, x_\gamma[t]=0\quad\forall t\in \mathcal{Z}_1)
    \end{align}\
    \par Our goal would be to calculate the right-hand side of \cref{prob =1}. We already know that the number of ways to choose $x_\gamma$ is $\binom{n}{d}$. On the other hand, the number of ways to to choose $x_\gamma$ such that $x_\gamma[j_{\hat{\delta}}]=1, x_\gamma[t]=0\quad\forall t\in \mathcal{Z}_1$ is the same as the number of way to choose $d-1$ numbers in $(\mathcal{Z}_0\cup\Psi[\alpha])\backslash \{\hat{\delta}\}$. This amount is $\binom{n-\varphi_\alpha-1}{d-1}$, combining this with with \cref{prob =1}, we have:
    \begin{align*}
        \Pr(R(\gamma,\delta)=1)=\dfrac{\binom{n-\varphi_\alpha-1}{d-1}}{\binom{n}{d}}
    \end{align*}
    This finishes the proof.
\end{proof}
\subsection{Proof for Proposition \ref{theo: general_bound}} \label{proof for general bound}
\begin{proof}
    \textbf{General bound. } Applying \cref{independent of success}, we have
    \begin{align*}
        \Pr(succ)=\displaystyle\prod_{i=1}^{t}\Pr_{\mathcal{Q}_i}(succ)
    \end{align*}
    Furthermore, by the definition of $\Theta$ and $\mathcal{A}$ combining with \cref{ber gen} , one can show 
    $$\Pr(\mathcal{Q}_i\text{ is success)}\geq \Theta(\sigma_i,s,\bar{\varphi}_i,\nu_i)\quad \forall i\in[t]$$
    These two facts yield the general bound in \cref{eq: general bound}.\\
    \textbf{Bounds for $\nu_i$ and $\sigma_i$.} Denote $h_i$ as a random variable that show the number of rows that have at least one test in the measurement matrix of $\mathcal{Q}_i$, for $i\in[t]$. Thus, we would have $$\displaystyle\sum_{i=1}^th_i=h$$. But since $\mathcal{Q}_i$ are proved to be independent of each other, $h_i$ are also independent of each other. Hence, we have
    \begin{align}\label{extra step}
        \sum_{i=1}^t\mathbb{E}[h_i]=\mathbb{E}[h]
    \end{align}
    Now, since the measurement matrix of $\mathcal{Q}_i$ has $\sigma_i$ columns and is $\nu_i$-Bernoulli generated, the probability of a row not being fully 0 is $1-(1-\nu_i)^{\sigma_i}$, hence we have $$\mathbb{E}[h_i]=s(1-(1-\nu_i)^{\sigma_i})\quad \forall i\in[t]$$
    Combining this with \cref{extra step} and \cref{Expectation_bound}, we complete the proof.
\end{proof}

\section{Illustration of Theorem~\ref{thm:construction}}
\label{app:alg}

\subsection{Algorithm}

The procedure in Theorem~\ref{thm:construction} can be parsed to Algorithm~\ref{alg:missing_matrix_vector}.

\begin{algorithm}[H]
\caption{Construction of Converted Missing Matrix $\Gamma$ and converted missing vector $\bV$}

\begin{algorithmic}[1]
\State \textbf{Input:} Matrix $\mM$, $\mathbf{X}$, number of tests $t$
\State \textbf{Output:} Matrix $\Gamma$, vector $\mathbf{v}$
\State Initialize an empty $0 \times r$ matrix $\Gamma$ and a $0 \times 1$ vector $\bV$
\For{each pair $(\mathbf{x}, c) \in \operatorname{col}(\mathbf{X}) \times [t]$ that is informative}
    \State Initialize a row vector $\mathbf{g}$ of length $r$ with all zeros
    \For{each $z \in [r]$}
        \If{$i_z=c$ and $\mathbf{x}_{j_z} = 1$}
            \State Set $g_z = 1$
        \Else
            \State Set $g_z = 0$
        \EndIf
    \EndFor
    \State Append row $\mathbf{g}$ to matrix $\Gamma$
    \State Append $y_c$ to vector $\bV$
\EndFor
\State Remove duplicate rows from $\Gamma$ and corresponding entries from $\bV$
\State \Return $\Gamma$ and $\bV$
\end{algorithmic}

\label{alg:missing_matrix_vector}
\end{algorithm}

\begin{algorithm}
\caption{Construction of $\tilde{\Gamma}$ and $\tilde{\bV}$ }
\label{Algo2}
\begin{algorithmic}[1]
\State \textbf{Input:} Matrix $\mM$, $\mathbf{X}$, number of tests $t$
\State \textbf{Output:} Matrix $\tilde{\Gamma}$, vector $\tilde{\bV}$
\State Initialize an empty $0 \times r$ matrix $\tilde{\Gamma}$ and a $0 \times 1$ vector $\tilde{\bV}$
\For{each row $c$ in $\mM$}
    \For{each $\mathbf{x}$ in $\operatorname{col}(\mathbf{X})$}
        \If{$(\mathbf{x},c)$ is informative}
            \State Initialize a row vector $\mathbf{g}$ of length $r$ with all zeros
            \For{each $z \in [r]$}
                \If{$i_z = c$ and $x_{j_z} = 1$}
                    \State Set $g_z = 1$
                \Else
                    \State Set $g_z = 0$
                \EndIf
            \EndFor
            \State Append row $\mathbf{g}$ to matrix $\tilde{\Gamma}$
            \State Append $y_c$ to vector $\tilde{\bV}$
        \Else
            \State Initialize a row vector $\mathbf{g}$ of length $r$ with all zeros
            \State Append row $\mathbf{g}$ to matrix $\tilde{\Gamma}$
            \State Append $0$ to vector $\tilde{\bV}$
        \EndIf
    \EndFor
\EndFor
\State \textbf{Return} $\tilde{\Gamma}$, $\tilde{\bV}$
\end{algorithmic}
\end{algorithm}

\subsection{Special case for success probability when using the COMP algorithm in a constant-number-of-items per test circumstance}
\label{sec: bet bound for COMP}

\par In this section, we will work with the case where the number of ones in each row of $\mM$ is a constant $\pi$. We will show that in this circumstance, a more numerically friendly bound for the success probability for the COMP algorithm can be yield comparing to \cref{bound fo COMP Psucc}.

\begin{definition}
    We denote a missing cell $\Psi_{\alpha}$ as $\bar{0}$ if $\psi_{\alpha}=0$ and  as $\bar{1}$ otherwise. Following this, our missing matrix $\mG$ is generated with each cell having the value of $0,1,\bar{0},\bar{1}$ with probabilities $(1-p)(1-q),p(1-q),(1-p)q,pq$ respectively.
\end{definition}

\begin{definition}[Coverable set $\mathcal{T}$]
    A cell in $\Psi_j\in\Psi$ is considered \textbf{coverable} if for any row $i$ of $\Gamma$, we have:
    \begin{itemize}
        \item $\Gamma[i,j]=0$ or
        \item $\exists t\in [r]$ such that $\Gamma[i,t]=1$ and $\psi_t=1$.
    \end{itemize}
    The set of all convertable cells is denoted as $\mathcal{T}$.
\end{definition}

\begin{definition}[True set $\mathcal{V}$]
    Consider a missing cell $\Psi_{\alpha}\in\Psi$, $\Psi_{\alpha}$ is called \textbf{true} if there exist $x\in\operatorname{col}(\mathbf{X})$ such that:
    \begin{itemize}
        \item $x_{j_{\alpha}}=1$
        \item $\forall t\in[n]$ such that $x_t=1$, $M[i_{\alpha},t]\in\{0,\bar{0}\}$
    \end{itemize}
    The set of all true cells is denoted as $\mathcal{V}$.
\end{definition}

Denote $\Psi^{0}=\{\Psi_{\alpha}\in\Psi|\psi_{\alpha}=0\}=\{\Psi^{0}_1,\dots,\Psi^{0}_{\bar{r}}\}$.

\begin{lemma}\label{COMP fail}
    The algorithm COMP will fail to reconstruct $\mM$ if there exist $i\in [r]$ such that $\Psi_i\in \mathcal{T}$ and $\psi_i=0$.
\end{lemma}

\begin{proof}
    Following the definition of $\mathcal{T}$ and the COMP algorithm. The algorithm will not be able to classify $\Psi_i$ to be definitely negative. Thus, this would lead to the algorithm outputting $\psi_i=1$, which will be incorrect. This leads to the COMP algorithm not be able to reconstruct the original matrix $\mM$.
\end{proof}
\begin{lemma}\label{equivalent T V}
    For any $\Psi_{\alpha}\in\Psi$, we have:
    \begin{align*}
        \Psi_{\alpha}^0\in\mathcal{T} \Longleftrightarrow\Psi^0_{\alpha}\notin\mathcal{V}. 
    \end{align*}
\end{lemma}

\begin{proof}
    This can be derived directly from the definition of an informative pair and the construction of $\Gamma$.
\end{proof}

\par Let us denote $A_i$ to be the event that $\Psi_i^0\notin\mathcal{V}$ for all $i\in[\bar{r}]$. Now by applying \cref{COMP fail} and \cref{equivalent T V}, we will have:
\begin{align}\label{P ineq}
    \Pr_{COMP}(succ)=1-\Pr(fail)=1-\Pr(\bigcup_{i\in[\bar{r}]}A_i)\geq 1-\displaystyle\sum_{i\in[\bar{r}]}\Pr(A_i)
\end{align}

\par Our last target would be to calculate the value of $\Pr(A_i)$. This is done through the following lemma.

\begin{lemma}\label{calc P(A)}
    We have:
    \begin{align*}
        \Pr(A_{\alpha})=\left[\dfrac{\binom{n-1}{d}+\binom{n-1}{d-1}-\binom{n-\pi-1}{d-1}}{\binom{n}{d}}\right]^s \quad \forall \alpha\in[\bar{r}]
    \end{align*}
\end{lemma}

\begin{proof}
    For a sample $\bX\in\operatorname{col}(\mathbf{X})$. since $\Psi_{\alpha}^0$ does not belong to $\mathcal{V}$, one of the following two cases must happen:\\
        \textbf{Case 1. }$\bX_{j_{\alpha}}=0$. Since $x$ contains exactly $d$ ones, the probability of this case happening is $\dfrac{\binom{n-1}{d}}{\binom{n}{d}}$.\\
    \textbf{Case 2. }$\bX_{j_{\alpha}}=1$ and there exist $t\in[n], t\neq j_{\alpha}$ such that:
    \begin{itemize}
        \item $\bX_t=1$
        \item $\mG_{i_{\alpha},t}\in\{1,\bar{1}\}$
    \end{itemize}Since $\bX$ contains exactly $k$ ones, the probability of this case happening is $\dfrac{\binom{n-1}{d-1}-\binom{n-\pi-1}{d-1}}{\binom{n}{d}}$.\\
    Additionally, since $|\operatorname{col}(\mathbf{X})|=s$, we have the final probability of $A_{\alpha}$ happening is:
    $$\Pr(A_{\alpha})=\left[\dfrac{\binom{n-1}{d}+\binom{n-1}{d-1}-\binom{n-\pi-1}{d-1}}{\binom{n}{d}}\right]^s$$
\end{proof}

Lastly, by combining \cref{P ineq} and \cref{calc P(A)}, we yield the bound for P(succ) of the COMP algorithm as follow:

\begin{theorem} 
    The success probability of the COMP algorithm can be bounded as:
    \begin{align}\label{special bound for COMP Psucc}
        \Pr_{COMP}(succ)\geq 1-\bar{r}\left[\dfrac{\binom{n-1}{d}+\binom{n-1}{d-1}-\binom{d-\pi-1}{d-1}}{\binom{n}{d}}\right]^s=1-\bar{r}\left[1-\dfrac{\binom{n-\pi-1}{d-1}}{\binom{n}{d}}\right]^s,
    \end{align}
    where $\bar{r}$ is the number of $\Psi_{\alpha}\in\Psi$ such that $\psi_{\alpha}=0$. Additionally, from the definition of $\sigma_i$ and $\bar{\varphi}_i$ from the previous section, we also have: $$\bar{r}=\displaystyle\sum_{i=1}^t[\sigma_i-\bar{\varphi}_i]$$
\end{theorem}

\par Furthermore, we can see that the bound given in \cref{special bound for COMP Psucc} is more numerically friendly, as in real-life settings, $t$ could be very large and would cause the product in \cref{bound fo COMP Psucc} to have numerical errors. We have plotted out the value for each success probability bound in Fig.~\ref{fig:Prob_Bound_algo_spec}.

\begin{figure}[!ht]
    \centering
    \includegraphics[width=0.9\linewidth]{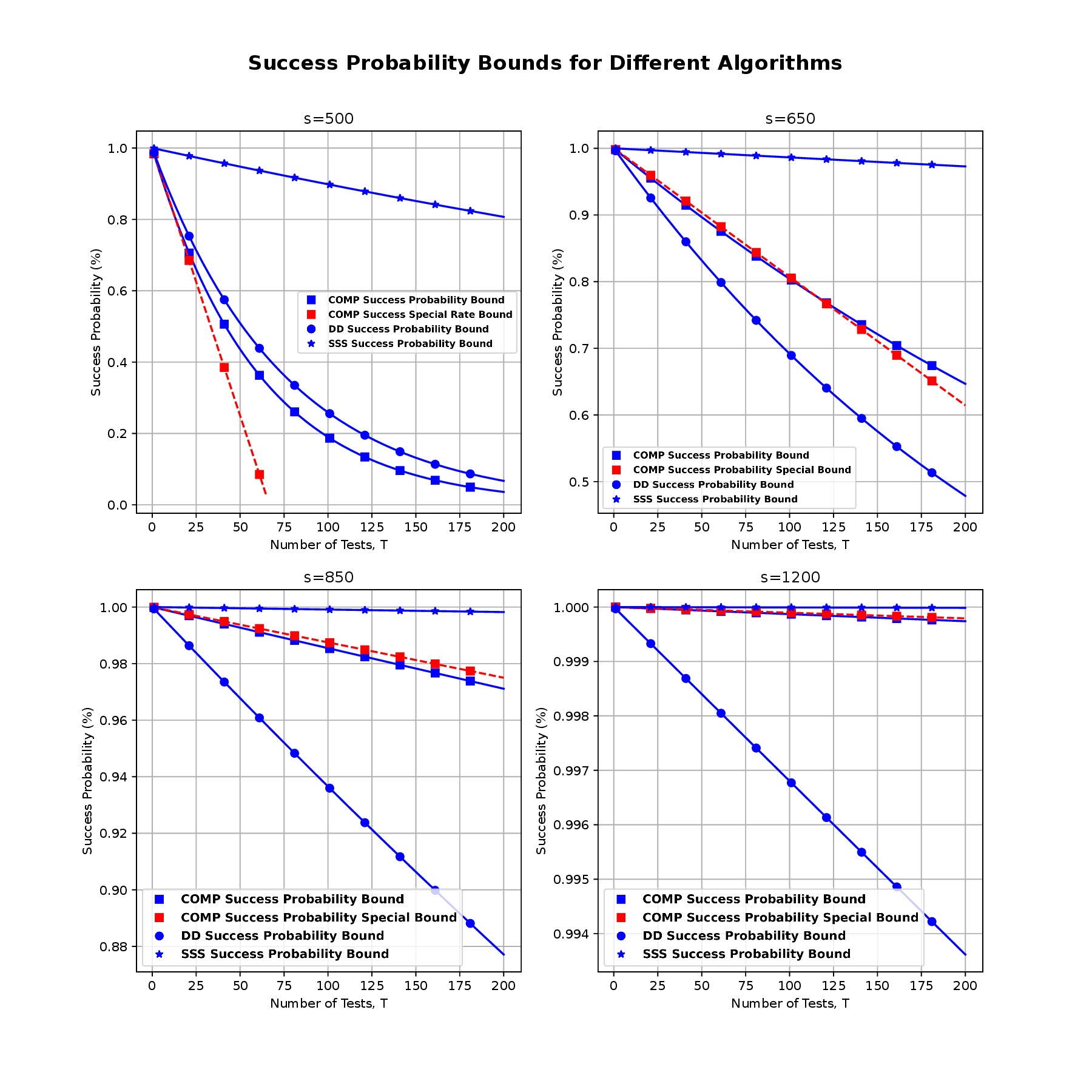}
    \caption{Visualization of the success probability bounds for the SSS, COMP, and DD algorithms in a group testing model with 300 items, 10 defectives, a missing entry rate of 0.05, and test inclusion probability 0.1. Varying the number of tests from 1 to 200 and evaluating four sample sizes (500, 650, 850, and 1200), we plotted how each algorithm’s chance of correct recovery scales with more tests. In the resulting figure, SSS is shown with star markers, COMP with squares, and DD with circles. Additionally, we also plot the special bound for the COMP algorithm introduced in Appendix \ref{sec: bet bound for COMP} in red.}
    \label{fig:Prob_Bound_algo_spec}
\end{figure}

\subsection{Feasible recovery}
\label{app:alg:feasible}

In this example, we show that it is feasible to recover all missing entries. Let us assume the measurement matrix $\mM$ and its missing matrix $\mG$ are as in~\cref{eqn:connectivity}. By Theorem~\ref{thm:construction}, the set $\psi=\{\psi_1,\psi_2,\psi_3,\psi_4,\psi_5\}$ is the solution to the group testing problem with the converted missing matrix being $\Gamma$ and the testing vector being $\bV$. Now by solving $\Gamma$ and $\bV$, we can recover the initial values of some of the missing entries of $\mM$: $\psi_{3}=0,\psi_{4}=0,\psi_{5}=1.$
\par However, we do not have enough information to recover $\psi_{1}, \psi_{2}$. To demonstrate that the larger the size of $\mathbf{X}$ and $\mathbf{Y}$, the greater the chance we will get at recovering the missing entries. Let us say another pair $(\bX_3,\bY_3)$ is added to $\operatorname{col}(\mathbf{X})\times\operatorname{col}( \mathbf{Y})$ where:
$$\begin{Bmatrix}
\bX_3=\left[0,1,1,0,0,0,0,0,0,0,0,0\right]^T & ; & \bY_3=[0,0,1,0,1,1,1,1,0]^T.
\end{Bmatrix}$$

Then we get $(\bX_3, 2)$ is informative. Hence, our converted missing matrix and converted missing vector will be modified as follows:
\begin{equation}
    \Gamma = \left[ \begin{array}{ccccccc}
    1 & 1 & 0 & 0 & 0 \\
    0 & 0 & 1 & 0 & 0 \\
    0 & 0 & 0 & 0 & 1 \\
    0 & 0 & 1 & 1 & 0 \\
    1 & 0 & 0 & 0 & 0 
    \end{array} \right], 
    \bV = \left[ \begin{array}{c}
    0\\
    0\\
    1\\
    0\\
    0
    \end{array} \right] 
\end{equation}
\par By solving $\bV = \Gamma \odot \psi$, we recover $\psi_{1}=0, \psi_{2}=1$. In summary, the missing values in $\overline{\Psi} = \{(2, 3), (2,4),(5,3),(5,6), (6, 4) \}$ are $0, 1, 0, 0, 1$, respectively.

\subsection{Infeasible recovery}
\label{app:alg:infeasible}
In this section, we highlight a broad class of initial test matrices \( M \) that become unrecoverable once certain entries are missing. A particularly common and problematic scenario arises when the matrix is \emph{dense}, meaning that each test includes a large number of items. Specifically, we derive the following claim
\begin{claim}\label{claim infe}
    Consider a scenario where the measurement matrix has $n$ items, $t$ test, and we know that there are $d$ defectives. Furthermore, assume that among the $t$ tests, there exist test \( t_0 \) in which more than \( n - d \) non-missing entries are 1. In such a case, the missing entries in \( t_0 \) cannot be uniquely determined, leading to our matrix completion problem being infeasible.
\end{claim}
Claim \ref{claim infe} being true is simply due to the pigeonhole principle, for any sample pair \( (\bX, \bY) \), there will always be at least one defective item included in \( t_0 \), resulting in a positive test outcome \( \bY_{t_0} = 1 \), regardless of the actual values of the missing entries. Consequently, these entries are unidentifiable. Within the context of our method, the presence of such a test implies that no matter how we sample \( \bX \), the pair \( (\bX, t_0) \) is not informative. As a result, the missing values in the matrix cannot be fully recovered via group testing, no matter how many columns are there in $\mX$.\\

We will show this in our following example. Consider the matrix $\mM$ defined in~\cref{eqn:connectivity} and its missing matrix as follows:
\begin{equation}
    \mG = \left[ \begin{array}{cccccccccccc}
    0 & \blacksquare & 0 & 1 & 1 & 1 & 1 & 1 & 1 & \blacksquare & 1 & 1 \\
    0 & 0 & 0 & 1 & 1 & 1 & 0 & 0 & 0 & 1 & 0 & 0 \\
    1 & 1 & 1 & 0 & 0 & 0 & 0 & 0 & 0 & 1 & 0 & 0 \\
    0 & 0 & 1 & 0 & 0 & 1 & 0 & 0 & 1 & 0 & 1 & 0 \\
    0 & 1 & 0 & 0 & 1 & 0 & 0 & 1 & 0 & 0 & 1 & 0 \\
    1 & 0 & 0 & 1 & 0 & 0 & 1 & 0 & 0 & 0 & 1 & 0 \\
    0 & 1 & 0 & 0 & 0 & 0 & 1 & 0 & 1 & 0 & 0 & 1 \\
    0 & 1 & 0 & 0 & 1 & 0 & 1 & 0 & 0 & 0 & 0 & 1 \\
    1 & 0 & 0 & 0 & 0 & 1 & 0 & 1 & 0 & 0 & 0 & 1
    \end{array} \right] .
\end{equation}

\par Suppose we know that the number of defectives is $d = 5$. Despite the fact that the matrix \( \mM \) has only two missing entries, it is impossible to recover them due to the overwhelming number of \( 1 \) entries observed in their corresponding tests. Indeed, consider a sample \(\bX \in \operatorname{col}(\mathbf{X})\). Since \(\bX\) must contain at least 5 ones, and the row in \( \mM \) that includes both missing entries has at most entries that are not ones, none of the possible choices of \(\bX\) can yield informative results when paired with \( (\bX, 1) \). Consequently, the converted missing matrix \( \Gamma \) remains empty regardless of the number of samples we collect. In this scenario, recovering \( \mM \) using only the current method is infeasible. Additional conditions or assumptions are required to achieve a solution.
\subsection{Traditional Algorithms for Group Testing} \label{subsec: Tra Al in GT}
In this section, we will briefly go through traditional algorithms for a group testing problem. These four algorithms are COMP, DD, SCOMP, and SSS \cite{aldridge2014group}. Throughout this section, we consider the not-defective set ND as follow:
$$\text{ND} := \left\{ i : \exists t \,  \text{ (a)} \, x_{it} = 1 \, \text{and} \, \text{ (b)} \, y_t = 0  \right\}
$$
and denote the possible defective set PD as the complement of the ND. We further denote $\mathbf{K}$ as the defective set and consider the definition of \textit{satisfying} if 
\begin{definition}
Given a test design $\mathbf{X}$ and outcomes $\mathbf{y}$, we shall call a set of items \( \mathcal{L} \subseteq \mathcal{N} \) a satisfying set if group testing with defective set \( \mathcal{L} \) and test design $\mathbf{X}$ would lead to the outcomes $\mathbf{y}$. Where $\mathcal{N}$ is the set of all item.
\end{definition}
\textbf{Combinatorial Orthogonal Matching Pursuit (COMP). }Introduced by \cite{chan2011non} in 2011, this algorithm treats the items in the set 
ND as non-defective, assuming all other items to be defective. In essence, COMP considers every item as possibly defective. Additionally, the COMP algorithm is prone only to false-positive errors (i.e., classifying non-defective items as defective) and does not make false-negative errors (i.e., it never classifies defective items as non-defective). This implies that we have: $\mathbf{K}\subseteq \mathbf{K}_{\text{COMP}}$. The pseudocode of COMP is shown in \cref{algo: COMP}.
\begin{algorithm}
\caption{The COMP algorithm}
\label{algo: COMP}
\begin{algorithmic}[1]
\Require The measurement matrix $\mX$, the item vector $\mathcal{N}$ and the outcome vector $\mathbf{y}$
\Ensure A vector $\textbf{K}_{COMP}$ representing the defective set.
\State $ND:=\{i:\exists t,\quad x_{it}=1, \quad y_t=0\}$
\State $\textbf{K}_{COMP}\leftarrow PD:=\mathcal{N}\backslash ND $
\end{algorithmic}
\end{algorithm}\\
\textbf{Definite defectives (DD). }After identifying possible defective (PD) items, some elements are confirmed as definitely defective (DD). The main idea is that if a test contains only one PD item, that item must be defective. This motivates the DD algorithm, which starts by using the possible PDs identified in the COMP algorithm. The DD algorithm follows three steps: (1) Define the possible defectives $PD = ND^C$; (2) For each test with a single PD item, mark that item as defective; (3) Declare all remaining items as non-defective. Formally, the DD algorithm defines every item in the set $$DD := \{i \in PD : x_{it} = 1, x_{jt} = 0 \, \forall j \in PD \setminus \{i\}, y_t = 1\}$$ as defective.

Steps 1 and 2 are mistake-free, isolating non-defective (ND) items that are ignored thereafter. However, Step 3 may err by misidentifying masked defectives as non-defective. While the DD algorithm never generates false positives, it can make false negatives, meaning $\mathbf{K}_{\text{DD}} \subseteq \mathbf{K}$. Step 3’s approach is based on the assumption that defectives are rare, allowing us to infer that items in $PD$ but not in $DD$ are non-defective unless proven otherwise. The pseudocode for the DD algorithm is shown in \cref{alg:DD}.

\begin{algorithm}
\caption{The DD algorithm}
\label{alg:DD}
\begin{algorithmic}[1]
\Require The measurement matrix $\mathbf{X}$, the item set $\mathcal{N}$ and the outcome vector $\mathbf{y}$
\Ensure A vector $\textbf{K}_{DD}$ representing the defective set.
\State $ND:=\{i:\exists t,\quad x_{it}=1, \quad y_t=0\}$
\State $PD:=\mathcal{N}\backslash ND$
\State $\textbf{K}_{DD}\leftarrow DD:=\{i\in PD:\exists t, \quad x_{it}=1, \quad x_{jt=0} \forall j\in PD\backslash \{i\}, \quad y_t=1\}$
\end{algorithmic}
\end{algorithm}
\textbf{Sequential COMP. }In the COMP algorithm, the key insight is that $\mathbf{K}_{DD}$ need not be a satisfying set, as some positive tests may contain no elements from $\mathbf{K}_{DD}$. We define a positive test as \textit{unexplained} by $\mathbf{K}$ if it contains no elements from $\mathbf{K}$. A set $\mathbf{K} \subseteq PD$ is a satisfying set if it explains all positive tests. The SCOMP algorithm uses this principle to greedily extend $\mathbf{K}_{DD}$ by adding items from $PD$ that explain the most unexplained tests. This iterative approach updates $\mathbf{K}$ each time an item is added.

The algorithm proceeds as follows: (1) First, apply the DD algorithm’s initial two steps to generate an initial estimate $\mathbf{K}_{DD} = DD$. (2) Given an estimate $\mathbf{K}$, if $\mathbf{K}$ is satisfying, terminate the algorithm and use $\mathbf{K}$ as the final estimate. If not, find the item $i \in PD$ that appears in the largest number of unexplained tests, add it to $\mathbf{K}$, and repeat. This greedy approach iteratively refines $\mathbf{K}$ and outperforms the DD algorithm on which it is based on, and performs very close to optimal. The pseudocode for the SCOMP algorithm is shown in \cref{alg:SCOMP}.

\begin{algorithm}
\caption{The SCOMP algorithm}
\label{alg:SCOMP}
\begin{algorithmic}[1]
\Require The measurement matrix $\mathbf{X}$, the item vector $\mathcal{N}$ and the outcome vector $\mathbf{y}$
\Ensure A vector $\textbf{K}_{SCOMP}$ representing the defective set.
\State $ND:=\{i:\exists t,\quad x_{it}=1, \quad y_t=0\}$
\State $PD:=\mathcal{N}\backslash ND$
\State $\textbf{K} \leftarrow DD:=\{i\in PD:\exists t, \quad x_{it}=1, \quad x_{jt=0} \forall j\in PD\backslash \{i\}, \quad y_t=1\}$
\While{\textbf{K} is not satisfying}
    \State i:=possible defective that appears in the largest number of test unexplained by \textbf{K}
    \State $\textbf{K}\leftarrow \textbf{K}\cup \{i\}$
\EndWhile
\State $\textbf{K}_{SCOMP}:=\textbf{K}$
\end{algorithmic}
\end{algorithm}
\textbf{Smallest Satisfying Set (SSS). }This is an optimal detection algorithm that focuses on the computational feasibility of detecting the true defective set \( \mathbf{K} \). We know that:

\begin{itemize}
    \item \( \mathbf{K} \) is a satisfying set, as we assume noiseless testing.
    \item \( \mathbf{K} \) is likely small since we are dealing with cases where \( |\mathbf{K}| \ll |\mathcal{N}| \).
\end{itemize}

To find the smallest set satisfying the outcomes, we formulate the following 0-1 linear program, akin to compressed sensing:

\begin{align}\label{eq: SSS las}
    \text{minimize} \quad & \mathbf{1}^\top \mathbf{z} \noindent \\
    \text{subject to} \quad & x_t \cdot \mathbf{z} = 0 \quad \text{for} \quad y_t = 0,\noindent \\
                             & x_t \cdot \mathbf{z} \geq 1 \quad \text{for} \quad y_t = 1, \noindent\\
                             & \mathbf{z} \in \{0, 1\}^N.
\end{align}

The smallest satisfying set is found by the SSS algorithm: 
\[
\mathbf{K}_{SSS} = \{ i : z_i = 1 \}.
\]

If the number of defectives \( s = |\mathbf{K}| \) is known, we add the constraint \( \mathbf{1}^\top z \geq s \) to ensure a satisfying set of size exactly \( s \). In this case, the algorithm finds an arbitrary satisfying set of the correct size, making it optimal under the assumption of noiseless testing. For the unknown \( s \) scenario, we consider the SSS algorithm as "essentially optimal". he SSS algorithm is based on the integer linear program used to solve the linear system in \cref{eq: SSS las}.

\bibliographystyle{ieeetr}
\balance
\bibliography{paper}

\end{document}